\documentclass{IEEEtran}

\usepackage{times}
\usepackage{amsmath}
\usepackage{amssymb}
\usepackage{amsthm}
\usepackage{color}
\usepackage{algorithm}
\usepackage[noend]{algorithmic}
\usepackage{graphicx}
\usepackage{subfig}
\usepackage{multirow}
\usepackage[bookmarks=false,colorlinks=false,pdfborder={0 0 0}]{hyperref}
\usepackage{cite}
\usepackage{bm}
\usepackage{arydshln}
\usepackage{mathtools}
\usepackage{microtype}

\newcommand{\sS}{\mathsf{S}}
\newcommand{\sT}{\mathsf{T}}
\newcommand{\sIn}{\mathsf{In}}
\newcommand{\sOut}{\mathsf{Out}}

\newtheorem{theorem}{Theorem}

\newtheorem{lemma}[theorem]{Lemma}

\long\def\symbolfootnote[#1]#2{\begingroup
\def\thefootnote{\fnsymbol{footnote}}\footnote[#1]{#2}\endgroup}

\title{Rack-Aware Regenerating Codes for Data Centers}

\author{Hanxu Hou, Patrick P. C. Lee$^{*}$, Kenneth W. Shum and Yuchong Hu
}

\begin{document}

\maketitle
\vspace{-0.5cm}
\begin{abstract}\symbolfootnote[0]{Hanxu Hou is with the School of Electrical Engineering \& Intelligentization, Dongguan University of Technology (E-mail: houhanxu@163.com). Patrick P. C. Lee is with the Department of Computer Science and Engineering, The Chinese University of Hong Kong (E-mail: pclee@cse.cuhk.edu.hk). Kenneth W. Shum is with the School of Science and Engineering, The
Chinese University of Hong Kong (Shenzhen). This work was done when he
was with Institute of Network Coding, The Chinese University of Hong
Kong. (E-mail: wkshum@inc.cuhk.edu.hk). Yuchong Hu is with the School of Computer Science and Technology, Huazhong University of
Science and Technology (E-mail: yuchonghu@hust.edu.cn). This work was partially supported by the
National Natural Science Foundation of China (No. 61701115,61872414,61502191) and Research Grants Council of Hong Kong (GRF 14216316 and CRF C7036-15G) and The Chinese University of Hong Kong - Shanghai Jiao Tong University
Joint Research Collaboration Fund (No. 4750358) and Fundamental Research Funds for the Central Universities (No. 2017KFYXJJ065,2016YXMS085), Alibaba Innovation Research.\\
* Corresponding author.}
Erasure coding is widely used for massive storage in data centers to
achieve high fault tolerance and low storage redundancy.  Since the cross-rack
communication cost is often high, it is critical to design erasure codes that
minimize the cross-rack repair bandwidth during failure repair.  In this
paper, we analyze the optimal trade-off between storage redundancy and
cross-rack repair bandwidth specifically for data centers, subject to the
condition that the original data can be reconstructed from a sufficient number
of any non-failed nodes.  We characterize the optimal trade-off curve under
functional repair, and propose a general family of erasure codes called
{\em rack-aware regenerating codes (RRC)}, which achieve the optimal trade-off.
We further propose exact repair constructions of RRC that have minimum storage
redundancy and minimum cross-rack repair bandwidth, respectively. We show that
(i) the minimum storage redundancy constructions support a wide range of
parameters and have cross-rack repair bandwidth that is strictly less than
that of the classical minimum storage regenerating codes in most cases, and
(ii) the minimum cross-rack repair bandwidth constructions support all the
parameters and have less cross-rack repair bandwidth than that of the minimum
bandwidth regenerating codes for almost all of the parameters.
\end{abstract}

\begin{IEEEkeywords}
Regenerating codes, data centers, cross-rack repair bandwidth, rack-aware regenerating codes.
\end{IEEEkeywords}

\IEEEpeerreviewmaketitle

\section{Introduction}

Modern storage systems are often deployed in the form of data centers, in
which data is distributed across a large number of storage nodes that are
grouped in different racks.  Examples include Google File System
\cite{ford2010} and Windows Azure Storage~\cite{huang2012}, and Facebook
storage \cite{sathiamoorthy2013}.  To provide high availability and durability
for data storage against node failures, erasure coding is now widely adopted
in modern storage systems to encode data with significantly higher fault
tolerance and lower storage redundancy in compare to traditional replication.
In particular, Reed-Solomon (RS) codes \cite{reed1960} are the most popular
erasure codes that are adopted in production (e.g., in Google~\cite{ford2010}).
An $(n,k)$ RS code encodes a data file of $k$
{\em symbols} (i.e., the units for erasure coding operations) to obtain $n$
symbols over some finite field, and distributes the $n$ symbols in $n$
different nodes (where $k < n$).  The data file can then be retrieved by a data collector by
connecting to any $k$ out of $n$ nodes via a \emph{reconstruction} process.  RS
codes have two important practical advantages: (i) they achieve minimum
storage redundancy while tolerating any $n-k$ node (or symbol) failures, and
(ii) they support arbitrary values of $n$ and $k$ ($< n$).

When a node fails, each lost symbol stored in the failed node needs to be
repaired in a new node to maintain the same level of fault tolerance.  The
conventional repair method, which is also used by RS codes, is to first
reconstruct the data file and then encode it again to form each lost symbol.
Thus, for an $(n,k)$ RS code, the total amount of data downloaded to repair a
lost symbol is $k$ symbols (i.e., $k$ times of the lost data).  This amplifies
both network bandwidth and I/O cost.


The concept of \emph{regenerating codes (RC)} is formulated by Dimakis
{\em et al.} \cite{dimakis2010} with the objective of minimizing the network
bandwidth during a repair operation.  RC encodes a data file into a multiple
of $n$ symbols and distributes them to $n$ nodes, each of which stores
multiple symbols, while the data file can still be reconstructed from a
sufficient number of nodes as in RS codes.  To repair the lost symbols of a
failed node in a new node, the new node retrieves {\em encoded} symbols from
each of a selected subset of non-failed nodes, where the encoded symbols are
derived from the stored symbols.  In general, the total amount of encoded
symbols retrieved from all non-failed nodes, also known as the {\em repair
bandwidth}, is much less than the original data file size.  Dimakis
{\em et al.} \cite{dimakis2010} also characterize the optimal trade-off
between repair bandwidth and storage redundancy.

In practical data centers, storage nodes are organized in racks, and the
cross-rack communication cost is typically much more expensive than the
intra-rack communication cost.  It is thus important for erasure codes to
specifically minimize the {\em cross-rack repair bandwidth} (i.e., the total
amount of symbols transferred across different racks during a repair process).
Unfortunately, RC does not address this constraint, and
generally cannot minimize the cross-rack repair bandwidth.  This motivates
a number of studies that specifically address the repair problem for data
centers (see Section~\ref{sec:related} for details).  In particular, Hu
{\em et al.} \cite{hu2016} propose double regenerating codes (DRC) to
minimize the cross-rack repair bandwidth by reconstructing partially
repaired symbols locally within each rack and combining the partially repaired
symbols across racks.  It is shown that DRC can achieve much less cross-rack
repair bandwidth than RC for some choices of code parameters.  However, DRC is built on
the condition that the minimum storage redundancy is achieved (as in RS
codes).  The optimal trade-off between storage redundancy and cross-rack
repair bandwidth, similar to the optimality analysis for RC
\cite{dimakis2010}, remains largely unexplored in the context of erasure-coded
data centers.


\subsection{Contributions}
In this paper, we consider a more general model of DRC \cite{hu2016}, in the
sense that the model supports a flexible storage size and a flexible number of
non-failed nodes that provide repaired data during a repair process.  To this
end, we propose a general family of erasure codes called {\em Rack-aware
Regenerating Codes (RRC)} for data centers.  The main contributions of this
paper are as follows.
\begin{itemize}
\item First, we derive the trade-off between storage and cross-rack repair
bandwidth of RRC. In the optimal trade-off curve, there exist two extreme points,
namely {\em minimum storage rack-aware regeneration (MSRR)} and {\em minimum
bandwidth rack-aware regeneration (MBRR)} points, which correspond to the
minimum storage and the minimum cross-rack repair bandwidth, respectively.
The trade-off curve of RRC can be reduced to the optimal trade-off curve of
RC if each rack has one node.
Let $r$ be the number of racks. When $kr/n$ is an integer,
the trade-off of MSRR codes is exactly the same as that of the minimum storage
regeneration (MSR) codes; when $kr/n$ is not an integer, the cross-rack repair
bandwidth of MSRR codes is strictly less than that of MSR codes. Also, we show
that the cross-rack repair bandwidth of our MBRR codes is strictly less than
that of the minimum bandwidth regeneration (MBR) codes for most of the parameters
(see Theorem~\ref{thm:cmpr} for details). For example, when $(n,k,r)=(12,8,4)$,
MSRR codes have 33.3\% reduction of cross-rack repair bandwidth compared to
MSR codes; for the same parameters, MBRR codes achieve 13.1\% and 28.9\%
reduction of cross-rack repair bandwidth and storage over MBR codes,
respectively (see Fig.~\ref{storage-bandwidth} for details).
Compared to the related work, the cross-rack repair bandwidth
of our RRC is less than or equal to that of the codes in \cite{prakash2017}
for all parameters, and less than that of the codes in \cite{sohn2016} for
most parameters (see Section~\ref{sec:com-repair} for details).
\item
Second, we present several constructions for MSRR codes with exact repair, which support a much wider range of parameters than those in \cite{hu2016,prakash2017,sohn2016}. We also present an exact-repair construction for the MBRR codes, which support all the parameters, again an improvement compared to \cite{hu2016,prakash2017} (see Section~\ref{sec:parameter} for
details).
Note that the exact-repair construction of the codes in
\cite{sohn2016} is given in \cite{sohn2018a,sohn2018}. For example, when
$n=12$ and $r=4$, our MSRR code construction can support $k=4,5,\ldots,11$,
while the constructions in \cite{hu2016} and \cite{prakash2017} can only
support $k=9$ and $k=6,9$, respectively. Note that the exact-repair
construction of the minimum storage codes in \cite{sohn2016} is given in the
later work \cite{sohn2018a}, and it can only support $r=2$ and $n=2k$.
\end{itemize}


\subsection{Paper Organization}

The rest of the paper is organized as follows.
Section~\ref{sec:related} reviews the related work.
Section~\ref{sec:system_model} introduces the system model.
Section~\ref{sec:tradeoff} shows the optimal trade-off between storage and
cross-rack repair bandwidth.
Section~\ref{sec:cons} gives the exact-repair constructions for MSRR codes.
Section~\ref{sec:MBRRcons} gives the exact-repair construction of MBRR codes
for all parameters.
Section~\ref{sec:comp} presents evaluation results for our RRC and the related
codes.
Section~\ref{sec:conclu} concludes the paper.

\section{Related Works}
\label{sec:related}

There are many follow-up studies on RC along different directions, such as
practical implementation \cite{chen2014,rashmi2015,pamies2016,li2017} and the
repair problem with heterogeneous structures
\cite{pernas2013,ernvall2013,tebbi2014,hu2016,sohn2016,hu2017,prakash2017}.

Flexible RC \cite{shah2010a} is designed for heterogeneous storage systems that
can achieve the lower bound of repair bandwidth.
Combined with a tree-structured regeneration topology, it is shown that RC can
further save the network bandwidth \cite{li2010,wang2014}. Some studies
\cite{akhlaghi2010,ernvall2013} focus on the capacity bound for a
heterogeneous model.  However, all the above studies do not distinguish the
costs between intra-rack and cross-rack communications in data centers.

\begin{table*}
\caption{Comparison with related work.}
\begin{center}
\vspace{-9pt}
\begin{tabular}{|c|c|c|c|c|}
\hline
 &$(n,k)$ recovery    &  Cross-rack repair bandwidth & Supported parameters & Supported parameters \tabularnewline
 &property & & of minimum storage & of minimum bandwidth\\
\hline
\hline
RC & holds &   $\geq$ RRC, equality holds for MSR  & all parameters \cite{goparaju2017} & all parameters  \tabularnewline
 &  &    only when $kr/n$ is an integer   &  &   \tabularnewline
\hline
DRC \cite{hu2016,hu2017} & holds & $=$ MSRR & $\frac{n}{n-k}$ is an integer or $r=3$ & n/a \tabularnewline
 \hline
Sohn {\em et al.} \cite{sohn2016} & holds &  $\geq$ RRC, equality holds & $n=2k,r=2$ \cite{sohn2018a}  & all parameters \cite{sohn2018}  \tabularnewline
 &  &   when $kr/n$ is an integer &  &   \tabularnewline
\hline
Prakash {\em et al.} \cite{prakash2017} & does not hold &  $=$ RRC for $kr/n$ is an integer & $kr/n$ is an integer & $kr/n$ is an integer \tabularnewline
\hline
\emph{RRC} (this paper) & holds & $\leq$ \cite{sohn2016,prakash2017}, equality holds & most parameters & all parameters \tabularnewline
 &  & when $kr/n$ is an integer &   &   \tabularnewline
\hline
\end{tabular}
\end{center}
\label{tb:cmp}
\end{table*}

Some previous studies distinguish the costs between cross-rack and intra-rack
communications, yet their system models and analysis are fundamentally
different from ours.  Table~\ref{tb:cmp} compares our RRC with several
closely related work for erasure-coded data centers.
DRC \cite{hu2016,hu2017} considers
the same model of this paper and achieves the trade-off between storage and
cross-rack repair bandwidth under the minimum storage condition. DRC can be
viewed as a special case of our MSRR codes with all other racks being
contacted to repair a failed node.
Sohn {\em et al.} \cite{sohn2016} consider a different repair
model and give the optimal trade-off between storage and repair bandwidth
(including cross-rack repair bandwidth and intra-rack repair bandwidth).  In
their repair process, there is no information encoding between two nodes in
the same rack, while in our model, the symbols downloaded from other racks are
the combinations of all the symbols in the rack (as in DRC
\cite{hu2016,hu2017}).  Also, to repair a failed node, the new node in
\cite{sohn2016} needs to connect to all the other racks, while the number of
racks connected to repair a failed node is more flexible in our paper. We can
show that the cross-rack repair bandwidth of RRC is less than that of the
codes in \cite{sohn2016} for most parameters.
Later, Sohn {\em et al.} \cite{sohn2018a,sohn2018} present
exact-repair constructions for the minimum storage point and the minimum
bandwidth point of the codes in \cite{sohn2016}.

The closest related work to ours is by Prakash {\em et al.}
\cite{prakash2017}.  In their model, a file needs to be retrieved from a
certain number of racks, and hence $k$ must be a multiple of the number of
nodes in each rack.  On the other hand, our model allows a file to be
retrieved from any $k$ nodes.  Therefore, our RRC can tolerate more failure
patterns than the codes in \cite{prakash2017}. We show that the trade-off
curve of RRC coincides with the optimal trade-off curve in \cite{prakash2017}
when $k$ is a multiple of the number of nodes in each rack, yet
our exact-repair constructions for MSRR codes and MBRR codes can support much
more parameters than that of the minimum storage codes and the minimum
bandwidth codes in \cite{prakash2017}, respectively (see
Section~\ref{sec:parameter} for details). More importantly, the cross-rack
repair bandwidth of our MSRR codes with additional parameters is strictly less
than that of MSRR codes with the nearest $k$ that is a multiple of the number
of nodes in each rack (see the remark in Section~\ref{sec:tradeoff}).
In other words, the minimum storage codes and the minimum
bandwidth codes in \cite{prakash2017} only support the parameters when $k$ is
a multiple of the number of nodes in each rack, while our MSRR codes and MBRR
codes do not have this restriction on $k$.  Note that when $k$ is a multiple
of the number of nodes in each rack, $k+1$ will not be a multiple of number of
nodes in each rack. We can show that the cross-rack repair bandwidth of MSRR
(resp.  MBRR) codes with $k+1$ data nodes is strictly less than that of MSRR
(resp. MBRR) codes with $k$ data nodes.

There are other studies that specifically address the deployment of erasure
coding in rack-based data centers.  Some studies \cite{gaston2013,pernas2013}
consider the trade-off with two racks.  Tebbi {\em et al.}~\cite{tebbi2014}
design locally repairable codes for multi-rack storage systems.  Shen {\em et
al.} \cite{shen2016} present a rack-aware recovery algorithm that is
specifically designed for RS codes.  In this paper, we conduct formal analysis
and formulate a general model that gives the optimal trade-off between storage
and cross-rack repair bandwidth.

A similar methodology in the two-layer coding for data centers can be found in \cite{gerami2017}. The first layer encodes the data file by an
$(n,k)$ MDS code and distributes to $n$ nodes, while the second layer creates
the symbols stored in each node by employing an MDS code with the code rate
$\delta$.  If the proportion of the failed symbols among the symbols stored in
a node is no larger than $1-\delta$ (i.e., a partial node failure), then the
failed symbols can be recovered by the node locally. Otherwise, there is a
trade-off between storage and repair bandwidth. The main difference between
the work in \cite{gerami2017} and our work is that we distinguish the
intra-rack and cross-rack communications and consider the repair of a failed
node in a rack-based storage system, while the authors in \cite{gerami2017}
consider partial node failures.

\section{System Model}
\label{sec:system_model}

\begin{figure}[t]
\centering
\includegraphics[width=0.5\textwidth]{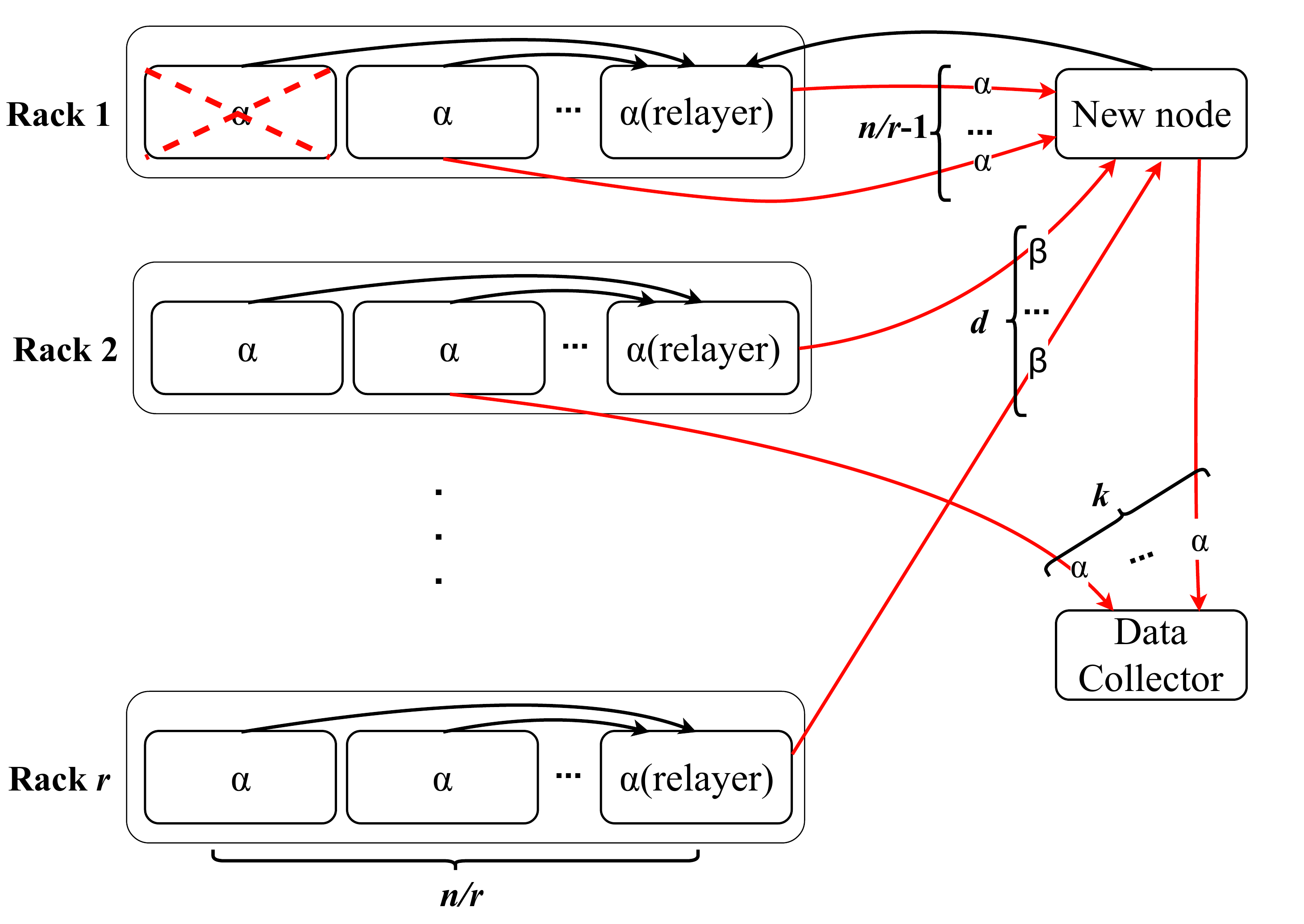}
\vspace{-9pt}
\caption{A failed node can be repaired by downloading all the other symbols in
the host rack and $\beta$ symbols each from $d$ other arbitrary racks. The
data file can be reconstructed by a data collector by downloading $k\alpha$
symbols from any $k$ nodes.}
\label{system-model}
\end{figure}

We consider a data center consisting of $n$ nodes that are equally divided
into $r$ racks, with $n/r$ nodes in each rack (see Fig.~\ref{system-model}).
We assume throughout this paper that  $n$ is a multiple of $r$, and label the
nodes from $1$ to $n$. For $h=1,2,\ldots, r$ and $i=1,2,\ldots, n/r$, we
denote the $i$-th node in rack $h$  by $X_{h,i}$. We fix an alphabet of size
$q$. A {\em data file} is regarded as a sequence of $B$ symbols. A data file
is encoded into $n\alpha$ symbols and stored in $n$ nodes. Each node stores
$\alpha$ symbols.

In each rack, we select a distinguished node called the \emph{relayer node}
for each data file, such that the relayer node can obtain the content stored
in the other nodes in the same rack.
We assume that the intra-rack bandwidth is abundant, so that
the transmissions among the nodes within a rack incurs negligible cost. If a
storage node fails, we replace it by a new node and put it in the same rack.
The new node arbitrarily picks $d$ other racks, where $d < r$, and
connects to the corresponding relayers. We call the relayers or racks that
participate in the repair process to be {\em helpers}, and the parameter $d$
to be the {\em repair degree}.
Based on the $\alpha n/r$ symbols stored in the host rack, each of the
contacted relayers sends $\beta$ symbols to the new node. The cross-rack repair
bandwidth is $\gamma=d\beta$. The content of the new node is then regenerated
from the received $d\beta$ symbols and the $(n/r-1)\alpha$ symbols stored in
the host rack.
Note that the relayer can be any arbitrary surviving node
selected from a rack, and different data files can be associated with different
relayers during a repair operation.  We can view a node failure as a partial
failure of a rack.  We can repair a failed node by downloading $\beta$ symbols
from each of any other $d$ racks, and $(n/r-1)\alpha$ symbols from the other
$n/r-1$ nodes in the same rack. By relabeling the storage node, we assume
that $X_{h,1}$ is the relayer in rack $h$, for $h=1,2,\ldots, r$, without loss
of generality.



\begin{table}[t]
\vspace{-0.2cm}
\caption{Main notation used in this paper.}
\vspace{-9pt}
\begin{center}
\begin{tabular}{|c|c|}
\hline
Notation & Description \tabularnewline
\hline
\hline
$n$ & number of nodes \tabularnewline
\hline
$r$ & number of racks \tabularnewline
\hline
$B$ & number of symbols in a data file \tabularnewline
\hline
$\alpha$ & number of symbols stored in each node \tabularnewline
\hline
$n/r$ & number of nodes in each rack \tabularnewline
\hline
$d$ & repair degree \\
\hline
$\beta$ & number of symbols downloaded from a relayer \\
\hline
$\gamma=d\beta$ & cross-rack repair bandwidth \tabularnewline
\hline
$G(n,k,r,d,\alpha,\beta)$ & information flow graph \tabularnewline
\hline
$\alpha^{*}(\beta)$ & minimum $\alpha$ for a given $\beta$ \tabularnewline
\hline
Defined in Section~\ref{sec:tradeoff} & \\
\hline
$m$ & the value of $\lfloor kr/n \rfloor$ \tabularnewline
\hline
$t$ & the value of $k \bmod (n/r)$ \tabularnewline
\hline
Defined in Section~\ref{sec:cons} & \\
\hline
$\mathbf{s}$ & $B$ data symbols \tabularnewline
\hline
$\mathbf{Q}_{i,h}$ & encoding matrix of node $i$ in rack $h$ \tabularnewline
\hline
$\mathbf{G}_{i}$ & encoding matrix of rack $i$ \tabularnewline
\hline
$\mathbf{c}^{T}_{f,h}$ & local encoding column of rack $h$ \\
 & to repair a node in rack $f$\\
\hline
$0_{\alpha}$ & $\alpha\times\alpha$ zero matrix \tabularnewline
\hline
$I_{\alpha}$ & $\alpha\times\alpha$ identity matrix \tabularnewline
\hline
$\mathbf{P}_i$ & the $B\times m$ left-most sub-matrix of $\mathbf{G}_i$\\
\hline
$\mathbf{R}_i$ & the $B\times (n-k-m)$ right-most\\
 &  sub-matrix of $\mathbf{G}_i$\\
\hline
$\mathbf{v}_1,\ldots,\mathbf{v}_{m}$ & $m$ orthogonal row vectors of length $m$\\
\hline
$\mathbf{u}_1,\ldots,\mathbf{u}_{r-m}$ & $m$ row vectors of length $n-k$\\
\hline
$\mathbf{G}[(i_1,i_2),(j_1,j_2)]$ & sub-matrix of $\mathbf{G}$ consisting from \\
  & rows $i_1$ to $i_2$ and columns $j_1$ to $j_2$\\
\hline
$\mathbf{E}_i$ & a $(n-k)\times m$ random matrix over $\mathbb{F}_q$\\
\hline
$\lambda_{i,j}$ & a non-zero element of $\mathbb{F}_q$\\
\hline
$\mathbf{y}_1,\ldots,\mathbf{y}_{m}$ & $m$ orthogonal row vectors of length $\alpha n/r-m$\\
\hline
$\mathbf{x}_2,\ldots,\mathbf{x}_{m}$ & $\alpha-1$ orthogonal row vectors of length $\alpha n/r$\\
\hline
\end{tabular}
\end{center}
\label{tb:notat}
\vspace{-0.3cm}
\end{table}

We want to maintain the property that any $k$ nodes suffice to decode
the data file. We call this the $(n,k)$ {\em recovery property}.
When a data collector connects to a relayer node, it is equivalent to
connecting to all the $n/r$ nodes in the rack. Without loss of generality, we can
make the assumption that if a data collector connects to a relayer, it also
connects to all of the other nodes in the same rack.
We consider two versions of repair
in this paper: exact repair and functional repair. In
{\em exact repair}, the symbols stored in the failed node are the same as
those in the new node. In {\em functional repair}, the new node may contain
symbols different from those in the failed node, as long as the $(n,k)$ recovery
property is preserved.  An encoding scheme that satisfies all of the above
requirement with parameters $n$, $k$, $r$, $d$, $\alpha$ and $\beta$ is called
a {\em rack-based storage system} $RSS(n,k,r,d,\alpha,\beta)$.
Table~\ref{tb:notat} summarizes the main notation used in this paper.

\section{Optimal Trade-off Between Storage and Cross-rack Repair Bandwidth}
\label{sec:tradeoff}

We represent the storage system described in the previous section by an {\em
information flow graph}, which was proposed in \cite{dimakis2010} for deriving
optimal trade-off of RC. In order to differentiate from the system diagram
in Fig.~\ref{system-model}, we will use the term ``vertex'', instead of
``node'', for the information flow graph.


Given the system parameters $n$, $k$, $r$, $d$, $\alpha$, and $\beta$, an
information flow graph is a directed acyclic graph (DAG) constructed according
to the following rules. There is a vertex $\sS$ that represents the data file,
and a vertex $\sT$ that represents the data collector. For $h=1,2,\ldots, r$
and $i=1,2,\ldots, n/r$, the $i$-th node in rack $h$ is represented by a pair
of vertices $\sIn_{h,i}$ and $\sOut_{h,i}$. We draw an edge from $\sIn_{h,i}$
to $\sOut_{h,i}$ with capacity $\alpha$.  To each in-vertex $\sIn_{h,i}$,
we draw an edge from $\sS$ to $\sIn_{h,i}$ with infinite capacity. This
represents the encoding process as the content in each storage node is a
function of all the symbols, and the capacity of each node is limited to
$\alpha$.  For each $h=1,2,\ldots, r$ and $i=2,3,\ldots, n/r$, we draw an edge
with infinite capacity from $\sOut_{h,i}$ to $\sOut_{h,1}$. This indicates
that $X_{h,1}$ is the relayer, and $X_{h,1}$ can access everything stored in
$X_{h,i}$.

Suppose that the $f$-th node in rack $h$ fails, for some $h\in\{1,2,\ldots, r\}$
and $f\in\{1,2,\ldots, n/r\}$.  We put $n/r$ pairs of vertices, say
$\sIn_{h,j}'$ and $\sOut_{h,j}'$ in the information flow graph. For
$j\in\{1,2,\ldots, n/r\}\setminus\{f\}$, we draw an edge with infinite
capacity from $\sOut_{h,j}$ to $\sIn_{h,j}'$, and an edge with infinite
capacity from $\sIn_{h,j}'$ to $\sOut_{h,j}'$. This means that the content of
node $j$ does not change after the repair. For vertex $\sIn_{h,f}'$, which
represents the new node, we draw an edge from $\sOut_{h,j}$ to $\sIn_{h,f}'$
with infinite capacity, indicating that it can access all the symbols stored
in the other nodes in the same rack. Suppose that the new node makes $d$
connections to the relayers in rack $h_1, h_2,\ldots, h_d$, where $h_1,\ldots,
h_d$ are distinct indices that are not equal to~$h$. There is an edge with
capacity $\beta$ in the information flow graph from
$\sOut_{h_\ell,1}$ to $\sIn_{h,f}'$, for $\ell=1,2,\ldots, d$.
Thus, $\sIn_{h,f}'$ has $(n/r-1)+d$ incoming edges, in which $d$ of them
have capacity $\beta$ and $n/r-1$ of them have infinite capacity.
The new node stores $\alpha$ symbols eventually, and we represent
this by drawing an edge from $\sIn_{h,f}'$ to $\sOut_{h,f}'$ with
capacity $\alpha$. We also have an edge with infinite capacity
from $\sOut_{h,j}'$ to $\sOut_{h,1}'$ for $j=2,3,\ldots, n/r$.

\begin{figure}
\centering
\includegraphics[width=0.9\linewidth]{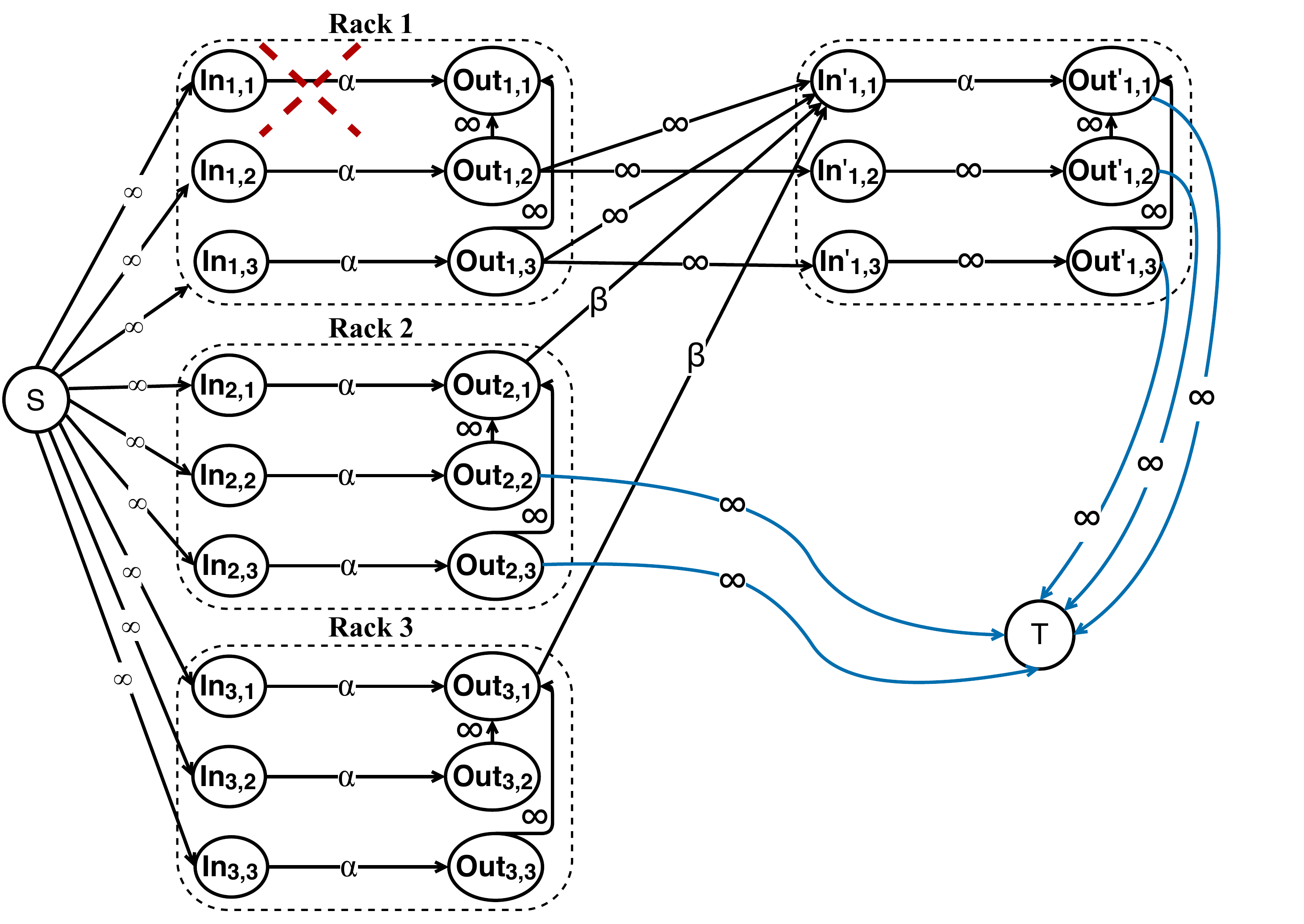}
\caption{Information flow graph of $(n,k,r,d)=(9,5,3,2)$.}
\label{model-example}
\vspace{-6pt}
\end{figure}


The storage system may undergo a series of node failures and repairs. We
repeat the above procedure accordingly.  Finally, we draw $k$ edges from $k$
out-vertices to  $\sT$. We keep the convention that if $\sT$ is connected to
the vertex $\sOut_{h,1}$ corresponding to the relayer in rack $h$, $\sT$ is
also connected to all the vertices $\sOut_{h,2},\ldots, \sOut_{h,n/r}$ in rack
$h$.

Any DAG that can be obtained as described above is referred
to as an information flow graph, and is denoted by $G(n,k,r,d,\alpha, \beta)$.
Fig.~\ref{model-example} shows an example of $(n,k,r,d)=(9,5,3,2)$.

Given an information flow graph $G$, we regard the unique vertex $\sS$ as the
source vertex and the unique vertex $\sT$ as the terminal vertex, and consider
the maximum flow from $\sS$ to $\sT$. We define an $(\sS,\sT)$-{\em cut} as a
subset of the edges in $G$ such that $\sS$ and $\sT$ are disconnected after
the edges in this subset are removed from~$G$. The {\em capacity} of an
$(\sS,\sT)$-cut is defined as the sum of the capacity of the edges in the cut.
Let $\text{mincut}(G)$
denote the smallest capacity of an $(\sS,\sT)$-cut in a given information flow  graph $G$, and
$\min_G \text{mincut}(G)$
with the minimum value taken over all possible information flow graphs $G$. By
the max-flow bound in network coding theory \cite[Theorem 18.3]{yeung08}, the
supported file size $B$ cannot exceed $\min_G \text{mincut}(G)$. The next
theorem determines $\min_G \text{mincut}(G)$, and hence gives an upper bound
on the file size.  Throughout the paper, we will use the notation $$m:=\lfloor
\frac{kr}{n}\rfloor.$$




\begin{theorem}
Given the parameters $n$, $k$, $r$, $d\geq m$, $\alpha$, $\beta$ and $B$, if there is an $RSS(n,k,r,d,\alpha,\beta)$ with file size $B$, then
\begin{equation}
k\alpha+\sum_{\ell=1}^{m}\min\{(d-\ell+1)\beta-\alpha,0\}\geq B.
\label{eq1}
\end{equation}
\label{thm1}
\vspace{-0.3cm}
\end{theorem}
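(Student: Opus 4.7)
The plan is to establish the bound via the max-flow/min-cut theorem: I will exhibit a particular information flow graph $G^{\star}$ together with an explicit $(\sS,\sT)$-cut whose capacity equals the right-hand side of \eqref{eq1}. Since $B \le \min_G \text{mincut}(G) \le \text{mincut}(G^{\star})$, the theorem will follow.

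Write $k = m(n/r) + t$ with $m = \lfloor kr/n \rfloor$ and $0 \le t < n/r$. I would construct $G^{\star}$ by performing $m$ sequential single-node repairs: for $\ell = 1, 2, \ldots, m$, fail the relayer of rack $\ell$ (so $f_\ell = 1$) and repair it, choosing the $d$ helper racks to be the previously repaired racks $\{1,\ldots,\ell-1\}$ together with $d - \ell + 1$ further racks drawn from $\{m+1,\ldots,r\}$; the hypotheses $d \ge m$ and $d < r$ make this schedule feasible for every $\ell \le m$. Finally, I would attach $\sT$ to $k$ out-vertices by connecting it to the current relayer out-vertex $\sOut_{\ell,1}'$ of each of racks $1,\ldots,m$ (which by the stated convention pulls in the remaining $n/r-1$ current out-vertices of those racks) and, if $t>0$, to $t$ non-relayer out-vertices in rack $m+1$, giving $m(n/r) + t = k$ connections in total.

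Next I would define the cut. Put $\sS$ and every in-vertex fed by an infinite-capacity edge from $\sS$ on the source side $U$. The terminal side must contain $\sT$ and every out-vertex that $\sT$ reaches; propagating forced placements along the infinite-capacity chain $\sOut_{\ell,j} \to \sIn_{\ell,j}' \to \sOut_{\ell,j}'$ for $j \ge 2$ forces the old non-relayer out-vertices $\sOut_{\ell,j}$ and the new non-relayer in-vertices $\sIn_{\ell,j}'$ of each rack $\ell \le m$ onto the terminal side as well. The key observation is that, because I chose $f_\ell = 1$, the old relayer out-vertex $\sOut_{\ell,1}$ has no outgoing edges at all: the non-failed propagation edges $\sOut_{h,j} \to \sIn_{h,j}'$ and the intra-rack helper edges $\sOut_{h,j} \to \sIn_{h,f}'$ are drawn only for $j \ne f$, while the intra-rack edges $\sOut_{h,j} \to \sOut_{h,1}$ point into $\sOut_{\ell,1}$. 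I therefore place $\sOut_{\ell,1}$ in $U$ and pay nothing for its storage edge. For the new in-vertex $\sIn_{\ell,1}'$ of the repaired relayer I pick the side that minimises its contribution: placing it in $U$ costs $\alpha$ (the storage edge), while placing it in the terminal side costs $(d - \ell + 1)\beta$, because the only cut edges incident on it are then the cross-rack incoming edges from the untouched helper racks in $\{m+1,\ldots,r\}$, of capacity $\beta$ each (the $\ell-1$ helper edges from the previously repaired racks originate at their new relayer out-vertices $\sOut_{h',1}'$, already on the terminal side, and so contribute nothing). The minimum is $\min\{\alpha,(d-\ell+1)\beta\}$.

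Summing contributions rack by rack, each rack $\ell \le m$ contributes $(n/r - 1)\alpha + \min\{\alpha,(d-\ell+1)\beta\}$, rack $m+1$ contributes $t\alpha$, and the remaining racks contribute $0$, giving
\[
\text{capacity} = m(n/r-1)\alpha + t\alpha + \sum_{\ell=1}^{m} \min\{\alpha,(d-\ell+1)\beta\} = (k-m)\alpha + \sum_{\ell=1}^{m} \min\{\alpha,(d-\ell+1)\beta\},
\]
which, using $\min\{\alpha, x\} = \alpha + \min\{x - \alpha, 0\}$, coincides with the right-hand side of \eqref{eq1}. The main obstacle is not the arithmetic but the careful bookkeeping of vertex placements so that no infinite-capacity edge is ever cut; in particular, the tight cut hinges on failing the relayer (rather than a non-relayer) in each of the first $m$ racks, for only then is the old relayer out-vertex isolated and free to sit in $U$ at zero cost.
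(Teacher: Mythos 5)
Your strategy is the same one the paper uses: fail the relayers of the first $m$ racks in sequence, attach $\sT$ to all of racks $1,\dots,m$ plus $t$ non-relayer nodes of rack $m+1$, and exhibit a cut of capacity $(k-m)\alpha+\sum_{\ell=1}^{m}\min\{\alpha,(d-\ell+1)\beta\}$, which equals the right-hand side of \eqref{eq1}. Your cut accounting (old relayer out-vertices on the source side at zero cost, $\min\{\alpha,(d-\ell+1)\beta\}$ for each new relayer, $(n/r-1)\alpha$ per data rack, $t\alpha$ for the hybrid rack) is correct, and for the inequality as stated only this one graph is needed; the paper's additional argument that \emph{every} information flow graph has min-cut at least this value serves to show the bound is the exact capacity, not to prove \eqref{eq1}.

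There is, however, one concrete error: your helper schedule is infeasible for a large range of admissible parameters, and the claim that ``the hypotheses $d\ge m$ and $d<r$ make this schedule feasible for every $\ell\le m$'' is false. For $\ell=1$ you require $d$ helper racks drawn from $\{m+1,\dots,r\}$, i.e. $d\le r-m$, whereas the hypotheses only give $m\le d\le r-1$; the schedule already fails whenever $m\ge 2$ and $d=r-1$, e.g. the paper's own running example $(n,k,r,d)=(12,8,4,3)$, where $m=2$ but $|\{3,4\}|=2<3=d$. The fix is to draw the $d-\ell+1$ remaining helpers for rack $\ell$ from $\{\ell+1,\dots,r\}$, which is always possible since $d\le r-1$ and the $\ell-1$ repaired racks are already included. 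A not-yet-repaired rack $h\in\{\ell+1,\dots,m\}$ then helps through its \emph{original} relayer out-vertex $\sOut_{h,1}$, which you have placed on the source side, so the corresponding $\beta$-edge is cut exactly as for an untouched rack and the count $(d-\ell+1)\beta$ is unchanged. Note that the corrected schedule also invalidates, verbatim, your observation that $\sOut_{\ell,1}$ ``has no outgoing edges at all'': it may now have outgoing $\beta$-edges to the new in-vertices of earlier racks, but it can still sit on the source side, with those edges already accounted for in the earlier racks' $(d-\ell'+1)\beta$ terms.
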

The proof of Theorem~\ref{thm1} is given in Appendix A.



If an encoding scheme for $RSS(n,k,r,d,\alpha,\beta)$ with the equality in~\eqref{eq1} holds, we call it a {\em rack-aware regenerating code}
$RRC(n,k,r,d,\alpha,\beta)$. The value on the left-hand side of the inequality
in~\eqref{eq1} is called the {\em capacity} of $RRC(n,k,r,d,\alpha,\beta)$.
When $r=n$, we note that the trade-off curve of RRC in~\eqref{eq1} reduces to
the optimal trade-off curve of RC \cite{dimakis2010}.

\textbf{Remark.} If $kr/n$ is an integer (i.e., $m=kr/n$),
then the upper bound given in~\eqref{eq1} is the same upper bound obtained
from \cite{prakash2017} (see (1) in \cite{prakash2017}).  Note that the repair
scenarios of our work and \cite{prakash2017} are the same, yet our model can
tolerate more failure patterns than the model in \cite{prakash2017}. If $kr/n$ is not an integer, our upper bound is tighter than that of
the bound given in \cite{prakash2017}.

We now characterize the achievable trade-offs between the storage $\alpha$ and
the cross-rack repair bandwidth $\gamma=d\beta$ for given $(n,k,r,d)$. Given
$\beta$, $\alpha^{*}(\beta)$ is defined to be the smallest $\alpha$ such that
the equality in~\eqref{eq1} holds if such a solution exists, and is set to be
infinity otherwise.  The following theorem shows the optimal trade-off.

\begin{theorem}
\label{thm:opt-tradeoff}
Given the parameters $n$, $k$, $r$, $d$ and $B$, we let
\[
g(i)=i\frac{2d-2m+i+1}{2d},
\]
\[
f(i)=\frac{2B}{2k(d-m+1)+i(2k-i-1)},
\]
where $i=0,1,\ldots,m-1$. If $\beta$ ranges from $f(m-1)$ to infinity, then the minimum storage $\alpha^{*}(\beta)$ is as follows,
\begin{equation}
\alpha^{*}(\beta)=\left\{
\begin{array}{ll}
\frac{B}{k}, &\text{ } \beta \in [f(0),+\infty)\\
\frac{B-g(\ell)d\beta}{k-\ell},& \text{ } \beta\in[f(\ell),f(\ell-1)),
\end{array}
\right.
\label{eq:min-storage1}
\end{equation}
for $\ell=1,2,\ldots,m-1$, and
\begin{equation}
\alpha^{*}(\beta)= \frac{Bd}{(k-m)d+m(d-\frac{m-1}{2})},
\label{eq:min-storage2}
\end{equation}
 for $\beta=
f(m-1)$.
\end{theorem}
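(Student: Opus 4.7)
The plan is to read Theorem~\ref{thm1} as a pointwise constraint on $\alpha$ for each fixed $\beta$ and to invert it. Fix $\beta > 0$ and define
\[
F(\alpha) := k\alpha + \sum_{\ell=1}^{m}\min\{(d-\ell+1)\beta - \alpha, 0\}.
\]
Each summand is piecewise linear in $\alpha$, transitioning at the threshold $\tau_\ell := (d - \ell + 1)\beta$. Since $\tau_1 > \tau_2 > \cdots > \tau_m$, the function $F$ is continuous and piecewise linear on $[0, +\infty)$; its slope equals $k$ on $[0, \tau_m)$ and decreases by $1$ every time $\alpha$ crosses one of the thresholds. Because $m \leq k$, the slope stays nonnegative, so $F$ is nondecreasing and $\alpha^*(\beta)$ is characterized by $F(\alpha^*(\beta)) = B$.

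Next, I would explicitly compute $F$ on each piece. On the interval $\alpha \in [\tau_{m-i+1}, \tau_{m-i})$ (with the conventions $\tau_{m+1} := 0$ and $\tau_0 := +\infty$), exactly the indices $\{m-i+1, \ldots, m\}$ contribute their negative branch, so
\[
F(\alpha) = (k-i)\alpha + \beta \sum_{\ell = m-i+1}^{m}(d - \ell + 1) = (k-i)\alpha + d\beta\, g(i),
\]
using $\sum_{\ell=m-i+1}^{m}(d-\ell+1) = i(2d-2m+i+1)/2 = d\, g(i)$. Solving $F(\alpha) = B$ within this piece yields the candidate
\[
\alpha^{(i)} = \frac{B - d\beta\, g(i)}{k - i}, \qquad i = 0, 1, \ldots, m-1,
\]
which is the closed form in~\eqref{eq:min-storage1}.

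The remaining step is to match piece $i$ to a range of $\beta$, i.e.\ to determine when $\alpha^{(i)}$ actually lies in $[\tau_{m-i+1}, \tau_{m-i})$, or equivalently $(d - m + i)\beta \leq \alpha^{(i)} < (d - m + i + 1)\beta$ (with the lower bound absent for $i = 0$). Substituting $\alpha^{(i)}$ into the upper threshold inequality and clearing denominators gives $\beta > B / [(k-i)(d-m+i+1) + d\, g(i)]$. A direct expansion reduces the denominator to $k(d - m + 1) + i(2k - i - 1)/2$, which is precisely half the denominator of $f(i)$, so the upper inequality becomes $\beta \geq f(i)$. The same computation applied to the lower threshold yields $\beta < f(i-1)$, with the convention $f(-1) := +\infty$. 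Collecting these intervals over $i = 0, 1, \ldots, m - 1$ reproduces the piecewise description in~\eqref{eq:min-storage1}. At the MBR endpoint $\beta = f(m - 1)$, the formulas for pieces $m-1$ and $m$ both yield $\alpha = \tau_1 = d\beta$; substituting $\beta = f(m - 1)$ into $d\beta$ and simplifying gives~\eqref{eq:min-storage2}. The main obstacle is purely algebraic bookkeeping: three indexing schemes (the summation index $\ell$, the count $i$ of saturated terms, and the shifted index used in $f$) must be reconciled, and one must verify by direct expansion that the denominators arising from substituting $\alpha^{(i)}$ into the two threshold inequalities match $f(i)$ and $f(i-1)$ exactly rather than a shifted form.
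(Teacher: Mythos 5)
Your proposal is correct and follows essentially the same route as the paper's Appendix B proof: treat the capacity from Theorem~\ref{thm1} as a piecewise linear, nondecreasing function of $\alpha$ with breakpoints at $(d-\ell+1)\beta$, solve the linear equation on each piece, and translate the validity interval for each piece into the $\beta$-ranges $[f(\ell),f(\ell-1))$. The algebraic identities you flag as the main bookkeeping (the sum $\sum(d-\ell+1)=d\,g(i)$ and the denominator matching $f(i)$) are exactly the computations carried out at the end of the paper's proof, so nothing further is needed.
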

\begin{proof}
See Appendix B.
\end{proof}

%


There are two extreme points on the optimal trade-off curve that correspond
to the minimum storage and the minimum cross-rack repair bandwidth.
The two extreme points are called {\em minimum storage rack-aware regenerating
(MSRR)} codes and {\em minimum bandwidth rack-aware regenerating (MBRR)}
codes, respectively.  The MSRR point can be derived by first minimizing
$\alpha$ and then $\beta$, while the MBRR point can be derived by first
minimizing $\beta$ and then $\alpha$.

From~\eqref{eq:min-storage1} and~\eqref{eq:min-storage2}, the MSRR point can be achieved when
\begin{equation}
(\alpha_{\mathrm{MSRR}},\gamma_{\mathrm{MSRR}})=(\frac{B}{k},\frac{Bd}{k(d-m+1)}),
\label{eq:mscr}
\end{equation}
and the MBRR point is achieved by
\begin{equation}
\begin{array}{c}
\alpha_{\mathrm{MBRR}}=\gamma_{\mathrm{MBRR}}=\frac{Bd}{(k-m)d+m(d-\frac{m-1}{2})}.
\end{array}
\label{eq5}
\end{equation}

Observe that when $r=n$,
MSRR codes are reduced to minimum storage regenerating (MSR) codes, and MBRR
codes are reduced to minimum bandwidth regenerating (MBR) codes. Note that in
MBRR codes, the cross-rack repair bandwidth $\gamma$ is equal to the
storage $\alpha$, i.e., the amount of data downloaded from other racks has the
same size as the failure data.
In MBRR codes, we have $B=kd-m(m-1)/2$ and $\alpha=\gamma=d$
according to~\eqref{eq5}. If $2kd-m(m-1)>nd$, then the code rate (i.e.,
$\frac{B}{n\alpha}$) of MBRR codes is
\[
\frac{kd-m(m-1)/2}{nd}>0.5.
\]

Therefore, we may construct MBRR codes with high code rates, while the code
rates of all MBR codes are no larger than $0.5$. Given $r$ and $n$, the values
of $k$ for which the code rates of MBRR codes are larger than $0.5$ are
summarized in Table~\ref{tb:mbrr} for $d=r-1$. For all the evaluated
parameters in Table~\ref{tb:mbrr}, the MBRR codes have high code rates when
$k/n>0.5$.

\begin{table}
\caption{Some parameters of $r,n,k$ for which the MBRR codes have high code
rates.}
\begin{center}
\vspace{-9pt}
\begin{tabular}{|c|c|c|c|c|c|}
\hline
$(r,n)$ &$(4,8)$    &  $(4,12)$ &  $(4,16)$ & $(5,10)$ & $(5,15)$ \tabularnewline
\hline
 $k$ & $5-7$ & $7-11$  & $9-15$   & $6-9$ & $8-14$  \tabularnewline
\hline
\hline
$(r,n)$  & $(5,20)$ & $(6,12)$ & $(6,18)$ & $(6,24)$ & $(6,30)$\tabularnewline
\hline
 $k$ & $11-19$ & $7-11$ & $10-17$ & $13-23$ & $16-29$ \tabularnewline
\hline
\end{tabular}
\end{center}
\label{tb:mbrr}
\end{table}

\textbf{Remark.} From~\eqref{eq:mscr} and~\eqref{eq5}, we see that the cross-rack repair bandwidth of MSRR codes and MBRR codes decreases along with the increase of $k$, given the same parameters $B,d,m$. If $kr/n$ is an integer, we have $kr/n=\lfloor
(k+i)r/n\rfloor$ for $i=1,2,\ldots,n/r-1$, then the cross-rack repair
bandwidth of MSRR$(n,k+i,r)$ (resp. MBRR$(n,k+i,r)$) codes is strictly less
than that of MSRR$(n,k,r)$ (resp. MBRR$(n,k,r)$) codes. Note that the
cross-rack repair bandwidth of MSRR (resp. MBRR) codes is equal to that of the
minimum storage (resp. bandwidth) codes in \cite{prakash2017} when $kr/n$
is an integer.
Therefore, the construction of MSRR (resp. MBRR) codes when $kr/n$ is not an
integer is necessary and important, as the codes have less cross-rack repair
bandwidth.  We will give the exact-repair constructions for MSRR codes and
MBRR codes in Section~\ref{sec:cons} and Section~\ref{sec:MBRRcons},
respectively.


If we directly employ an RC$(n,k,d')$ in rack-based storage, then we can obtain the trade-off curve between the storage
$\alpha$ and cross-rack repair bandwidth $\gamma'$ of RC$(n,k,d')$ as
the following theorem by Theorem 1 in \cite{dimakis2010}.
\begin{theorem}
If we directly employ an RC$(n,k,d')$ in a rack-based storage system,
i.e., we repair a failed node by downloading $\beta'$ symbols from each of the $d'=dn/r+n/r-1$
helper nodes (including $n/r-1$ nodes in the host rack and $dn/r$ other nodes), then the trade-off
between the smallest storage $\alpha'^{*}(\gamma')$ and cross-rack repair bandwidth $\gamma'$ is
\begin{equation}
\alpha'^{*}(\gamma')=\left\{
\begin{array}{ll}
\frac{B}{k}, &\text{ } \gamma' \in [f'(0),+\infty)\\
\frac{B-g'(i)f'(i)}{k-i},&\text{ } \gamma'\in[\frac{dn}{d'r}f'(i),\frac{dn}{d'r}f'(i-1)),
\end{array}
\right.
\label{eq:rc-min-storage}
\end{equation}
for $i=1,2,\ldots,k-1$, where
\[
g'(i)=i\frac{2d'-2k+i+1}{2d'},
\]
\[
f'(i)=\frac{2Bd'}{2k(d'-k+1)+i(2k-i-1)}.
\]
\label{thm:rc}
\end{theorem}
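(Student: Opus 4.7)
The plan is to reduce Theorem~\ref{thm:rc} to a direct invocation of Theorem~1 of \cite{dimakis2010}, combined with a linear rescaling that separates cross-rack bandwidth from total repair bandwidth. First I would apply the Dimakis trade-off to a standard regenerating code $\textrm{RC}(n,k,d')$ with $d'=dn/r+n/r-1$ helpers and per-helper download $\beta'$. This yields the familiar piecewise-linear description of the minimum storage $\alpha^{*}$ as a function of the total repair bandwidth $\gamma_{\textrm{tot}}=d'\beta'$, with breakpoints at $\gamma_{\textrm{tot}}=f'(i)$ and linear pieces of the form $(B-g'(i)\gamma_{\textrm{tot}})/(k-i)$ on $[f'(i),f'(i-1))$, using exactly the functions $g'$ and $f'$ of the statement.

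Second, I would read off the split of the $d'$ helper connections in the rack-aware model. Of the $d'$ helpers of a failed node, $n/r-1$ lie in the host rack (contributing only intra-rack bandwidth) and $dn/r$ lie in the other $d$ racks (contributing all of the cross-rack bandwidth). Since every helper in a standard RC sends the same $\beta'$ symbols, the cross-rack bandwidth is $\gamma'=(dn/r)\beta'$, whence $\gamma_{\textrm{tot}}=(d'r/(dn))\,\gamma'$. Substituting this linear rescaling into the Dimakis formula rescales each breakpoint from $\gamma_{\textrm{tot}}=f'(i)$ to $\gamma'=(dn/(d'r))f'(i)$, giving exactly the intervals appearing in~\eqref{eq:rc-min-storage}. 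Evaluating the Dimakis linear piece $(B-g'(i)\gamma_{\textrm{tot}})/(k-i)$ at the corresponding endpoint $\gamma_{\textrm{tot}}=f'(i)$ produces the value $(B-g'(i)f'(i))/(k-i)$ recorded in the theorem; the top piece $\alpha'^{*}=B/k$ for $\gamma'\in[f'(0),+\infty)$ corresponds to the standard MSR regime of Dimakis.

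There is no substantive obstacle here: the cut arguments have already been carried out by Dimakis, and the rack-aware reformulation contributes only a linear change of variables between $\beta'$ (equivalently $\gamma_{\textrm{tot}}$) and $\gamma'$. The only point to verify with care is that in a standard RC every helper contributes the same per-helper amount $\beta'$, so the split into $n/r-1$ intra-rack helpers and $dn/r$ cross-rack helpers gives the constant proportionality $\gamma'/\gamma_{\textrm{tot}}=dn/(d'r)$. Once this ratio is fixed, the piecewise formula of~\eqref{eq:rc-min-storage} is obtained by direct substitution, and the proof reduces to routine algebraic bookkeeping.
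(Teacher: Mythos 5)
Your proposal is correct and takes essentially the same route as the paper, which states Theorem~\ref{thm:rc} as a direct consequence of Theorem~1 of \cite{dimakis2010} without further argument; you simply make explicit the linear change of variables $\gamma'=\tfrac{dn}{d'r}\,\gamma_{\mathrm{tot}}$ coming from the fact that $dn/r$ of the $d'$ uniform helpers are cross-rack. The only point worth flagging is that the constant value $\frac{B-g'(i)f'(i)}{k-i}$ on each interval is, as you note, the Dimakis linear piece evaluated at the left breakpoint, so your derivation recovers the theorem exactly as stated.
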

By Theorem~\ref{thm:rc}, the cross-rack repair bandwidths of RC at MSR point
and MBR point are
\[
\gamma'_{\mathrm{MSR}}=\frac{Bdn/r}{k(dn/r+n/r-k)},
\]
and
\[
\gamma'_{\mathrm{MBR}}=\frac{2Bdn/r}{k(2dn/r+2n/r-k-1)},
\]
respectively. The next theorem shows that MSRR codes (resp. MBRR codes) have less cross-rack repair bandwidth than MSR codes (resp. MBR codes) for most of the parameters.
\begin{theorem}
Let $d'=dn/r+n/r-1$. If $kr/n$ is an integer, then MSR$(n,k,d')$ codes have the same cross-rack
repair bandwidth as MSRR$(n,k,d)$ codes. If $kr/n$ is not an integer, then MSRR$(n,k,d)$ codes have less cross-rack
repair bandwidth than MSR$(n,k,d')$ codes. If $kr/n$ is an integer and $k/n>2/r$,
then MBRR$(n,k,d)$ codes have less cross-rack repair bandwidth than MBR$(n,k,d')$ codes.
\label{thm:cmpr}
\end{theorem}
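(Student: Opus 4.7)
The plan is to substitute the closed-form expressions of $\gamma_{\mathrm{MSRR}}$, $\gamma'_{\mathrm{MSR}}$, $\gamma_{\mathrm{MBRR}}$ and $\gamma'_{\mathrm{MBR}}$ given just before the theorem and in~\eqref{eq:mscr} and~\eqref{eq5}, and reduce each inequality to a statement about the single integer parameter $m=\lfloor kr/n\rfloor$ by setting $s:=n/r$ and $d'=ds+s-1$. All three claims then become elementary arithmetic comparisons.

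For the MSR--MSRR comparison I would form the ratio
\[
\frac{\gamma_{\mathrm{MSRR}}}{\gamma'_{\mathrm{MSR}}}
= \frac{d}{k(d-m+1)}\cdot\frac{k(ds+s-k)}{ds}
= \frac{d+1-k/s}{d-m+1}.
\]
When $kr/n$ is an integer, $k/s=m$ and the ratio equals $1$, giving the first claim. When $kr/n$ is not an integer, $k/s>m$ and the numerator is strictly smaller than the denominator, so the ratio is strictly less than $1$; this gives the second claim. (I would also check that $d-m+1>0$ using $d\ge m$, which holds throughout the optimal trade-off.)

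For the MBR--MBRR comparison I would use that $kr/n$ is an integer, so $k=ms$. Substituting into~\eqref{eq5} rewrites
\[
\gamma_{\mathrm{MBRR}}=\frac{2Bd}{m\bigl(2ds-m+1\bigr)},
\qquad
\gamma'_{\mathrm{MBR}}=\frac{2Bd}{m\bigl(2ds+2s-ms-1\bigr)}.
\]
Thus $\gamma_{\mathrm{MBRR}}<\gamma'_{\mathrm{MBR}}$ is equivalent to
$2ds-m+1>2ds+2s-ms-1$, which simplifies to $(m-2)(s-1)>0$. Since $s=n/r\ge 2$ in any non-degenerate rack model (one node per rack is the classical RC regime), this reduces to $m>2$, i.e.\ $kr/n>2$, i.e.\ $k/n>2/r$, exactly the hypothesis of the third claim.

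I do not expect a serious obstacle: the entire argument is a direct algebraic comparison of the four formulas already derived. The only care needed is in handling the boundary conditions (ensuring $d-m+1>0$, distinguishing the integer and non-integer cases of $kr/n$, and isolating the clean condition $m>2$ from the MBR/MBRR inequality). Once these are in place, all three statements follow from one line of algebra each.
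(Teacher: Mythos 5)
Your proposal is correct and follows essentially the same route as the paper: substitute the closed-form expressions for the four bandwidths and compare by elementary algebra, with the integer/non-integer dichotomy on $kr/n=k/(n/r)$ driving the MSR case and the condition $kr/n>2$ emerging from the MBR case. Your factorization of the MBR--MBRR condition as $(m-2)(s-1)>0$ is a slightly cleaner packaging of the same computation the paper carries out as a difference of fractions, but it is not a different argument.
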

\begin{proof}
Recall that $m=\lfloor
\frac{kr}{n}\rfloor$. When $kr/n$ is an integer, we have $kr/n=m$ and
\begin{align*}
\gamma'_{\mathrm{MSR}}=\frac{Bdn/r}{k(dn/r+n/r-k)}
=\frac{Bd}{k(d+1-kr/n)},
\end{align*}
which is equal to the cross-rack repair bandwidth in~\eqref{eq:mscr} of
MSRR codes, as $kr/n=m$.
If $kr/n$ is not an integer, we have $kr/n>m$ and we thus obtain that
the cross-rack repair bandwidth of MSRR codes is less than that of MSR codes.
Recall that the cross-rack repair bandwidth of MBRR codes is $\gamma_{\mathrm{MBRR}}$ in~\eqref{eq5}.
If $kr/n$ in an integer, we have $m=kr/n$.
MBRR codes have less cross-rack repair bandwidth than MBR codes, if and only if
\begin{align*}
&\gamma'_{\mathrm{MBR}}-\gamma_{\mathrm{MBRR}}\\
=&\frac{2Bdn/r}{k(2dn/r+2n/r-k-1)}-\frac{Bd}{(k-m)d+m(d-\frac{m-1}{2})}\\
=&\frac{2Bdn}{k(2dn+2n-kr-r)}-\frac{2Bd}{2kd-m(m-1)}\\
=&\frac{2Bd(2nkd-nm(m-1)-k(2dn+2n-kr-r))}{k(2dn+2n-kr-r)(2kd-m(m-1))}\\
=&\frac{2Bd(k^2r+kr-2kn-m^2n+mn)}{k(2dn+2n-kr-r)(2kd-m(m-1))}\\
=&\frac{2Bd(k^2r+kr-2kn-\frac{k^2r^2}{n}+kr)}{k(2dn+2n-kr-r)(2kd-m(m-1))} \text{ by } m=\frac{kr}{n}\\
=&\frac{2Bd(k^2r+2kr-2kn-\frac{k^2r^2}{n})}{k(2dn+2n-kr-r)(2kd-m(m-1))}\\
=&\frac{2Bd(\frac{1}{n}-\frac{2}{kr})}{k(2dn+2n-kr-r)(2kd-m(m-1))}k^2r(n-r)>0.
\end{align*}
Therefore, we can obtain that MBRR codes have less cross-rack repair bandwidth than MBR codes, if and only if $k/n>2/r$. That is to say, if the code rate is not too low,
the cross-rack repair bandwidth of MBRR codes is strictly less than that of MBR codes.
\end{proof}

\begin{figure}
\centering
\vspace{-2em}
\includegraphics[width=0.9\linewidth]{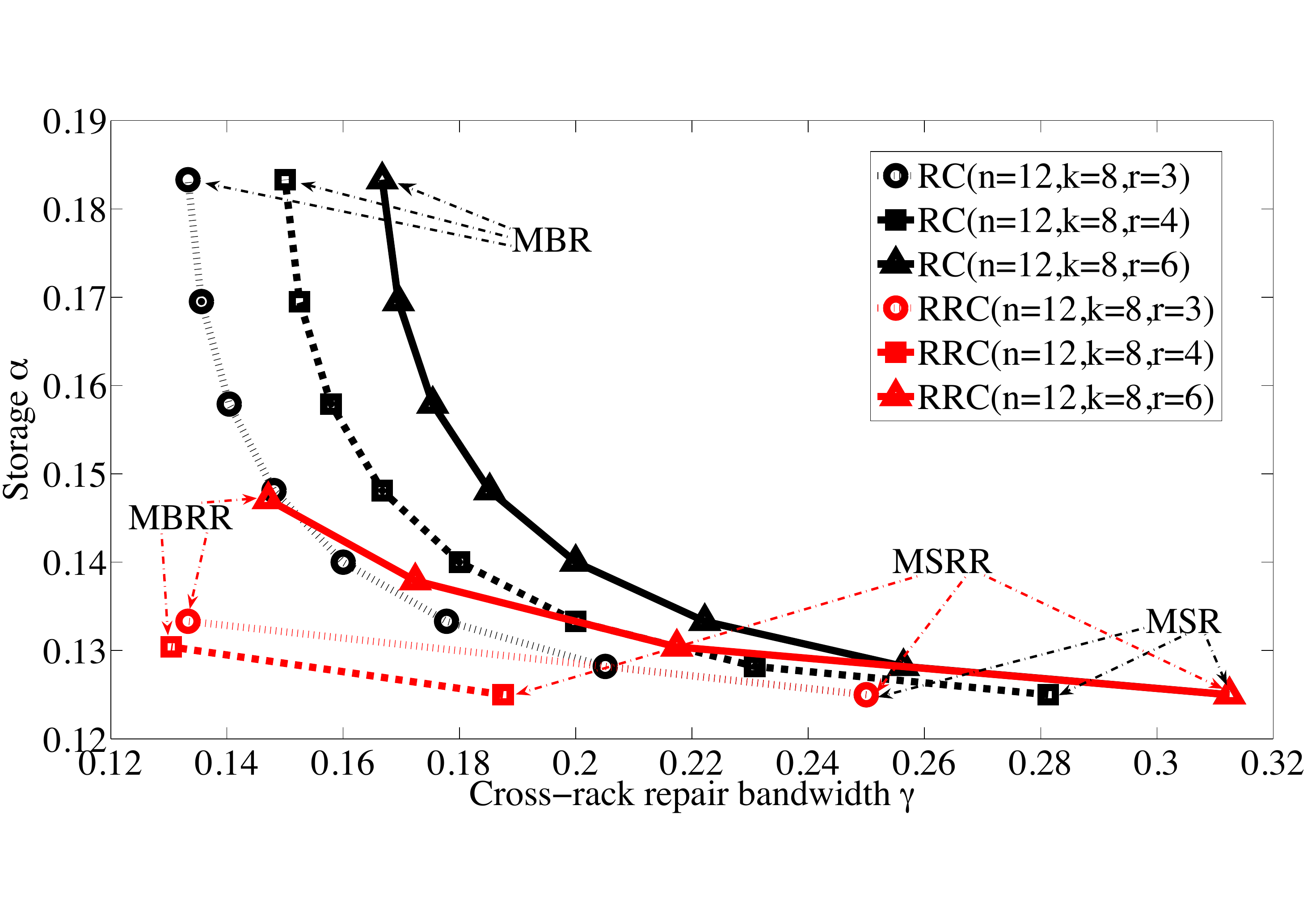}
\caption{Optimal trade-off curve between storage and cross-rack repair bandwidth for RRC and RC when $n=12,k=8$ and $r=3,4,6$ and $d=r-1$. When $(n,k,r)=(12,8,4)$, the cross-rack repair bandwidth of MSRR codes and MSR codes is $\gamma_{\mathrm{MSRR}}=0.1875$ and $\gamma_{\mathrm{MSR}}=0.2813$, respectively; the storage and cross-rack repair bandwidth of MBRR codes and MBR codes are $(\alpha_{\mathrm{MBRR}},\gamma_{\mathrm{MBRR}})=(0.1304,0.1304)$ and $(\alpha_{\mathrm{MBR}},\gamma_{\mathrm{MBR}})=(0.1833,0.1500)$, respectively.}
\label{storage-bandwidth}
\vspace{-0.5em}
\end{figure}

For $B=1$, $n=12$, $k=8$, $r=3,4,6$ and $d=r-1$, the trade-off curves of RRC and RC when
$d'=dn/r+n/r-1$ are shown in Fig.~\ref{storage-bandwidth}.

We have several observations. First, the cross-rack repair bandwidth of RC
increases as $r$ increases under the same storage. Second, unlike RC, the
cross-rack repair bandwidth of RRC when $r=4$ is less than that of $r=3$ under
the same storage. In general, if $kr/n$ is an integer, then the cross-rack
repair bandwidth of RRC increases as $r$ increases, as in RC. However, the
cross-rack repair bandwidth of RRC$(n,k,r')$ is strictly less than that of
RRC$(n,k,r)$ when $kr'/n$ is not an integer and $kr/n=\lfloor kr'/n \rfloor$.
Third, MSRR codes have less cross-rack repair bandwidth than MSR codes
for the same parameters except two points when $r=3,6$. In fact, only
when $kr/n$ is an integer, the cross-rack repair bandwidths of MSRR and MSR
codes are the same according to Theorem~\ref{thm:cmpr}. Finally, MBRR codes
have less cross-rack repair bandwidth than MBR codes for $(n,k,r)=(12,8,4)$
and $(n,k,r)=(12,8,6)$, and have the same cross-rack repair bandwidth as MBR
codes for $(n,k,r)=(12,8,3)$.  By Theorem~\ref{thm:cmpr}, if $kr/n$ is an
integer and $k/n>2/r$, then MBRR codes have less cross-rack repair bandwidth
than MBR codes. Therefore, the results of $(n,k,r)=(12,8,6)$ in
Fig.~\ref{storage-bandwidth} fit well with Theorem~\ref{thm:cmpr}. While
$kr/n$ is not an integer for $(n,k,r)=(12,8,4)$, the cross-rack repair
bandwidth of MBRR codes is less than that of MBR codes. It is interesting to note that
the storage of MBRR codes is strictly less than that of MBR codes for all the
evaluated parameters.
In the rest of the paper, we will focus on the exact-repair constructions of
MSRR codes and MBRR codes.

\section{Exact-Repair Constructions for MSRR Codes}
\label{sec:cons}

This section presents \emph{systematic} constructions of exact repair MSRR
codes. Systematic codes are codes such that the $k\alpha$ uncoded symbols are
stored in $k$ nodes. Suppose that the first $k$ nodes are \emph{data nodes}
that store the uncoded symbols and the last $n-k$ nodes are \emph{coded nodes}
that store the coded symbols.
In Section~\ref{sec:alpha1}, the construction is for
$\alpha=1$ and any $(n,k)$. The construction in Section~\ref{sec:alpha1}
has optimal cross-rack repair bandwidth for any data node and any coded node. In
Section~\ref{msrr-non-integer}, the construction is for
$\alpha n/r\geq m+\alpha t$.  Note that the construction in
Section~\ref{msrr-non-integer} has optimal cross-rack repair bandwidth only
for any data node.
If $kr$ is a multiple of $n$, then $k$ data nodes are replaced in the first
$m$ racks that are called \emph{data racks}, and $n-k$ coded nodes are placed
in the last $r-m$ racks that are called \emph{coded racks}. If $kr$ is not a
multiple of $n$, then the first $m$ racks are data racks, the last $r-m-1$
racks are coded racks, and rack $m+1$ is called \emph{hybrid rack} that
contains $$t:=k \mod (n/r)$$ data nodes and $n/r-t$ coded nodes. MSRR codes
with $kr/n$ being an integer are called \emph{homogeneous MSRR codes}, while
MSRR codes with $kr/n$ being a non-integer are called \emph{hybrid MSRR codes}.

We assume $\beta=1$, and we can extend the construction to $\beta\neq 1$
easily, as in the construction of MSR codes. When $\beta=1$, we have
$$\alpha=d-m+1,B=k(d-m+1).$$
By Theorem~\ref{thm:cmpr}, MSR codes have the same
cross-rack repair bandwidth as homogeneous MSRR codes, i.e., all the existing constructions
of MSR codes can be directly applied to MSRR codes but with much less
intra-rack repair bandwidth, when $kr/n$ is an integer. As the existing construction of MSR codes can
support all the parameters \cite{goparaju2017}, it is not necessary to
exploit the construction of homogeneous MSRR codes. Therefore, we will focus
on the construction of hybrid MSRR codes in the rest of the section.
We first present the construction of MSRR codes for $\alpha=1$.
Then, we give a construction of hybrid MSRR codes for $\alpha n/r
\geq m+\alpha t$ and $\alpha>1$. Our constructions employ \emph{interference
alignment}, which is similar to the concept of \emph{common eigenvector}
that has been used in the construction of exact repair MSR
codes \cite{shah2010,suh2010}.
We first introduce some notation used in this
section before giving the constructions.

\subsection{Notation}

A data file is represented by $B$ \emph{data symbols}
$\mathbf{s}=[s_1,s_2,\ldots,s_B]$ in finite field $\mathbb{F}_q$. Let $\mathbf{s}\mathbf{Q}_{i,h}$ be the
$\alpha$ coded symbols stored in node $i$ and rack $h$ for $i=1,2,\ldots,n/r$
and $h=1,2,\ldots,r$, where $\mathbf{Q}_{i,h}$
is the $B\times \alpha$ \emph{encoding matrix}. The encoding matrix
$\mathbf{G}_h$ of rack $h$ is defined as,
\[\mathbf{G}_h=\begin{bmatrix}\mathbf{Q}_{1,h}& \mathbf{Q}_{2,h} & \cdots & \mathbf{Q}_{n/r,h}\end{bmatrix}.\]
When a node in rack $f$ fails, the new node accesses all $(n/r-1)\alpha$
symbols
in rack $f$, and downloads a coded symbol
from racks $\{h_1,h_2,\ldots,h_d\}\subset\{1,2,\ldots,r\}\setminus\{f\}$ with
\emph{local encoding vector} being $\mathbf{c}_{f,h_i}^T$ for
$i=1,2,\ldots,d$, where $\mathbf{c}_{f,h_i}$ is a row vector
with length $\alpha n/r$. Denote $\mathbf{G}[(i_1,i_2),(j_1,j_2)]$ by the
sub-matrix of $\mathbf{G}$ consisting of rows from
$i_1$ to $i_2$ and columns from $j_1$ to $j_2$, and $\mathbf{G}[(i_1,i_2),(:)]$
and $\mathbf{G}[(:),(j_1,j_2)]$ as two sub-matrices of $\mathbf{G}$ consisting
of rows from $i_1$ to $i_2$ and columns from $j_1$ to $j_2$, respectively.

\subsection{Construction for $\alpha=1$ and Any $n,k$}
\label{sec:alpha1}
We show in the next theorem that any $(n,k)$ MDS code can achieve the minimum cross-rack repair bandwidth.
\begin{theorem}
If $\alpha=1$, then we can repair any one symbol of $(n,k)$ MDS code by downloading
all the other $n/r-1$ symbols in the host rack and one symbol from each of $d$ other arbitrary racks.
\end{theorem}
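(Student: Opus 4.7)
The approach is to rephrase the repair problem in terms of column spans of the generator matrix and then invoke the defining property of an MDS code.

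Let $\mathbf{G}=[\mathbf{g}_1\mid\cdots\mid \mathbf{g}_n]$ be a $k\times n$ generator matrix, so that every codeword reads $(c_1,\ldots,c_n)=\mathbf{m}\mathbf{G}$ for some message $\mathbf{m}\in\mathbb{F}_q^k$. Suppose the failed node sits in rack $f$ at position $i$, and let $I_f, I_{h_1},\ldots,I_{h_d}$ denote the $n/r$-sized index sets of the host rack and of the $d$ helper racks.

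First I would reduce repair to a span question: each helper rack $h_\ell$ can, by aggregating its $n/r$ stored symbols at the relayer, transmit any single field element of the form $\mathbf{m}\cdot\mathbf{u}_\ell$ where $\mathbf{u}_\ell\in V_{h_\ell}:=\mathrm{span}(\mathbf{g}_j: j\in I_{h_\ell})$. Combined with the directly-available host-rack symbols $\{c_j=\mathbf{m}\mathbf{g}_j: j\in I_f\setminus\{i\}\}$, the new node can recover $c_i=\mathbf{m}\mathbf{g}_i$ if and only if $\mathbf{g}_i$ lies in the subspace $V_f' + V_{h_1}+\cdots+V_{h_d}$, where $V_f':=\mathrm{span}(\mathbf{g}_j: j\in I_f\setminus\{i\})$. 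Any decomposition $\mathbf{g}_i=\mathbf{u}_0+\sum_\ell\mathbf{u}_\ell$ with $\mathbf{u}_0\in V_f'$ and $\mathbf{u}_\ell\in V_{h_\ell}$ translates immediately into an explicit repair rule: helper $\ell$ transmits $\mathbf{m}\mathbf{u}_\ell$, and the new node locally forms $\mathbf{m}\mathbf{u}_0=\sum_{j\in I_f\setminus\{i\}}a_j c_j$ from its host-rack data; the sum equals $c_i$.

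Next I would verify the containment. The subspace $V_f'+\sum_\ell V_{h_\ell}$ is generated by the $(d+1)\cdot n/r-1$ columns $\mathbf{g}_j$ with $j\in (I_f\cup I_{h_1}\cup\cdots\cup I_{h_d})\setminus\{i\}$. Write $k=q\cdot(n/r)+s$ with $0\le s<n/r$, so $m=\lfloor kr/n\rfloor=q$; for the relevant choice $d\ge m$,
\[
(d+1)\tfrac{n}{r}-1\;\ge\; (m+1)\tfrac{n}{r}-1 \;=\; k+\bigl(\tfrac{n}{r}-s\bigr)-1\;\ge\; k.
\]
Thus at least $k$ columns of $\mathbf{G}$ generate this subspace, and the MDS property (every $k$ columns of $\mathbf{G}$ are linearly independent) forces them to span all of $\mathbb{F}_q^k$; in particular $\mathbf{g}_i$ lies in the span, which is exactly the required containment.

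I do not anticipate a serious obstacle. The one mild point to double-check is the arithmetic bound $(m+1)n/r\ge k+1$ used above, which is immediate from $(m+1)n/r$ being an integer strictly larger than $kr/n\cdot n/r=k$; every other step is a direct consequence of the defining MDS property, so the theorem follows.
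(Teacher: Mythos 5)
Your proof is correct. The paper establishes the same statement by an explicit computation: it writes the symbol $c_{n/r,h_d}$ of the last helper rack as a linear combination $\sum_j q_j c_j$ of the $k$ symbols located in the host rack, the first $d-1$ helper racks, and the first $t$ positions of rack $h_d$; each helper then transmits the partial sum of that combination restricted to its own rack (rack $h_d$ transmits $c_{n/r,h_d}$ minus the contribution of its own first $t$ symbols), and the failed symbol is isolated at the end because its coefficient is nonzero. Your span-containment formulation rests on the same MDS-plus-counting idea, but packages it more cleanly: you only need that the $(d+1)n/r-1\ge k$ available columns span $\mathbb{F}_q^k$, which immediately yields the decomposition $\mathbf{g}_i=\mathbf{u}_0+\sum_\ell\mathbf{u}_\ell$ and hence the transmission rule. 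This buys you three things the paper's version lacks: it handles $kr/n$ integer and non-integer uniformly, whereas the paper treats $t=0$ as a separate case at the end; it works verbatim for any $d\ge m$, not just the exact value $d=m$ forced by $\alpha=1$; and it sidesteps the paper's unjustified (though true, by a one-line MDS argument) assertion that the coefficients $q_j$ are all nonzero --- in particular the coefficient of the failed symbol, which is what the paper's recovery step actually requires. In exchange, the paper's construction is fully explicit about which field element each relayer computes, whereas yours leaves the decomposition as an existence statement.
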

\begin{proof}
When $\alpha=1$, we have $B=k$ and $d=m$.
For notational convenience, let $c_{i,f}$ denote the symbol
stored in node $i$ and rack $f$, where $i=1,2,\ldots,n/r$ and
$f=1,2,\ldots,r$. We need to show that we can recover the symbol $c_{i,f}$ by
downloading $n/r-1$ symbols
\[
c_{1,f},\ldots,c_{i-1,f},c_{i+1,f},\ldots,c_{n/r,f},
\]
from rack $f$ and $d$ symbols from $d$ arbitrary racks $h_1,h_2,\ldots,h_d$,
where $h_1\neq \cdots \neq h_d \in \{1,2,\ldots,r\}\setminus\{f\}$. We first
consider the case where $kr/n$ is not an integer.

Note that the symbols in rack $h_i$ are
$c_{1,h_i},c_{2,h_i},\ldots,c_{n/r,h_i}$, where $i=1,2,\ldots,d$. Since each
rack has $n/r$ symbols, the total number of symbols in racks
$f,h_1,h_2,\ldots,h_{d-1}$ and the first $t$ symbols in rack $h_d$ is
$dn/r+t=k$. We can view the symbol $c_{n/r,h_d}$ as a linear combination of
the $k$ symbols, including $dn/r$ symbols in racks $f,h_1,h_2,\ldots,h_{d-1}$
and $t$ symbols $c_{1,h_d},c_{2,h_d},\ldots,c_{t,h_d}$ in rack $h_d$, i.e.,
\begin{align*}
c_{n/r,h_d}=& \sum_{j=1}^{n/r}c_{j,f}q_j+\sum_{j=1}^{n/r}c_{j,h_1}q_{j+n/r}+
\sum_{j=1}^{n/r}c_{j,h_2}q_{j+2n/r}+\cdots\\
&+\sum_{j=1}^{n/r}c_{j,h_{d-1}}q_{j+(d-1)n/r}+
\sum_{j=1}^{t}c_{j,h_d}q_{j+dn/r},
\end{align*}
where $q_{i}\neq 0$ for $i=1,2,\ldots,k$. Therefore, we can recover the symbol
$c_{i,f}$ by downloading one symbol
\begin{align*}
& c_{n/r,h_d}-\sum_{j=1}^{t}c_{j,h_d}q_{j+dn/r}\\
=& \sum_{j=1}^{n/r}c_{j,f}q_j+
\sum_{j=1}^{n/r}c_{j,h_1}q_{j+n/r}+\sum_{j=1}^{n/r}c_{j,h_2}q_{j+2n/r}+\cdots\\
&+\sum_{j=1}^{n/r}c_{j,h_{d-1}}q_{j+(d-1)n/r},
\end{align*}
from rack $h_d$, one symbol
\[
\sum_{j=1}^{n/r}c_{j,h_i}q_{j+in/r},
\]
from rack $h_i$ for $i\in \{1,2,\ldots,d-1\}$ and $n/r-1$ symbols
\[
c_{1,f},\ldots,c_{i-1,f},c_{i+1,f},\ldots,c_{n/r,f},
\]
from the rack $f$.

Therefore, any one failure in a rack can be repaired by
downloading one symbol from each of the $d$ arbitrary racks and $n/r-1$
symbols from the host rack, when $kr/n$ is not an integer. The repair process
of the failed symbol $c_{i,f}$ with $kr/n$ being an integer can be viewed as a
special case of the above repair process with $t=0$.  This completes the
proof.
\end{proof}
\begin{figure}
\centering
\includegraphics[width=0.5\textwidth]{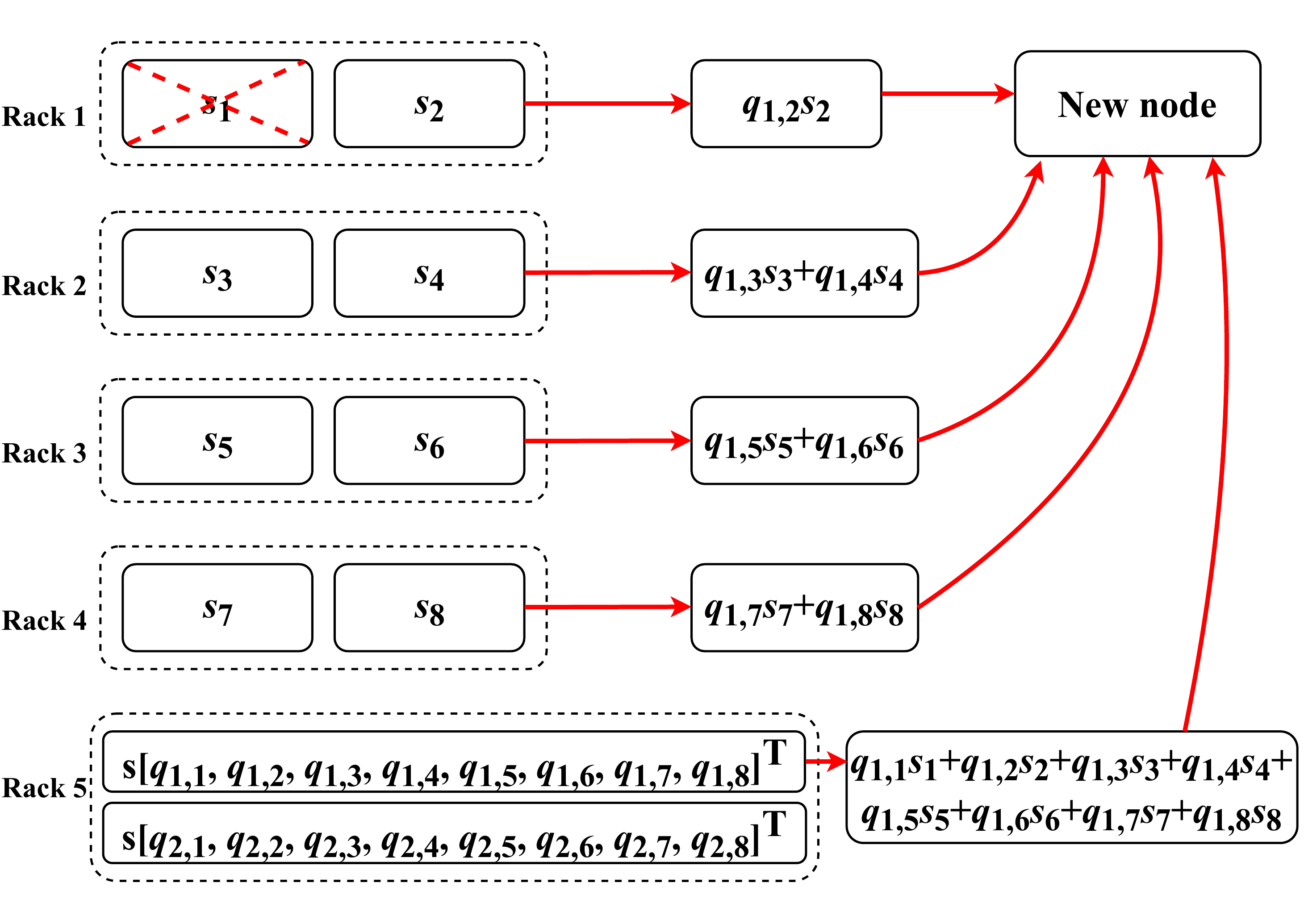}
\caption{Example of homogeneous MSRR code with $(n,k,r,d)=(10,8,5,4)$, the data symbols
$[s_1,s_2,\cdots,s_{8}]$ are denoted by $\mathbf{s}$.
}
\label{fig:example-storage-one}
\vspace{-0.4cm}
\end{figure}

An example in Fig.~\ref{fig:example-storage-one} shows the repair for a data
node. To recover the failure symbol $s_1$, five symbols are downloaded and
only the coded symbol downloaded from rack~5 (is called \emph{desired symbol})
is linearly dependent on $s_1$.
The desired symbol $\sum_{i=1}^{8}q_{1,i}s_i$ is composed of one
\emph{desired component} $q_{1,1}s_1$ which is desirable to
recover the failure symbol and one \emph{interference component}
$\sum_{i=2}^{8}q_{1,i}s_i$. If the interference component is aligned, then
we obtain the desired component $q_{1,1}s_1$ and can repair the failure symbol
if $q_{1,1}\neq 0$. Therefore, the other four coded symbols
downloaded (are called \emph{interference symbols}) are used to align
the interference component. Note that the first construction of DRC in
\cite{hu2017} can be viewed as a special case of $d=r-1$ and $n/(n-k)$ being
an integer.

\subsection{Hybrid MSRR Codes for $\alpha n/r \geq m+\alpha t$}
\label{msrr-non-integer}

\textbf{Idea.} In hybrid MSRR codes, the first $k-t$ data nodes are
placed in the $m$ data racks and the last $t=k \mod (n/r)$ data nodes are
placed in the hybrid rack, each node stores $\alpha=d-m+1$ symbols. In the
repair process of a data node in a data rack, the new node accesses all the
other symbols from the host rack. It downloads (i) $m-1$ coded symbols from
$m-1$ data racks, (ii) one coded symbol from the hybrid rack, and (iii) $d-m$
coded symbols from $d-m$ coded racks. The $d-m+1$ desired symbols are from the
hybrid rack and $d-m$ coded racks.
Note that all the interference symbols that are downloaded from data racks are
independent of the
last $t\alpha$ data symbols. If we want to recover the failed symbols, the
interference component of each of the $d-m+1$ desired symbols should be
independent on the last $t\alpha$ data symbols.
To simultaneously align all the interference components for the $\alpha$
desired symbols, we need to carefully choose the encoding matrices and
introduce the concept of \emph{orthogonal vector}.
We note that in our construction of
hybrid MSRR codes, we can recover the data node by $d$ specific helper racks,
not $d$ arbitrary helper racks, with optimal cross-rack repair bandwidth.


\textbf{Construction.} Before giving the construction, we should introduce
some notation.
The row vectors $\mathbf{v}_1,\mathbf{v}_2,\ldots,\mathbf{v}_{m}$ are
orthogonal of length $m$. The vectors $\mathbf{u}_1,\mathbf{u}_2,\ldots,\mathbf{u}_{\alpha}$
have size $1\times \alpha n/r$. Let $\mathbf{E}_1,\mathbf{E}_2,\ldots,\mathbf{E}_{m}$
be $\alpha n/r\times m$ matrices, $\mathbf{F}_1,\mathbf{F}_2,\ldots,\mathbf{F}_{\alpha}$
be $(B-m\alpha n/r)\times m$ matrices and $\mathbf{R}_1,\mathbf{R}_2,\ldots,\mathbf{R}_{\alpha}$
be $B\times (\alpha n/r-m)$ matrices.

For $h=m+1,m+2,\ldots,r$, the encoding matrix $\mathbf{G}_h$ is given as
\begin{align*}
\mathbf{G}_{h}=\left[
\begin{array}{c:c}
\mathbf{u}_{h-m}^{T}\mathbf{v}_1+\lambda_{h-m,1}\mathbf{E}_1 &\\
\mathbf{u}_{h-m}^{T}\mathbf{v}_2+\lambda_{h-m,2}\mathbf{E}_2 &\\
\vdots & \mathbf{R}_{h-m} \\
\mathbf{u}_{h-m}^{T}\mathbf{v}_{m}+\lambda_{h-m,m}\mathbf{E}_{m} &\\
\mathbf{F}_{h-m} &
\end{array}
\right],
\end{align*}
where the matrix $\mathbf{R}_i$ for $i=1,2,\ldots,\alpha$ is given as follows. The matrix $\mathbf{R}_1$ is
\[
\mathbf{R}_1=\left[
\begin{array}{c:c}
 0_{(B-\alpha\cdot t)\times \alpha \cdot t} &\mathbf{L}_1 \\
I_{\alpha \cdot t\times \alpha \cdot t}&
\end{array}
\right],
\]
where $0_{(B-\alpha\cdot t)\times \alpha \cdot t}$ is a $(B-\alpha\cdot t)\times
\alpha \cdot t$ zero matrix, $I_{\alpha \cdot t\times \alpha \cdot t}$ is an
$\alpha \cdot t\times \alpha \cdot t$ identity matrix and $\mathbf{L}_1$ is a
$B\times (\alpha n/r-m-\alpha \cdot t)$ matrix. Therefore, the parameters
should satisfy $\alpha n/r \geq m+\alpha  t$.
For $i=2,3,\ldots,\alpha$, $\mathbf{R}_i$ is
\[\mathbf{R}_i=\begin{bmatrix}
\mathbf{x}_i^T\mathbf{y}_1+\mathbf{D}_{i,1}\\
\mathbf{x}_i^T\mathbf{y}_2+\mathbf{D}_{i,2}\\
\vdots \\
\mathbf{x}_i^T\mathbf{y}_{m}+\mathbf{D}_{i,m}\\
\mathbf{C}_i
\end{bmatrix},
\]
where $\mathbf{y}_1,\mathbf{y}_2,\ldots,\mathbf{y}_{m}$ are $m$ orthogonal vectors of
length $\alpha n/r-m$, $\mathbf{x}_2,\mathbf{x}_3,\ldots,\mathbf{x}_{\alpha}$ are
$\alpha-1$ vectors of length $\alpha n/r$, $\mathbf{D}_{i,1},\mathbf{D}_{i,2},\ldots,\mathbf{D}_{i,m}$
are matrices of size $\alpha n/r\times (\alpha n/r-m)$ and $\mathbf{C}_i$ is a
matrix of size $(B-m \alpha n/r)\times (\alpha n/r-m)$.

The following requirement should be satisfied. For $f=1,2,\ldots,m$ and
$i=2,3,\ldots,\alpha$, there exist non-zero elements $\lambda_{i,j}^{'}$ for
$j=1,\ldots,f-1,f+1,\ldots,\alpha$ such that the equations
in~\eqref{eq:req1},~\eqref{eq:req2} hold and all the sub-matrices
$\mathbf{M}_1[(:),(1+(\ell-1)\alpha,\ell\alpha)]$ of the matrix
$\mathbf{M}_1$ in~\eqref{eq:req3} are non-singular for $\ell=1,2,\ldots,n/r$.
All the sub-matrices
$\mathbf{M}_2[(1+(i-1)\alpha,i\alpha),(1,\alpha)]$ of the matrix $\mathbf{M}_2$ in~\eqref{eq:req5} are
non-singular for $i=1,2,\ldots,t$. The above condition is called the \emph{repair condition}.

\begin{equation}
\mathbf{D}_{i,j}\mathbf{y}_f^T=\lambda_{i,j}^{'}\mathbf{E}_j\mathbf{v}_f^T
\label{eq:req1}
\end{equation}
\begin{equation}
\mathbf{F}_i\mathbf{v}_f^T=\mathbf{C}_i\mathbf{y}_f^T
\label{eq:req2}
\end{equation}
\begin{equation}
\mathbf{M}_1=\begin{bmatrix}
\mathbf{v}_f(\mathbf{u}_{1}^{T}\mathbf{v}_f+\lambda_{1,f}\mathbf{E}_f)^T\\
\mathbf{v}_f(\mathbf{u}_{2}^{T}\mathbf{v}_f +\lambda_{2,f}\mathbf{E}_f)^T-\mathbf{y}_f(\mathbf{x}_2^T\mathbf{y}_f +\mathbf{D}_{2,f})^T \\
 \vdots \\
\mathbf{v}_f(\mathbf{u}_{\alpha}^{T}\mathbf{v}_f+\lambda_{\alpha,f}\mathbf{E}_f )^T-\mathbf{y}_f(\mathbf{x}_{\alpha}^T\mathbf{y}_f +\mathbf{D}_{\alpha,f})^T
\end{bmatrix}
\label{eq:req3}
\end{equation}
\begin{equation}
\small
\mathbf{M}_2=[
[\mathbf{F}_1 \text{ } \mathbf{L}_1[(m\alpha \frac{n}{r}+1,B),(:)] ]\mathbf{a}^T \text{ } [\mathbf{F}_2 \text{ } \mathbf{C}_2]\mathbf{c}_2^T \text{ }
\cdots \text{ }
[\mathbf{F}_{\alpha} \text{ } \mathbf{C}_{\alpha}]\mathbf{c}_{\alpha}^T
]
\label{eq:req5}
\end{equation}
The proposed codes satisfy $(n,k)$ recovery property if and only if the file
can be retrieved from any $k$ nodes. This is equivalent to that all the
$\alpha\ell\times \alpha\ell$ sub-matrices of the matrix
\begin{equation}\begin{bmatrix}\mathbf{G}_{m+1}[(:),(1,(n/r-t)\alpha)] & \mathbf{G}_{m+2} & \cdots & \mathbf{G}_{r}
\end{bmatrix}
\label{eq:all-encoding-matrices2}
\end{equation}
consisting of rows $i_1,i_1+1,\ldots,i_1+\alpha-1,\ldots,i_\ell,i_\ell+1,\ldots,i_\ell+\alpha-1$
and columns $j_1,j_1+1,\ldots,j_1+\alpha-1,\ldots,j_\ell,j_\ell+1,\ldots,j_\ell+\alpha-1$ are non-singular for
\begin{align*}
&i_1\neq \ldots \neq i_\ell\in\{1,\alpha+1,\ldots,(k-1)\alpha+1\},\\
&j_1 \neq \ldots \neq j_\ell\in\{1,\alpha+1,\ldots,(n-k-1)\alpha+1\},
\end{align*}
where $\ell=1,2,\ldots,\min\{k,n-k\}$. The above requirement
is called the \emph{fault tolerance condition}.

\textbf{Repair.} If a data node in rack $f\in\{1,2,\ldots,m\}$ fails, the new node downloads a desired symbol
\begin{align*}
&\mathbf{s}\mathbf{G}_{m+1}[(1,B),(1,m)]\mathbf{v}_f^T-\mathbf{s}\mathbf{R}_1[(1,B),(1,\alpha t)] \mathbf{F}_1 \mathbf{v}_f^T \\
=&\mathbf{s}\begin{bmatrix}\lambda_{1,1}\mathbf{E}_1 \mathbf{v}_f^T\\
\vdots \\
\lambda_{1,f-1}\mathbf{E}_{f-1}\mathbf{v}_f^T \\
\mathbf{u}_{1}^{T}\mathbf{v}_f \mathbf{v}_f^T+\lambda_{1,f}\mathbf{E}_f \mathbf{v}_f^T \\
\lambda_{1,f+1}\mathbf{E}_{f+1}\mathbf{v}_f^T \\
\vdots \\
\lambda_{1,m}\mathbf{E}_{m} \mathbf{v}_f^T\\
0_{(B-m \alpha n/r) \times 1}
\end{bmatrix},
\end{align*}
from the relayer node in rack $m+1$, one desired symbol
\begin{align*}
&\mathbf{s}\mathbf{G}_{m+i}[(1,B),(1,m)]\mathbf{v}_f^T-\mathbf{s}\mathbf{R}_i\mathbf{y}_f^T \\
=&\mathbf{s}\begin{bmatrix}(\lambda_{i,1}-\lambda_{i,1}^{'})\mathbf{E}_1 \mathbf{v}_f^T\\
\vdots \\
(\lambda_{i,f-1}-\lambda_{i,f-1}^{'})\mathbf{E}_{f-1}\mathbf{v}_f^T \\
\mathbf{u}_{i}^{T}\mathbf{v}_f \mathbf{v}_f^T+\lambda_{i,f}\mathbf{E}_f \mathbf{v}_f^T-(\mathbf{x}_i^T\mathbf{y}_f \mathbf{y}_f^T+\mathbf{D}_{i,f}\mathbf{y}_f^T) \\
(\lambda_{i,f+1}-\lambda_{i,f+1}^{'})\mathbf{E}_{f+1}\mathbf{v}_f^T \\
\vdots \\
(\lambda_{i,m}-\lambda_{i,m}^{'})\mathbf{E}_{m} \mathbf{v}_f^T\\
0_{(B-m\alpha n/r)\times 1}
\end{bmatrix},
\end{align*}
from the relayer node in rack $i+m$ for $i=2,\ldots,\alpha$, and $m-1$
interference symbols from racks $\{1,2,\ldots,m\}\setminus \{f\}$ with local encoding vectors being
\[
\mathbf{E}_1 \mathbf{v}_f^T, \cdots,
\mathbf{E}_{f-1}\mathbf{v}_f^T,
\mathbf{E}_{f+1}\mathbf{v}_f^T,
\cdots,
\mathbf{E}_{m} \mathbf{v}_f^T ,
\]
respectively. Therefore, we obtain $\alpha$ coded symbols which are the
multiplication of the matrix in~\eqref{eq:req3} and
\[\begin{bmatrix}
s_{(f-1)\alpha n/r+1} & s_{(f-1)\alpha n/r+2} & \cdots & s_{f\alpha n/r}
\end{bmatrix}\]
by subtracting the coded symbols downloaded in racks $m+1,\ldots,m+\alpha$
from the symbols downloaded in racks $\{1,\ldots,f-1,f+1,\ldots,m\}$.
The failure $\alpha$ symbols can be recovered, as the corresponding
$\alpha \times \alpha$ sub-matrix of the matrix in~\eqref{eq:req3} is non-singular.


We first review the Schwartz-Zippel Lemma before giving the repair condition
and fault tolerance condition.
\begin{lemma} (Schwartz-Zippel \cite{motwani1995})
Let $Q(x_1,\ldots,x_n)\in \mathbb{F}_q[x_1,\ldots,x_n]$ be a non-zero multivariate polynomial
of total degree $d$. Let $r_1,\ldots,r_n$ be chosen independently and uniformly at
random from a subset $\mathbb{S}$ of $\mathbb{F}_q$. Then
\begin{equation}
Pr[Q(r_1,\ldots,r_n)=0]\leq \frac{d}{|\mathbb{S}|}.
\end{equation}
\end{lemma}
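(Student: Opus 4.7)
The plan is to proceed by induction on the number of variables $n$. In the base case $n=1$, the polynomial $Q$ is a nonzero univariate polynomial of degree at most $d$ over the field $\mathbb{F}_q$; it therefore has at most $d$ roots, so at most $d$ of the $|\mathbb{S}|$ equally likely choices of $r_1$ satisfy $Q(r_1)=0$, giving the bound directly.

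For the inductive step, I would expand $Q$ as a polynomial in its last variable:
\[
Q(x_1,\ldots,x_n) \;=\; \sum_{i=0}^{k} x_n^{\,i}\, Q_i(x_1,\ldots,x_{n-1}),
\]
where $k$ is the largest exponent of $x_n$ with $Q_k \not\equiv 0$. Since every monomial contributing to $Q_k$ already carries the factor $x_n^{k}$, the total degree of $Q_k$ in the remaining $n-1$ variables is at most $d-k$. I would then split the analysis by whether $Q_k$ vanishes at the random prefix $(r_1,\ldots,r_{n-1})$.

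Applying the inductive hypothesis to $Q_k$, which has $n-1$ variables and total degree at most $d-k$, yields
\[
\Pr\!\bigl[Q_k(r_1,\ldots,r_{n-1}) = 0\bigr] \;\leq\; \frac{d-k}{|\mathbb{S}|}.
\]
Conditioned on $Q_k(r_1,\ldots,r_{n-1}) \neq 0$, the residual $Q(r_1,\ldots,r_{n-1},x_n) \in \mathbb{F}_q[x_n]$ is a nonzero univariate polynomial of degree exactly $k$, so the base case yields
\[
\Pr\!\bigl[Q(r_1,\ldots,r_n)=0 \,\bigm|\, Q_k(r_1,\ldots,r_{n-1}) \neq 0\bigr] \;\leq\; \frac{k}{|\mathbb{S}|}.
\]
A union bound (equivalently, the law of total probability) adds these two contributions to give $\frac{d-k}{|\mathbb{S}|} + \frac{k}{|\mathbb{S}|} = \frac{d}{|\mathbb{S}|}$, closing the induction.

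The only delicate point I expect to encounter is the total-degree bookkeeping when peeling off $x_n^{k}$: it is essential to observe that $\deg Q_k \leq d-k$ rather than merely $\leq d$, since otherwise the two halves of the union bound would add up to a useless $(d+k)/|\mathbb{S}|$ in the numerator. Beyond this, the argument is completely elementary, requiring only the fundamental fact that a degree-$d$ univariate polynomial has at most $d$ roots together with routine conditioning; no structure from the rack-aware construction is needed since the lemma is a purely combinatorial/probabilistic statement invoked later as a black box to certify the existence of field elements satisfying the repair and fault-tolerance conditions.
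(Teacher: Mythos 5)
Your proof is correct: it is the standard induction-on-the-number-of-variables argument, with the key bookkeeping step (that the leading coefficient $Q_k$ of $x_n^k$ has total degree at most $d-k$, so the two branches of the conditioning sum to exactly $d/|\mathbb{S}|$) handled properly, and the conditioning step legitimately using the independence of $r_n$ from $r_1,\ldots,r_{n-1}$. Note, however, that the paper does not prove this lemma at all — it is quoted as a known result with a citation to Motwani and Raghavan, and is used later only as a black box to certify the repair and fault-tolerance conditions via a union of determinant polynomials — so there is no in-paper argument to compare against; your self-contained proof simply supplies what the paper outsources to the reference.
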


The repair condition and fault tolerance condition can be satisfied if the field size is large enough.

\begin{theorem}
If the field size $q$ is larger than
\begin{equation}
\begin{array}{c}
\alpha\Big(2n/r+t+\sum_{i=1}^{\min\{n-k,k\}}i\dbinom{k}{i}\dbinom{n-k}{i}\Big),
\label{field-size2}
\end{array}
\end{equation}
then there exist encoding matrices $\mathbf{G}_{h}$ for $h=m+1,m+2,\ldots,r$ over
$\mathbb{F}_q$ of hybrid MSRR codes, where the parameters $n,r,m,t,\alpha$ satisfy $\alpha n/r \geq m+\alpha  t$.
\label{thm:hybridcons}
\end{theorem}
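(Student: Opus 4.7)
The plan is to apply the Schwartz-Zippel lemma to a single polynomial built from the product of all required non-singularity conditions. I would first fix deterministic choices for the orthogonal row vectors $\mathbf{v}_1,\ldots,\mathbf{v}_m$ and $\mathbf{y}_1,\ldots,\mathbf{y}_m$, and for the auxiliary vectors $\mathbf{u}_1,\ldots,\mathbf{u}_{\alpha}$ and $\mathbf{x}_2,\ldots,\mathbf{x}_{\alpha}$ (a Vandermonde-style choice suffices to make the orthogonality and independence relations explicit). I then promote the unspecified entries of $\mathbf{E}_j$, $\mathbf{F}_j$, $\mathbf{L}_1$, $\mathbf{D}_{i,j}$, $\mathbf{C}_i$, together with the scalars $\lambda_{i,j}$ and $\lambda'_{i,j}$, to independent indeterminates over $\mathbb{F}_q$. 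The constraints~\eqref{eq:req1} and~\eqref{eq:req2} are linear in these indeterminates and can be used to eliminate some of them in favor of the remaining free parameters, leaving a polynomial ring in a well-defined set of free variables.

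Next, I would translate every condition on the code into a non-vanishing polynomial condition on these free variables. The repair condition contributes, for each failed rack $f\in\{1,\ldots,m\}$, the determinants of the $n/r$ sub-matrices $\mathbf{M}_1[(:),(1+(\ell-1)\alpha,\ell\alpha)]$ from~\eqref{eq:req3} (each $\alpha\times\alpha$, hence of degree $\alpha$) and the $t$ sub-matrices $\mathbf{M}_2[(1+(i-1)\alpha,i\alpha),(1,\alpha)]$ from~\eqref{eq:req5} (again $\alpha\times\alpha$, of degree $\alpha$), together with the degree-one factors enforcing $\lambda_{i,j}\neq 0$. The fault tolerance condition contributes the determinants of all $\alpha\ell\times\alpha\ell$ sub-matrices of the block matrix in~\eqref{eq:all-encoding-matrices2}, of which there are $\binom{k}{\ell}\binom{n-k}{\ell}$ at level $\ell$, each of degree at most $\alpha\ell$. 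Taking the product $P$ of all these determinants and summing their degrees (exploiting the symmetry across $f$ so that only two "generic" repair scenarios contribute new factors after reindexing) yields $\deg(P)\leq \alpha\bigl(2n/r + t + \sum_{i=1}^{\min\{k,n-k\}} i\binom{k}{i}\binom{n-k}{i}\bigr)$, which matches the bound in~\eqref{field-size2}.

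Granting that $P$ is not identically zero, the Schwartz-Zippel lemma then implies that a uniformly random assignment of the free variables to $\mathbb{F}_q$ makes $P\neq 0$ with probability at least $1 - \deg(P)/q > 0$ whenever $q$ exceeds the stated bound. Any assignment on which $P$ does not vanish simultaneously satisfies every repair-condition and fault-tolerance non-singularity requirement, and hence yields valid encoding matrices $\mathbf{G}_h$ for the hybrid MSRR code.

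The main obstacle is thus to verify that $P$ is not identically zero on the affine variety cut out by~\eqref{eq:req1}-\eqref{eq:req2}, since those relations couple the matrices $\mathbf{D}_{i,j}$ to $\mathbf{E}_j$ and the matrices $\mathbf{F}_i$ to $\mathbf{C}_i$, so one must check that no such coupling forces a particular determinant to vanish identically even before specialization. I would handle this by producing one explicit consistent witness: a Cauchy-type specialization of the unconstrained entries ensures that every square minor of the block matrix in~\eqref{eq:all-encoding-matrices2} is non-singular by the classical minor formula, and because $\mathbf{v}_f^T$ and $\mathbf{y}_f^T$ are each a single column,~\eqref{eq:req1}-\eqref{eq:req2} impose only rank-one linear relations on $\mathbf{D}_{i,j}$ and $\mathbf{C}_i$, leaving enough freedom to perturb the witness so that $\mathbf{M}_1$ and $\mathbf{M}_2$ are also non-singular. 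Once such a witness exists, $P$ is a non-zero polynomial, and the remaining Schwartz-Zippel accounting is routine.
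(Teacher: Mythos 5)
Your proposal is correct and follows essentially the same route as the paper's Appendix C: treat the unspecified entries as indeterminates, encode the repair and fault-tolerance conditions as the non-vanishing of a product of determinants whose total degree is bounded by~\eqref{field-size2}, and invoke the Schwartz--Zippel lemma. The only substantive difference is that you explicitly address why the product polynomial is not identically zero on the variety cut out by~\eqref{eq:req1}--\eqref{eq:req2} via a witness construction, a point the paper's proof passes over silently.
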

\begin{proof}
See Appendix C.
\end{proof}

\begin{table}
\caption{Parameters satisfying the construction of hybrid MSRR codes in Section~\ref{msrr-non-integer}.}
\begin{center}
\begin{tabular}{|c|c|}
\hline
$r$ & $(k,n)$ \tabularnewline
\hline
\hline
3 & (5,9) (7,12) (9-11,12) (8-9,15) (11-14,15) \tabularnewline
\hline
4 & (5,8) (7-8,12) (10-11,12) (9-11,16) (11-14,20) \tabularnewline
\hline
5 & (8,15) (10,15) (11,15) (11,20) (13-15,20) (16-19,25)\tabularnewline
\hline
6 & (7,12) (10-11,18) (13-15,24) (17-19,24) \tabularnewline
\hline
\end{tabular}
\end{center}
\label{constr:prmt2}
\end{table}

From Theorem~\ref{thm:hybridcons}, we obtain that the supported parameters of
the proposed hybrid MSRR codes satisfy $n\geq (m+\alpha t)r/\alpha$. Table~\ref{constr:prmt2}
shows some examples of $d=r-1$ and $r=3,4,5,6$. We can observe from Table~\ref{constr:prmt2}
that we can give the construction of hybrid MSRR codes for most of the parameters.


\textbf{Example.} Take $(n,k,r,d)=(12,8,4,3)$ as an example.
It gives $m=2$, $t=2$, $\alpha=2$ and $B=16$.
The row vectors $\mathbf{v}_1,\mathbf{v}_2$ are orthogonal of length 2, and vectors
$\mathbf{u}_1,\mathbf{u}_2$ are of size $1\times 6$. The matrices $\mathbf{E}_1,\mathbf{E}_2$
have size $6\times 2$, $\mathbf{F}_1,\mathbf{F}_2$ have size $4\times 2$, and
$\mathbf{R}_1,\mathbf{R}_2$ have size $16\times 4$. Then, the encoding matrix $\mathbf{G}_{3}$ is given as
\begin{align*}
\mathbf{G}_{3}=\begin{bmatrix}\mathbf{Q}_{1,3}& \mathbf{Q}_{2,3} & \mathbf{Q}_{3,3}\end{bmatrix}=\left[
\begin{array}{c:c}
0_{12\times 4} & \mathbf{u}_{1}^{T}\mathbf{v}_1+\lambda_{1,1}\mathbf{E}_1 \\
 & \mathbf{u}_{1}^{T}\mathbf{v}_2+\lambda_{1,2}\mathbf{E}_2  \\
 I_{4\times 4} & \mathbf{F}_{1}
\end{array}
\right],
\end{align*}
where $\lambda_{1,1},\lambda_{1,2}$ are two non-zero elements. The encoding matrix $\mathbf{G}_{4}$ is
\begin{align*}
\mathbf{G}_{4}=& \begin{bmatrix}\mathbf{Q}_{1,4}& \mathbf{Q}_{2,4} & \mathbf{Q}_{3,4}\end{bmatrix}\\
=& \left[
\begin{array}{c:c}
\mathbf{u}_{2}^{T}\mathbf{v}_1+\lambda_{2,1}\mathbf{E}_1 & \mathbf{x}_2^T\mathbf{y}_1+\mathbf{D}_{2,1} \\
\mathbf{u}_{2}^{T}\mathbf{v}_2+\lambda_{2,2}\mathbf{E}_2 & \mathbf{x}_2^T\mathbf{y}_2+\mathbf{D}_{2,2} \\
\mathbf{F}_{2} & \mathbf{C}_{2}
\end{array}
\right],
\end{align*}
where $\mathbf{y}_1,\mathbf{y}_2$ are orthogonal vectors of length $4$, $\mathbf{x}_2$ is
of length $6$, $\lambda_{2,1},\lambda_{2,2}$ are two non-zero elements,
$\mathbf{D}_{2,1},\mathbf{D}_{2,2}$ are of size $6\times 4$ and $\mathbf{C}_2$ is a
matrix of size $4\times 4$. Fig.~\ref{fig:example1} shows the example.

\begin{figure}
\centering
\includegraphics[width=0.5\textwidth]{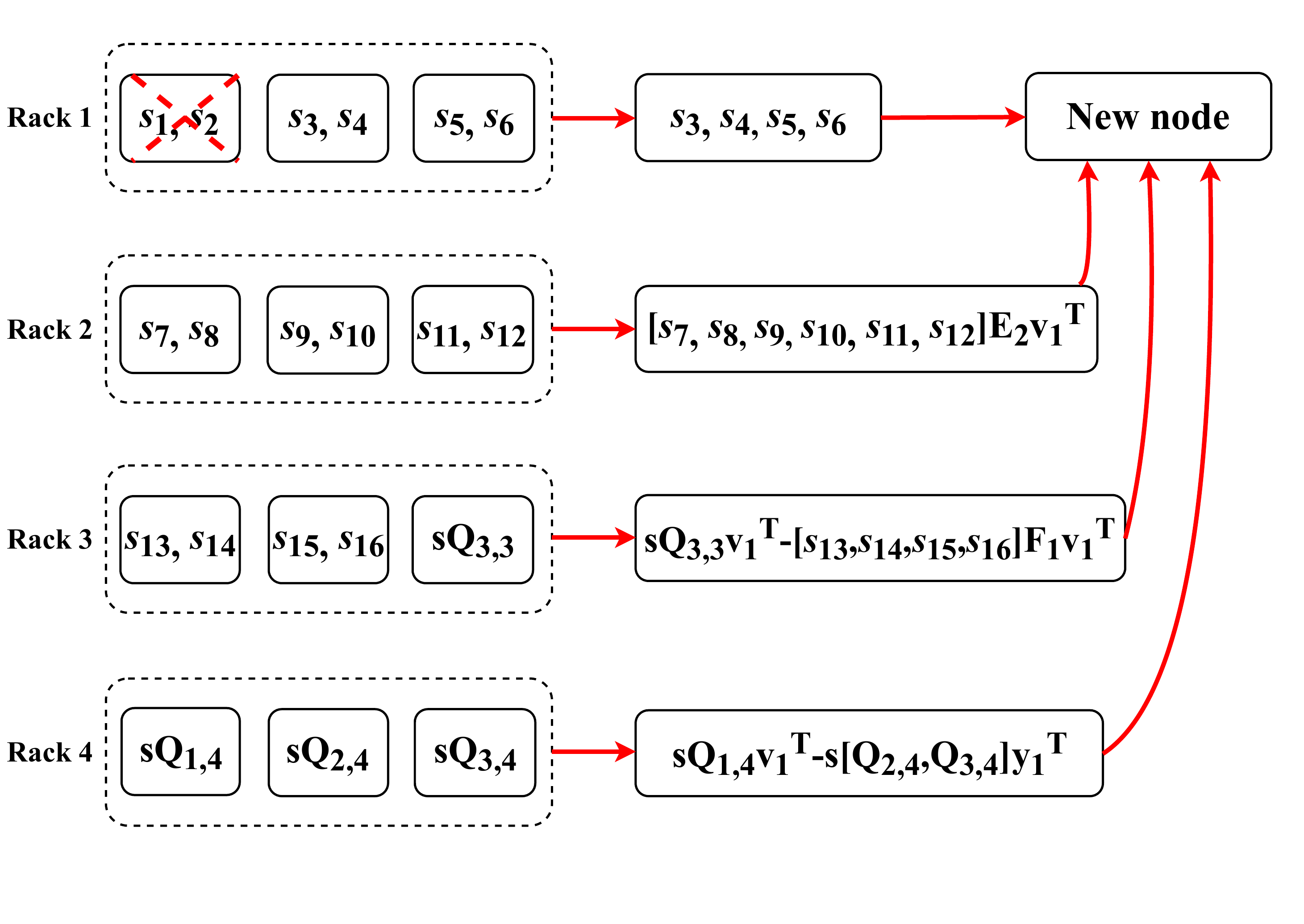}
\caption{Example of hybrid MSRR code with $(n,k,r)=(12,8,4)$. The data symbols are denoted by $\mathbf{s}=[s_1,s_2,\cdots,s_{16}]$.}
\label{fig:example1}
\end{figure}

We can repair two data symbols $s_1,s_2$ in node 1 by downloading
the other four data symbols (the interference symbols) $s_3,s_4,s_5,s_6$ in the
first rack and one symbol (the interference symbol)
\begin{equation}
\lambda_{1,2}\begin{bmatrix}
s_{7} & s_{8} & s_{9} & s_{10} & s_{11} & s_{12}
\end{bmatrix}\mathbf{E}_2\mathbf{v}_{1}^{T},
\label{eq:interf-alignm}
\end{equation}
from rack $2$ and two symbols (the desired symbols)
\begin{align*}
&\mathbf{s}\mathbf{Q}_{3,3}\mathbf{v}_{1}^{T}-\begin{bmatrix}
s_{13} & s_{14} & s_{15} & s_{16}
\end{bmatrix}\mathbf{F}_1\mathbf{v}_{1}^{T}=\\
&\begin{bmatrix}
s_1 & s_2 & \cdots & s_{12}
\end{bmatrix}\begin{bmatrix}
\mathbf{u}_{1}^{T}\mathbf{v}_1\mathbf{v}_1^T+\lambda_{1,1}\mathbf{E}_1 \mathbf{v}_1^T \\
\lambda_{1,2}\mathbf{E}_2 \mathbf{v}_1^T \\
\end{bmatrix},\\
&\mathbf{s}\mathbf{Q}_{1,4}\mathbf{v}_{1}^{T}-\mathbf{s}\begin{bmatrix}
\mathbf{Q}_{2,4} & \mathbf{Q}_{3,4}
\end{bmatrix}\mathbf{y}_{1}^{T}=\\
&\mathbf{s}\begin{bmatrix}
\mathbf{u}_{2}^{T}\mathbf{v}_1\mathbf{v}_1^T+\lambda_{2,1}\mathbf{E}_1 \mathbf{v}_1^T \\
\lambda_{2,2}\mathbf{E}_2 \mathbf{v}_1^T \\
\mathbf{F}_2 \mathbf{v}_1^T \\
\end{bmatrix}-\mathbf{s}\begin{bmatrix}
\mathbf{x}_{2}^{T}\mathbf{y}_1\mathbf{y}_1^T+\mathbf{D}_{2,1} \mathbf{y}_1^T \\
\mathbf{D}_{2,2} \mathbf{y}_1^T \\
\mathbf{C}_2 \mathbf{y}_1^T \\
\end{bmatrix},
\end{align*}
from racks $3,4$. Note that $\mathbf{F}_2 \mathbf{v}_1^T=\mathbf{C}_2 \mathbf{y}_1^T$
according to~\eqref{eq:req2} and $\mathbf{D}_{2,2} \mathbf{y}_1^T=\lambda_{2,2}'\mathbf{E}_2 \mathbf{v}_1^T $
according to~\eqref{eq:req1}. We obtain that the desired symbol downloaded from rack 4 is
\begin{align*}
\mathbf{s}\begin{bmatrix}
\mathbf{u}_{2}^{T}\mathbf{v}_1\mathbf{v}_1^T+\lambda_{2,1}\mathbf{E}_1 \mathbf{v}_1^T-(\mathbf{x}_{2}^{T}\mathbf{y}_1\mathbf{y}_1^T+\mathbf{D}_{2,1} \mathbf{y}_1^T) \\
(\lambda_{2,2}-\lambda_{2,2}')\mathbf{E}_2 \mathbf{v}_1^T \\
\end{bmatrix}.
\end{align*}
By subtracting two desired symbols in racks $3,4$ from the interference symbol
in~\eqref{eq:interf-alignm}, we obtain the following two symbols
\begin{align*}
\small
&\begin{bmatrix}
s_1 & s_2 & \cdots & s_{6}
\end{bmatrix}\cdot \\
&\begin{bmatrix}
\mathbf{u}_{1}^{T}\mathbf{v}_1\mathbf{v}_1^T+\lambda_{1,1}\mathbf{E}_1 \mathbf{v}_1^T \\
\mathbf{u}_{2}^{T}\mathbf{v}_1\mathbf{v}_1^T+\lambda_{2,1}\mathbf{E}_1 \mathbf{v}_1^T-(\mathbf{x}_{2}^{T}\mathbf{y}_1\mathbf{y}_1^T+\mathbf{D}_{2,1} \mathbf{y}_1^T)
\end{bmatrix}^T.
\end{align*}
Therefore, we can recover two data symbols $s_1,s_2$ by first subtracting the above
two symbols from the four interference symbols $s_3,s_4,s_5,s_6$ in the first rack
and then solving the resulting two linear systems, because the $2\times 2$
sub-matrix of the above matrix in the right is non-singular according to~\eqref{eq:req3}.
Nodes 2 and 3 can be recovered similarly.
With the same argument, we can also repair one node in racks 2 and 3.

Although the constructed hybrid MSRR codes only have minimum cross-rack repair
bandwidth for a data node, we can employ the generic
transformation \cite{li2017b} for our hybrid MSRR codes such that each coded node
has the same cross-rack repair bandwidth of
$(n,k=mn/r,d)$ homogeneous MSRR codes.


\section{Exact-Repair Constructions of MBRR Codes for All Parameters}
\label{sec:MBRRcons}

As in the construction of MSRR codes, we also consider the construction of MBRR codes for
$\beta=1$. When $\beta=1$, the parameters of MBRR codes satisfy
\[
B=kd-m(m-1)/2, \alpha=\gamma=d.
\]
Therefore, we want to construct codes with parameters satisfying the above requirement and
the $(n,k)$ recovery property satisfied.

By connecting to any $k$ nodes, we can obtain $kd$ symbols. The $(n,k)$ recovery
property can be satisfied if there exist $B$ independent symbols among the
$kd$ symbols, i.e., there are at most $m(m-1)/2$ dependent symbols in any $k$
nodes. We want to convert the product-matrix (PM) construction of MBR codes
\cite{rashmi2011} into the construction of our MBRR codes. In the following, we present the construction.

The encoding procedure can be described as follows.
\begin{itemize}
\item Divide the $B$ data symbols into two parts, in which the first part has $(k-m)d$
data symbols and the second part has $md-m(m-1)/2$ data symbols.
\item Compute $(n-r-k+m)d$ global coded symbols by encoding all the $B$ data symbols.
Store $(n-r-k+m)d$ global coded symbols and $(k-m)d$ data symbols of the first part
(totally $(n-r)d$ symbols) in the last $(n/r-1)$ nodes of the $r$ racks.
\item Generate $dr$ coded symbols by encoding the first part by a PM-MBR$(r,d,d)$ code.
Divide the generated coded symbols into $r$ groups, each group has $d$ coded symbols.
For each group, take a linear combination for a coded symbol in the group and all
$(n/r-1)d$ symbols stored in the last $(n/r-1)$ nodes in a rack with the encoding
vector being a column vector of length $(n/r-1)d+1$, and the resulting $d$ coded
symbols are stored in the first node of the rack.
\end{itemize}
We show a specific construction as follows.
Denote the row vector
\[
\begin{bmatrix}
s_1 & s_2 & \cdots & s_{(k-m)d}
\end{bmatrix}.
\]
by the first $(k-m)d$ data symbols $s_j$ for $j=1,2,\ldots,(k-m)d$.
Compute $(n-r-k+m)d$ \emph{global coded symbols} by
\begin{align*}
\begin{bmatrix}
c_1 & c_2 & \cdots & c_{(n-r-k+m)d}
\end{bmatrix}=
\begin{bmatrix}
s_1 & s_2 & \cdots & s_{B}
\end{bmatrix}\mathbf{Q},
\end{align*}
where $\mathbf{Q}$ is a $B\times (n-r-k+m)d$ matrix of rank $(n-r-k+m)d$.
Therefore, we obtain $(n-r)d$ symbols
\[
\begin{bmatrix}
s_1 & s_2 & \cdots & s_{(k-m)d} & c_1 & c_2 & \cdots & c_{(n-r-k+m)d}
\end{bmatrix},
\]
which are stored in the last $(n/r-1)$ nodes of the $r$ racks, and are represented by
a $r\times (n/r-1)d$ matrix $\mathbf{M}_1$. Typically, we may choose the matrix
$\mathbf{Q}$ to be Cauchy matrix so that any $k$ out of the $n-r$ nodes (the last
$(n/r-1)$ nodes of the $r$ racks) are sufficient to reconstruct the $B$ data symbols, if $n-r\geq k$.


Create a $d\times d$ data matrix
\[
\mathbf{M}_2:=\begin{bmatrix}
\mathbf{S}_1 & \mathbf{S}_2\\
\mathbf{S}_2^T & \mathbf{0}
\end{bmatrix}.
\]
The matrix $\mathbf{S}_1$ is a symmetric $m\times m$ matrix obtained by first
filling the upper-triangular part by $m(m+1)/2$ data symbols $s_j$, for
$j=(k-m)d+1,(k-m)d+2,\ldots,(k-m)d+m(m+1)/2$, and then obtain the lower-triangular
part by reflection along the diagonal. The rectangular matrix $\mathbf{S}_2$ has
size $m\times (d-m)$, and the entries in $\mathbf{S}_2$ are $m(d-m)$ data symbols
$s_j$, $j=(k-m)d+m(m+1)/2+1,\ldots,B$, listed in some fixed but arbitrary order.
The matrix $\mathbf{S}_2^T$ is the transpose of $\mathbf{S}_2$
and the matrix $\mathbf{0}$ is a $(d-m)\times(d-m)$ all-zero matrix.


Define the matrix $\Phi$ to be a $d\times r$ matrix, with the $i$-th column denoted
by $\phi_i^T$ for $i=1,2,\ldots,r$. Define the matrix $\mathbf{P}$
to be a $(n/r-1)d\times rd$ matrix, with the $\ell$-th column denoted by
$\mathbf{p}_\ell^T$ for $\ell=1,2,\ldots,rd$. For $i=1,2,\ldots,r$,
the $d$ \emph{local coded symbols} stored in the first node in rack $i$ are computed as
\[
(\mathbf{M}_2\phi_i^T)^t+\mathbf{M}_1[(i,i),(1,(n/r-1)d)]\begin{bmatrix}
\mathbf{p}_{(i-1)d+1}^T  & \cdots & \mathbf{p}_{id}^T
\end{bmatrix}.
\]
Note that
\[\begin{bmatrix}
\mathbf{M}_2\phi_1^T & \mathbf{M}_2\phi_2^T & \cdots & \mathbf{M}_2\phi_r^T
\end{bmatrix}\]
can be viewed as the codewords of the PM-MBR$(r,d,d)$ codes
\begin{figure}
\centering
\includegraphics[width=0.5\textwidth]{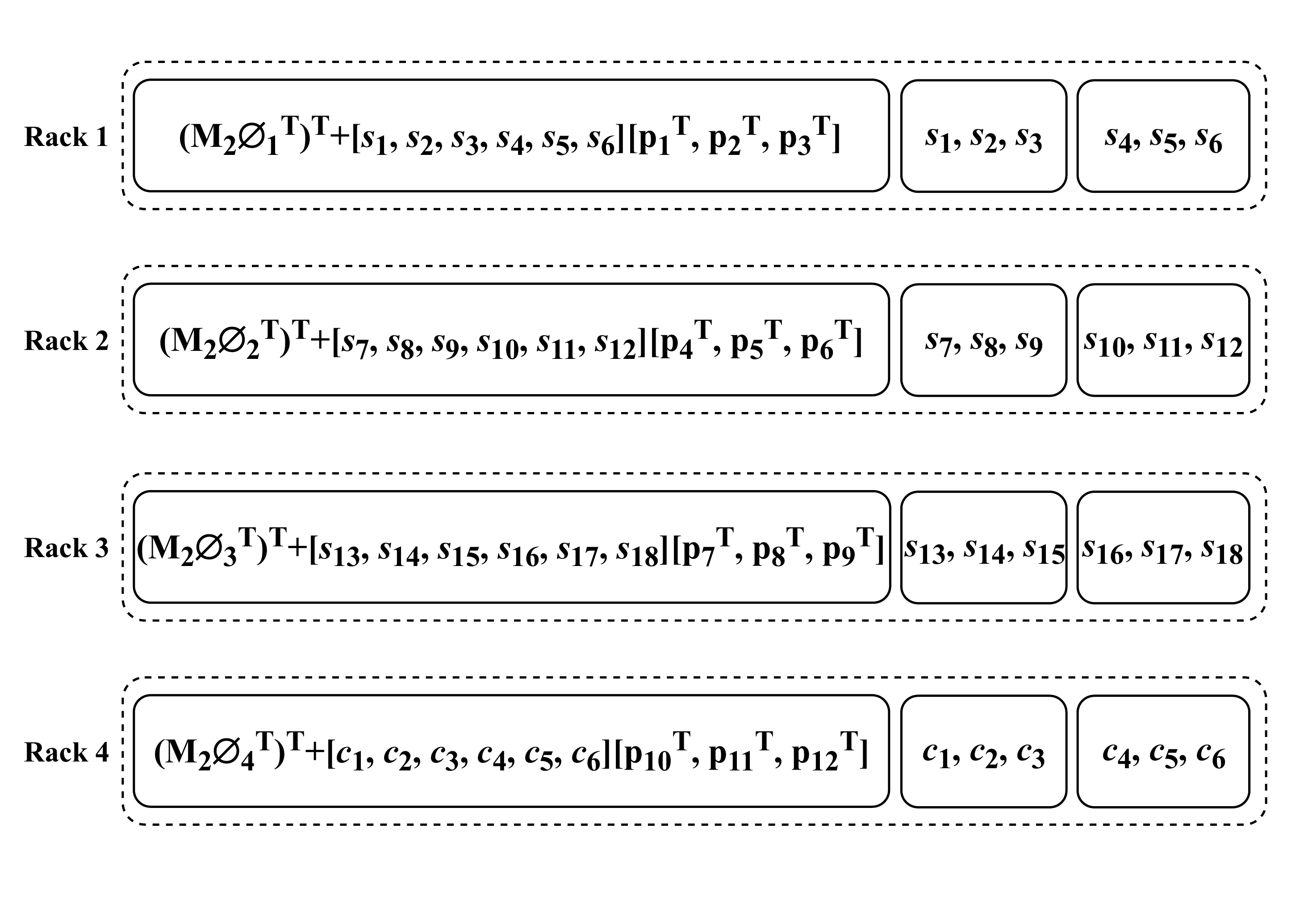}
\caption{Example of MBRR code with $(n,k,r,d)=(12,8,4,3)$.
}
\label{fig:MBRR-example}
\end{figure}

Fig.~\ref{fig:MBRR-example} shows an example of $(n,k,r,d)=(12,8,4,3)$. In the example,
we have $B=23$ data symbols. The first 18 data symbols and 6 global coded symbols are
stored in the last two nodes in each rack, and 4 local coded symbols are stored in
the first node in each rack.

\begin{theorem}
If the field size is larger than
\begin{equation}
B\sum_{i=1}^{\min\{k,r\}}\dbinom{n-r}{k-i}\dbinom{r}{i}.
\label{MBRR-field}
\end{equation}
then any $k$ nodes can recover the $B$ data symbols, and the $\alpha$
symbols stored in any one node can be recovered with optimal cross-rack repair bandwidth of MBRR codes.
\label{thm:mbrr}
\end{theorem}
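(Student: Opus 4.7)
The plan is to verify the repair property and the $(n,k)$ recovery property in turn, and then apply the Schwartz-Zippel Lemma to show that the field-size bound in~\eqref{MBRR-field} suffices. Throughout, I will choose $\Phi$ to be a $d\times r$ Vandermonde (or Cauchy) matrix, so that every $d\times d$ submatrix of $\Phi$ is non-singular by construction, and treat the entries of $\mathbf{Q}$ and $\mathbf{P}$ as indeterminates.

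For the repair of a failed node in rack $i$, I first observe that the relayer in any other rack $j$ can cancel the $\mathbf{M}_1[j,:]$ contribution from the first node's $d$ local coded symbols (since it also has access to the remaining $(n/r-1)d$ symbols in its rack), and hence it knows the length-$d$ vector $\mathbf{M}_2\phi_j^T$. If the failed node is the first node of rack $i$, each helper rack $j\in\{h_1,\dots,h_d\}$ sends the single symbol $\phi_i\mathbf{M}_2\phi_j^T$; stacking these $d$ symbols gives $\Phi_H^T\mathbf{M}_2\phi_i^T$, where $\Phi_H$ is the $d\times d$ submatrix of $\Phi$ indexed by the helpers, which is invertible, so $\mathbf{M}_2\phi_i^T$ is recovered. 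Combining $\mathbf{M}_2\phi_i^T$ with the $(n/r-1)d$ surviving symbols $\mathbf{M}_1[i,:]$ in rack $i$ reproduces the $d$ local coded symbols. If instead the failed node is one of the last $n/r-1$ nodes of rack $i$, exactly the same helper protocol produces $\mathbf{M}_2\phi_i^T$; subtracting this from the first node's $d$ local coded symbols yields $\mathbf{M}_1[i,:]\cdot[\mathbf{p}_{(i-1)d+1}^T\cdots\mathbf{p}_{id}^T]$, and then subtracting the contribution of the $(n/r-2)d$ known entries of $\mathbf{M}_1[i,:]$ leaves a $d\times d$ linear system in the $d$ failed entries, solvable whenever the corresponding submatrix of $\mathbf{P}$ is non-singular. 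The total cross-rack cost is $d\beta=d=\gamma_{\mathrm{MBRR}}$ in both cases.

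For the $(n,k)$ recovery property, consider any $k$ chosen nodes consisting of $i$ first nodes (for some $1\le i\le\min\{k,r\}$) and $k-i$ nodes from the last $n/r-1$ positions of their racks. The $k-i$ non-first nodes deliver $(k-i)d$ rows of $\mathbf{M}_1$; by choosing $\mathbf{Q}$ to be Cauchy (when $n-r\ge k$), any $k$ out of the $n-r$ rows of $\mathbf{M}_1$ are linearly independent, so if $i=0$ we are done. Otherwise, from each chosen first node in rack $j$ we obtain $\mathbf{M}_2\phi_j^T$ after cancelling the already-known $\mathbf{M}_1[j,:]$ contribution, and the fresh information contained in the $i$ first nodes corresponds to $\mathbf{M}_2\Phi_I$, where $\Phi_I$ is the $d\times i$ submatrix of $\Phi$. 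By the decoding analysis of the PM-MBR$(r,d,d)$ code~\cite{rashmi2011}, this uncovers all of $\mathbf{S}_1$ and $\mathbf{S}_2$ that are not already captured in $\mathbf{M}_1$, yielding an additional $id-\min(i,m)(\min(i,m)-1)/2$ independent equations in the data symbols; combined with the $(k-i)d$ from $\mathbf{M}_1$, this meets the target rank $B=kd-m(m-1)/2$. The non-singularity of the resulting $B\times B$ submatrix of the overall encoding matrix is a polynomial of degree at most $B$ in the entries of $\mathbf{Q}$ and $\mathbf{P}$.

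Finally, the number of distinct $(n,k)$-recovery conditions is exactly $\sum_{i=1}^{\min\{k,r\}}\binom{n-r}{k-i}\binom{r}{i}$, corresponding to all ways of picking $i$ first nodes and $k-i$ non-first nodes; the repair conditions are already ensured by the deterministic Vandermonde choice of $\Phi$ together with a similar union-bound argument over the $d\times d$ submatrices of $\mathbf{P}$ (which is absorbed into the recovery bound because those minors are themselves of degree at most $B$). Taking the product polynomial of degree at most $B\sum_{i=1}^{\min\{k,r\}}\binom{n-r}{k-i}\binom{r}{i}$ and invoking the Schwartz-Zippel Lemma, a uniformly random choice of the entries of $\mathbf{Q}$ and $\mathbf{P}$ over $\mathbb{F}_q$ makes all minors non-zero with positive probability whenever $q$ exceeds the bound~\eqref{MBRR-field}, which guarantees the existence of a valid assignment. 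The main obstacle is the rank analysis of the mixed case where some chosen nodes share a rack with one of the chosen first nodes: there the $\mathbf{M}_1$ and $\mathbf{M}_2$ contributions in the first node's symbols interact, and one must carefully account for the PM-MBR decoding structure to verify that exactly $m(m-1)/2$ dependencies arise in the worst case.
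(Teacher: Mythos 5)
Your proposal follows essentially the same route as the paper: repair is done by having each helper rack's relayer extract $\mathbf{M}_2\phi_{h_i}^T$ and forward $\phi_f\mathbf{M}_2\phi_{h_i}^T$, then inverting the $d\times d$ submatrix of $\Phi$; recovery is done by enumerating the $\sum_{i=1}^{\min\{k,r\}}\binom{n-r}{k-i}\binom{r}{i}$ node-selection patterns, bounding each determinant's total degree by $B$, and applying the Schwartz--Zippel Lemma to the product polynomial. The ``main obstacle'' you flag at the end --- verifying that the $B\times B$ determinant is a non-vanishing polynomial when chosen nodes share a rack with a chosen first node --- is precisely the step the paper itself only asserts (``we can check that there exist a $B\times B$ sub-matrix such that the determinant is a non-zero polynomial''), so your write-up is at the same level of completeness, and your explicit treatment of the two repair cases (including the needed non-singularity of $d\times d$ minors of $\mathbf{P}$ when a non-relayer node fails) is in fact more careful than the paper's one-line repair argument.
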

\begin{proof}
\textbf{File Recovery.} Suppose that a data collector connects to $k$ nodes that are
all from the first $n/r-1$ nodes in the $r$ racks, then we can retrieve the $B$ data
symbols as any square sub-matrix of a Cauchy matrix is non-singular. Consider that a
data collector connects to $k-\ell$ nodes that are from the last $n/r-1$ nodes and
$\ell$ nodes that are from the last node, where $\ell=1,2,\ldots,\min(k,r)$.
The received $kd$ symbols can be represented by the $B\times kd$ encoding matrix.
If we view each entry of $\mathbf{P}$ and $\Phi$ as a non-zero variable, we can
check that there exist a $B\times B$ sub-matrix such that the determinant is a
non-zero polynomial with total degree at most $B$. There are total
\begin{align*}
\sum_{i=1}^{\min\{k,r\}}\dbinom{n-r}{k-i}\dbinom{r}{i}.
\end{align*}
choices. The multiplication of all the determinants is a polynomial with total
degree at most~\eqref{MBRR-field}.
Therefore, we can decode the $B$ data symbols from any $k$ nodes if the field size
is larger than the value in~\eqref{MBRR-field} according to the Schwartz-Zippel Lemma.

\textbf{Repair.} Suppose a node in rack $f$ fails, where $f\in\{1,2,\ldots,r\}$.
The new node connects to any $d$ helper racks $h_i$ for $i=1,2,\ldots,d$.
The relayer node of rack $h_i$ accesses all the symbols stored in the rack,
retrieves $\mathbf{M}_2\phi_{h_i}^T$, computes and sends the coded symbol
\[
\phi_{f}\mathbf{M}_2\phi_{h_i}^T
\]
to the new node. Therefore, the new node obtains $d$ coded symbols
\[
\phi_{f}\mathbf{M}_2\begin{bmatrix}\phi_{h_1}^T & \phi_{h_2}^T & \cdots & \phi_{h_d}^T\end{bmatrix}.
\]
As the left matrix in the above is invertible, the new node can compute the
coded symbols $\phi_{f}\mathbf{M}_2$, as like the repair process of PM-MBR codes.
Then the new node can recover the failure node by accessing all the other symbols in the rack $f$.
\end{proof}
Indeed, the upper bound of field size in Theorem~\ref{thm:mbrr} is exponential in $k$.
However, we may directly check by computer search whether any $k$ nodes can reconstruct
the $B$ data symbols. We have checked by computer search that we can always find
$\mathbf{P}$ and $\Phi$ such that any $k$ nodes can reconstruct the $B$ data symbols
for the example when $(n,k,r,d)=(12,8,4,3)$, when the field size is 11.
We can replace the underlying finite field by a binary cyclic code \cite{hou2016}
for computational complexity reduction.
\section{Comparison}
\label{sec:comp}

In this section, we evaluate cross-rack repair bandwidth for the two extreme
points of RRC, RC and other related codes, such as clustered codes in
\cite{prakash2017} and codes in \cite{sohn2016}. We also discuss the supported
parameters of our exact-repair constructions, the exact-repair constructions in
\cite{prakash2017}, and DRC \cite{hu2016,hu2017}.

\subsection{Cross-rack Repair Bandwidth}
\label{sec:com-repair}
\begin{figure}[t]
\centering
\begin{tabular}{cc}
\includegraphics[width=0.24\textwidth]{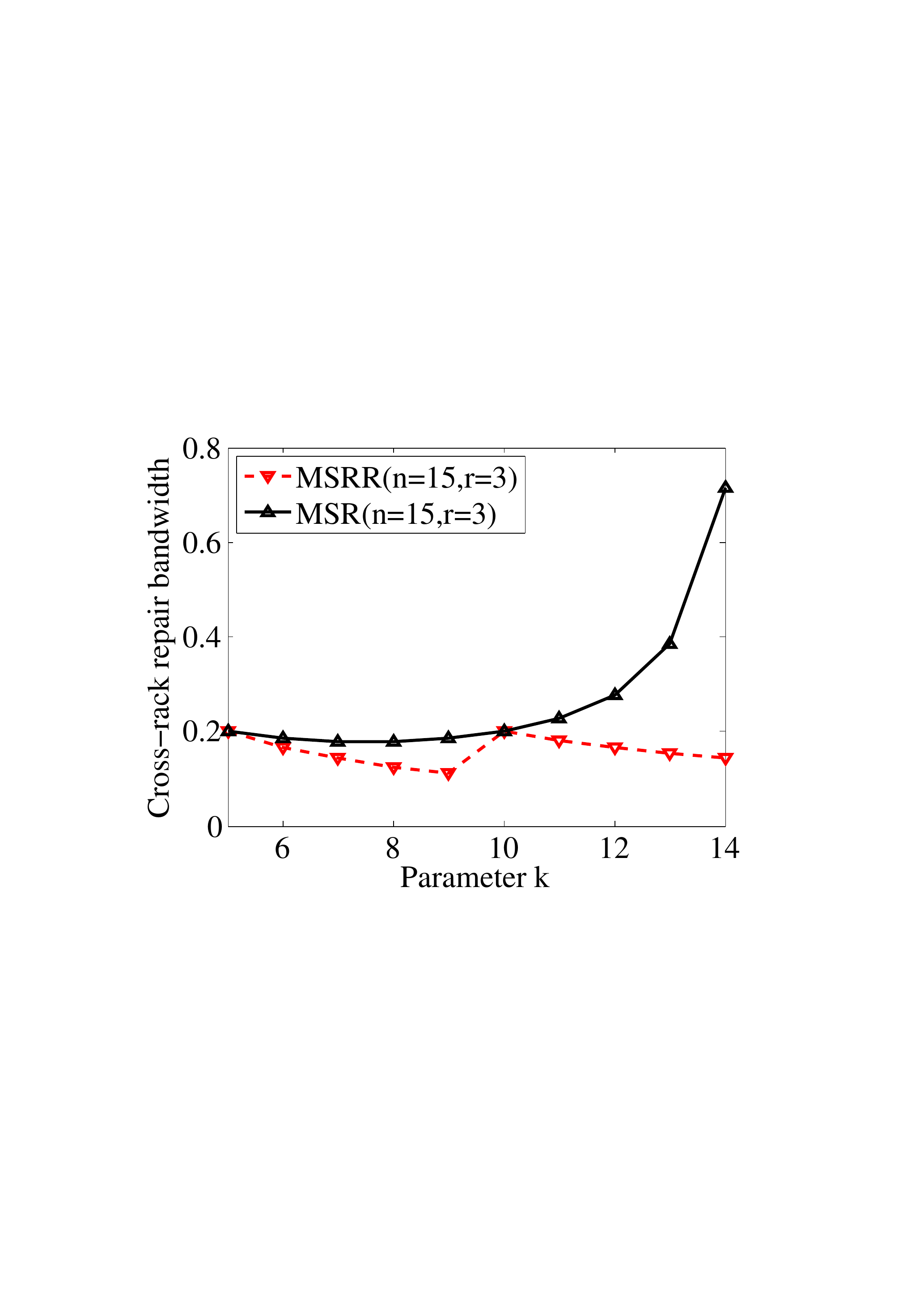} &
\includegraphics[width=0.23\textwidth]{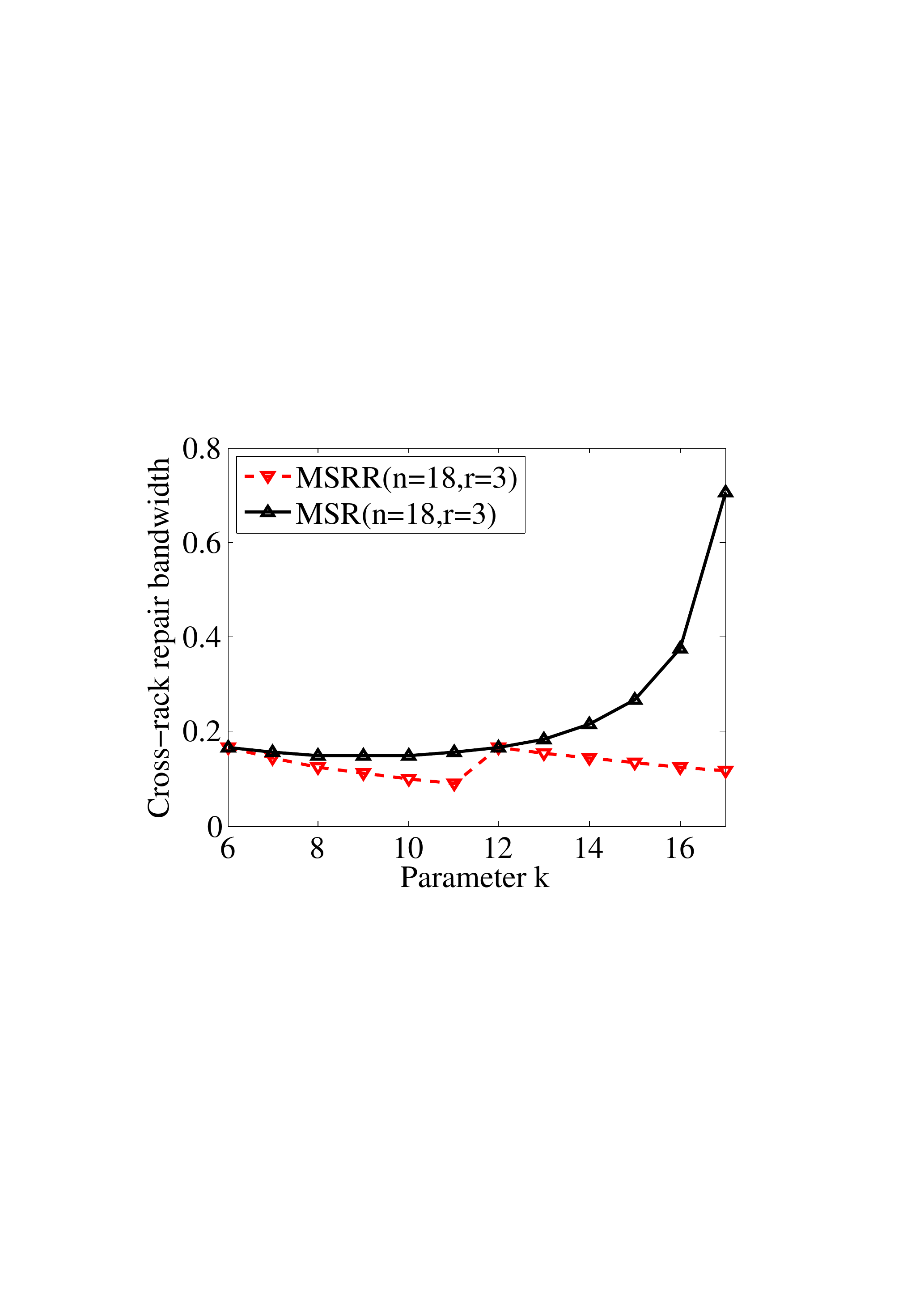}
\vspace{-4pt}\\
{\small (a) $n=15,r=3$} &
{\small (b) $n=18,r=3$} \\
\includegraphics[width=0.23\textwidth]{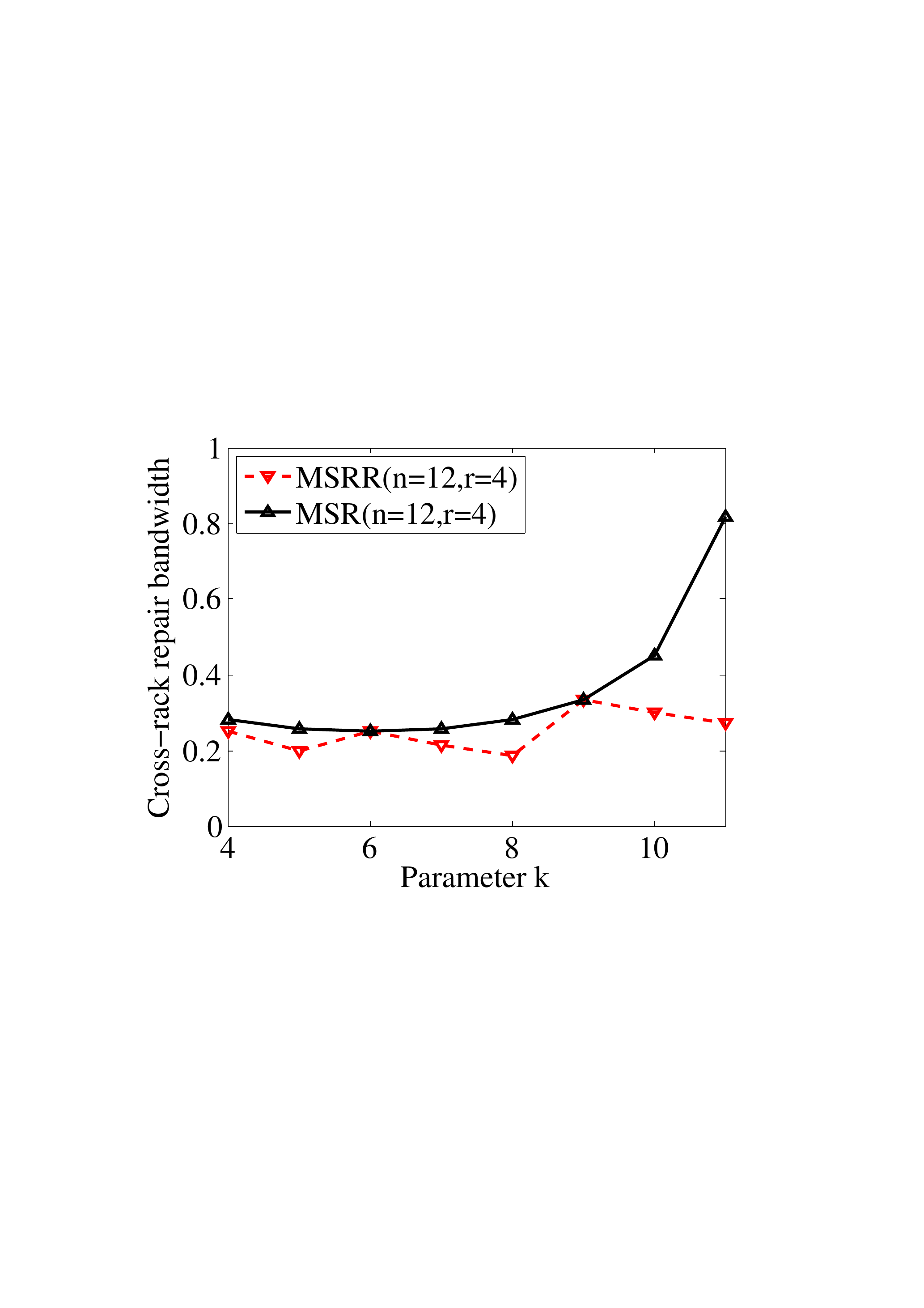} &
\includegraphics[width=0.25\textwidth]{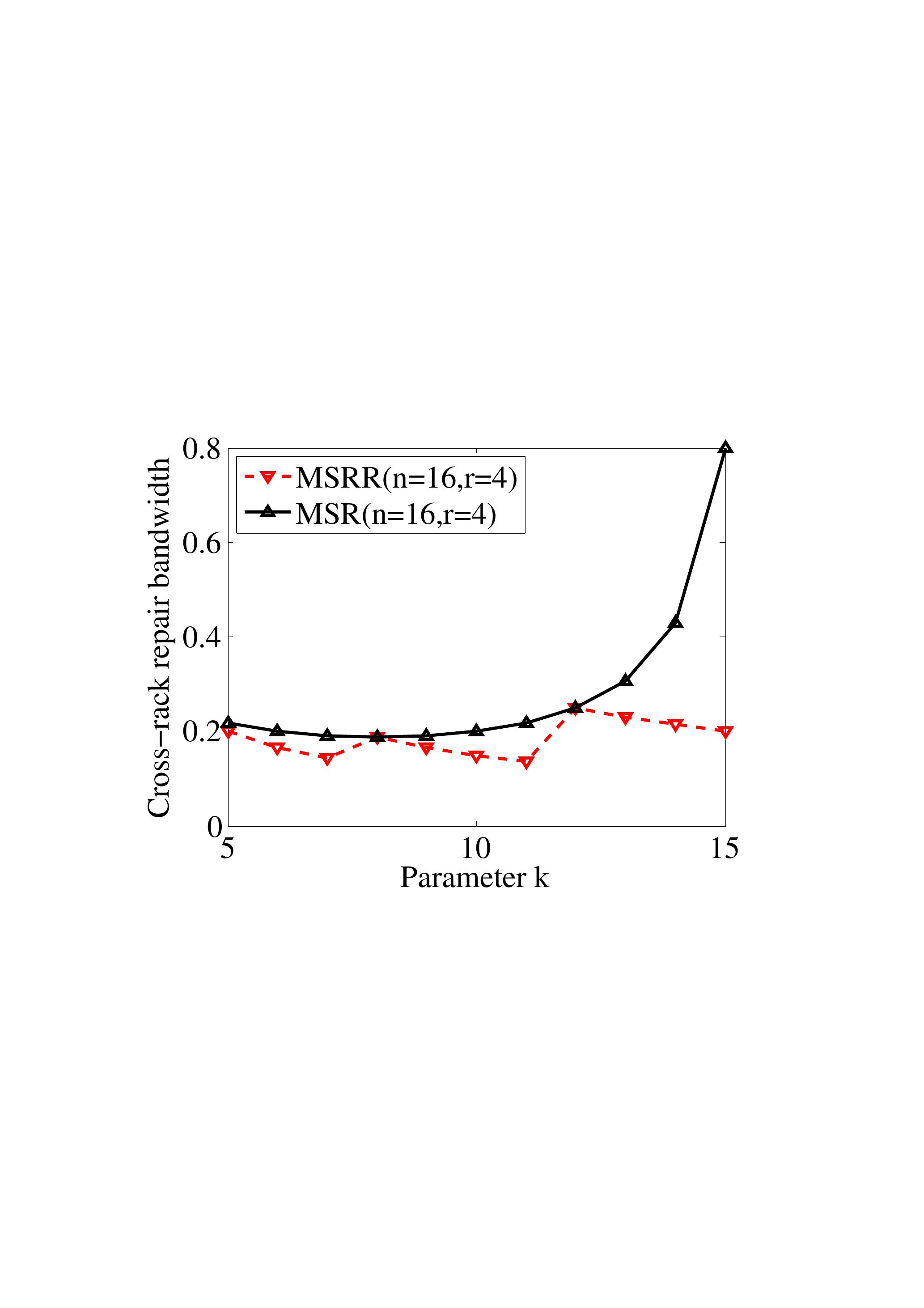}
\vspace{-4pt}\\
{\small (c) $n=12,r=4$} &
{\small (d) $n=16,r=4$}
\end{tabular}
\caption{Cross-rack repair bandwidth of MSRR codes and MSR codes when $r=3,4$.}
\label{fig:crmsr}
\end{figure}

\subsubsection{Comparison of MSRR (resp. MBRR) and MSR (resp. MBR)}
According to Theorem~\ref{thm:cmpr}, the cross-rack repair
bandwidth of MSR codes is the same as that of MSRR codes if $kr/n$ is an
integer. Otherwise, if $kr/n$ is not an integer, the cross-rack repair
bandwidth of MSRR codes is strictly less than that of MSR codes.
Fig.~\ref{fig:crmsr} shows the cross-rack repair bandwidth of MSRR codes and
MSR codes when $B=1$, $r=3,4$ and $d=r-1$. The results demonstrate that hybrid
MSRR codes have strictly less cross-rack repair bandwidth than MSR codes and
this advantage increases with $k$.  For example, MSRR codes have 42\% and 83\%
less cross-rack repair bandwidth than MSR codes when $(n,k,r)=(18,11,3)$
and $(n,k,r)=(18,17,3)$, respectively.

By Theorem~\ref{thm:cmpr}, if $k/n>2/r$ and $kr/n$ is an
integer, then MBRR codes have less cross-rack repair bandwidth than MBR codes.
Therefore, if the code rate is not too low, the cross-rack repair bandwidth of
MBRR codes is strictly less than that of MBR codes.

\begin{figure}[t]
\centering
\begin{tabular}{cc}
\includegraphics[width=0.23\textwidth]{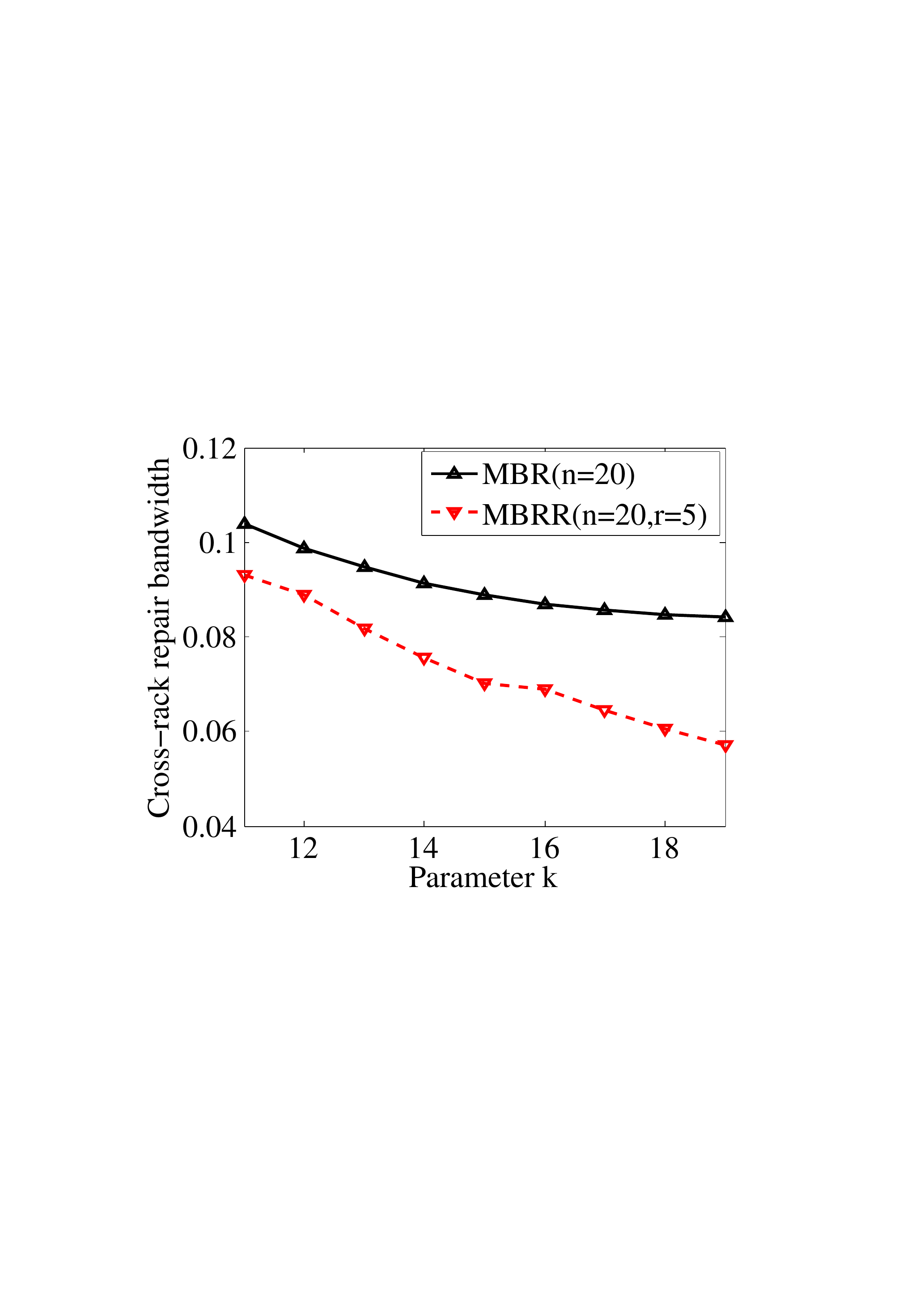} &
\includegraphics[width=0.24\textwidth]{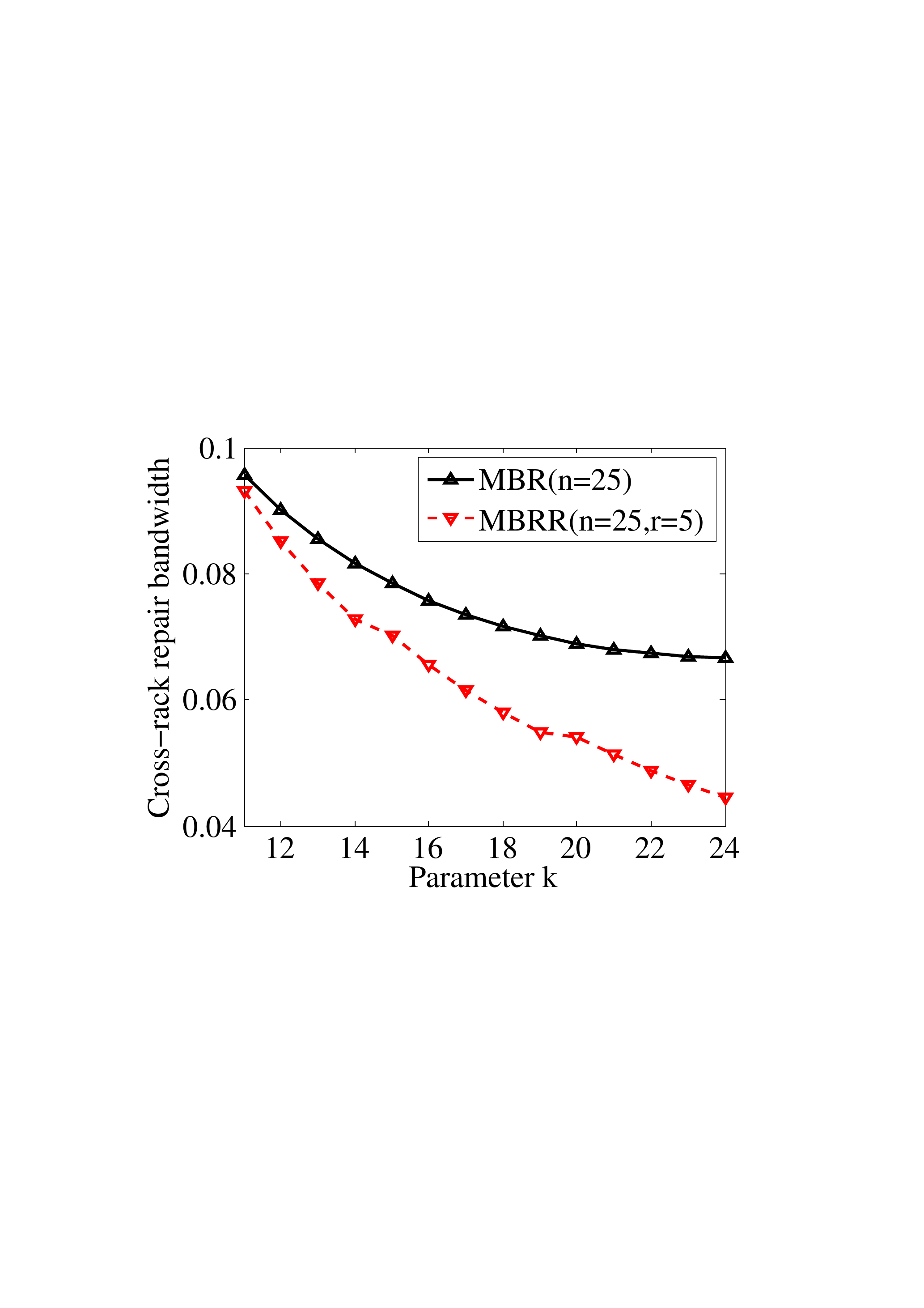}
\vspace{-4pt}\\
{\small (a) $n=20,r=5$} &
{\small (b) $n=25,r=5$} \\
\includegraphics[width=0.23\textwidth]{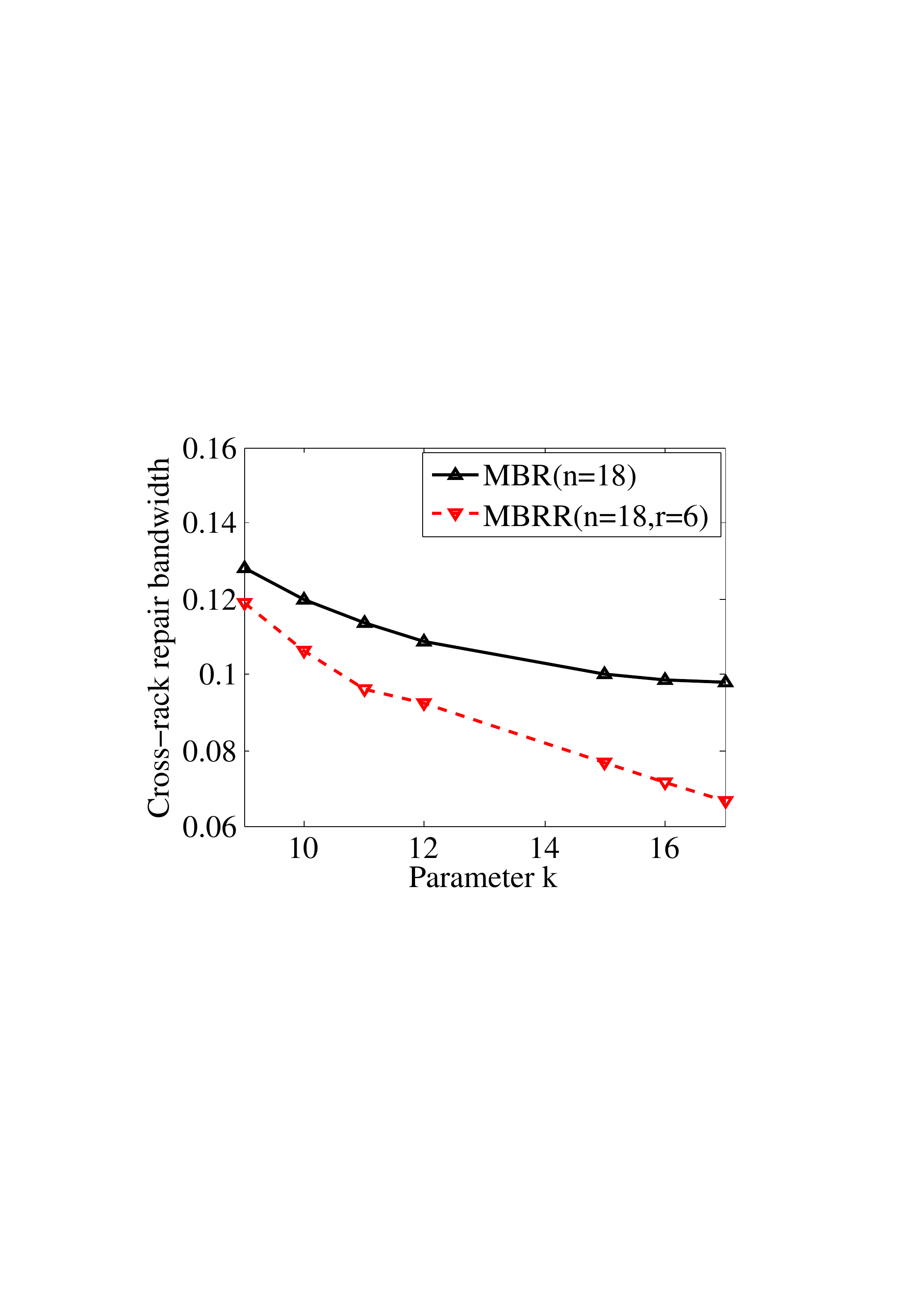} &
\includegraphics[width=0.24\textwidth]{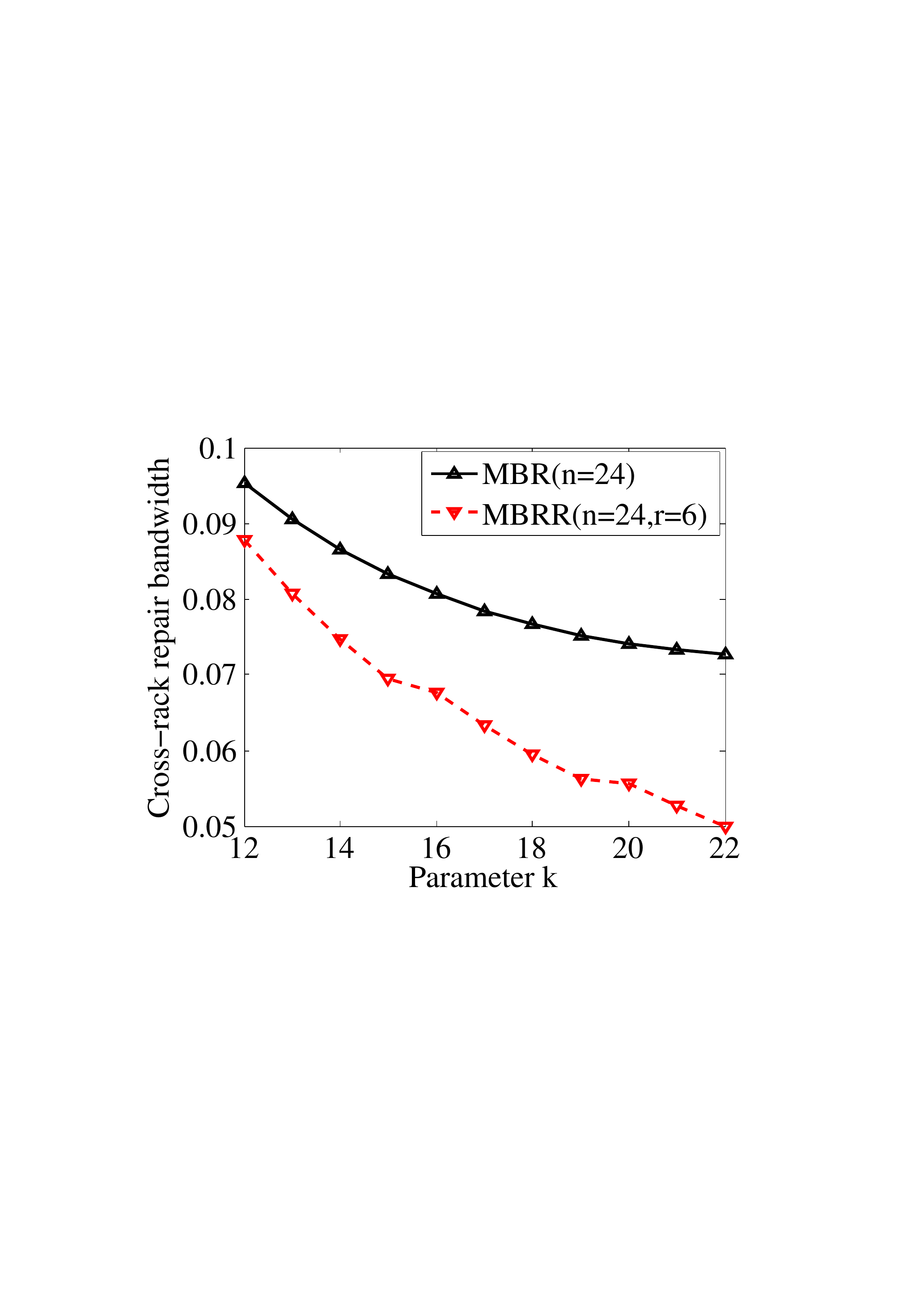}
\vspace{-4pt}\\
{\small (c) $n=18,r=6$} &
{\small (d) $n=24,r=6$}
\end{tabular}
\caption{Cross-rack repair bandwidth of MBRR codes and MBR codes when $r=5,6$.}
\label{fig:crmbr}
\end{figure}

\begin{figure}[t]
\centering
\begin{tabular}{cc}
\includegraphics[width=0.23\textwidth]{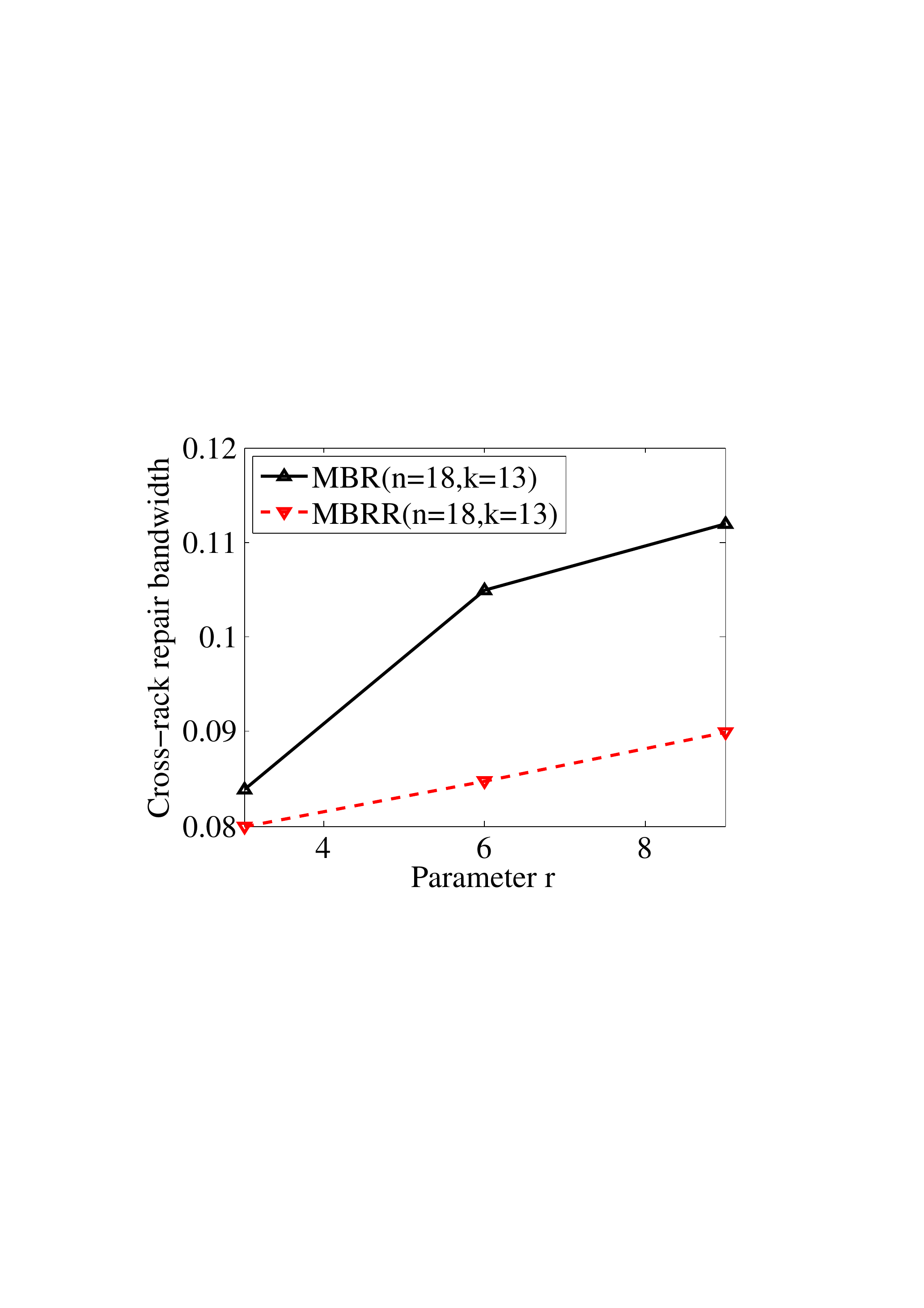} &
\includegraphics[width=0.25\textwidth]{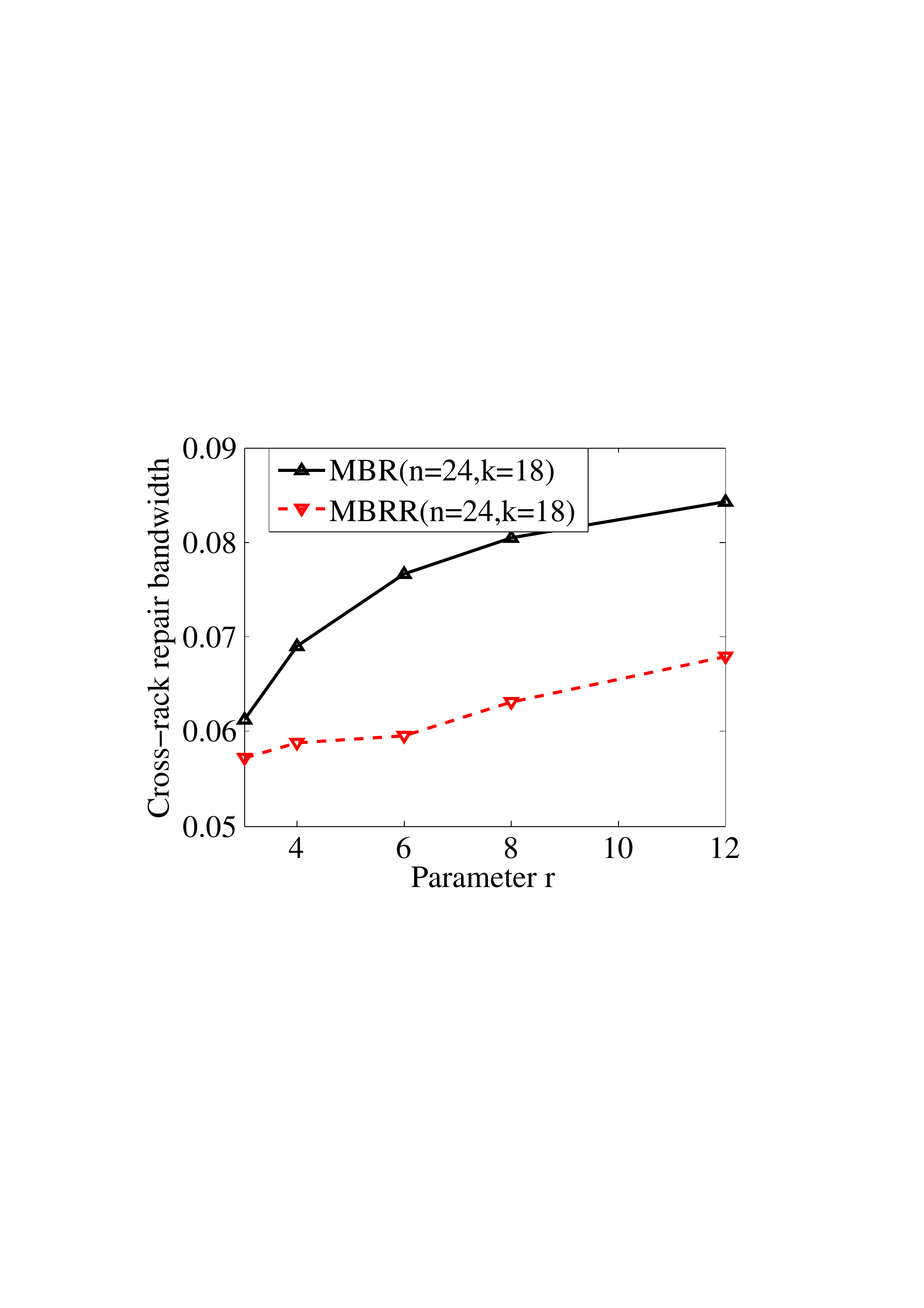}
\vspace{-4pt}\\
{\small (a) $n=18,k=13$} &
{\small (b) $n=24,k=18$}
\end{tabular}
\caption{Cross-rack repair bandwidth of MBRR codes and MBR codes when $n=18,24$.}
\label{fig:crmbr1}
\end{figure}

Fig.~\ref{fig:crmbr} shows the cross-rack repair bandwidth of
two codes when $B=1$, $r=5,6$ and $d=r-1$. The results demonstrate that MBRR
codes have less cross-rack repair bandwidth for all the evaluated parameters.
Given $r$ and $n$, we note that the differences between MBRR codes and MBR
codes become larger when $k$ increases.  When $n=20$ and $r=5$, the reduction
in the cross-rack repair bandwidth of MBRR codes over MBR codes is from 11\%
to 32\%. When $n=18$ and $r=6$, the reduction is from 7\% to 32\%.

Let $B=1$ and $d=r-1$. For a specific case where
$(n,k)=(18,13)$ (resp. $(n,k)=(24,18)$), Fig.~\ref{fig:crmbr1} shows the
cross-rack repair bandwidth of MBR codes and MBRR codes when $r=3,6,9$ (resp.
$r=3,4,6,8,12$). We have two observations from Fig.~\ref{fig:crmbr1}.  First,
the cross-rack repair bandwidth of MBRR codes is always less than that of MBR
codes.  Second, for given $n$ and $k$, the advantage of the lower cross-rack
repair bandwidth of MBRR codes varies significantly for different values of $r$.  For
example, MBRR codes have 6.8\% reduction of cross-rack repair bandwidth
compared to MBR codes when $(n,k,r)=(24,18,3)$, while the reduction increases
to 21.6\% when $(n,k,r)=(24,18,8)$.

\subsubsection{Comparison of MSRR (resp. MBRR) and Minimum Storage (resp. Bandwidth)
Point of Codes in \cite{prakash2017,sohn2016}}
Two nearest related works are \cite{prakash2017,sohn2016}.
In our model, any $k$ nodes are sufficient to reconstruct the data file, while
in \cite{prakash2017}, any $kr/n$ racks (where $kr/n$ is an integer) can
reconstruct the data file and there may exist $k$ nodes that cannot
reconstruct the data file.
In \cite{prakash2017}, a failed node is recovered by downloading $\alpha$
symbols from each of $\ell$ other nodes in the host rack, and $\beta$ symbols
from each of $d$ other racks. The $\beta$ symbols downloaded from the remote
racks are linear combinations of all the $\alpha n/r$ symbols stored in the
rack. Under the setting of functional repair, it is shown in Theorem 4.1 in
\cite{prakash2017} that the file size is upper bounded by (note that we should
replace $k$ by $kr/n$ and $m$ by $n/r$ in (11) in \cite{prakash2017}):
\[
B\leq \ell \frac{kr}{n}\alpha +(n/r-\ell)\sum_{i=0}^{kr/n-1}\min\{\alpha,\max\{d-i,0\}\beta\}.
\]
If we download all $\alpha(n/r-1)$ symbols in the host rack to repair the failed node, i.e., $\ell=n/r-1$, then the above upper bound is
\begin{align*}
B&\leq (n/r-1) \frac{kr}{n}\alpha +\sum_{i=0}^{kr/n-1}\min\{\alpha,\max\{d-i,0\}\beta\}\\
&=k\alpha +\sum_{i=0}^{kr/n-1}\min\{0,\max\{d-i,0\}\beta-\alpha\}.
\end{align*}
We assume that $d\geq kr/n$. Then
$\max\{d-i,0\}\beta-\alpha=(d-i)\beta-\alpha$, and the above bound is the same
as the bound in~\eqref{thm1} in our Theorem~1 when $kr/n$ is an integer.
Therefore, the cross-rack repair bandwidth of the codes in \cite{prakash2017}
is equal to that of our RRC. According to the remark before
Theorem~\ref{thm:rc}, MSRR$(n,k+i,r)$ codes (resp. MBRR$(n,k+i,r)$ codes)
have strictly less cross-rack repair bandwidth than MSRR$(n,k,r)$
codes (resp. MBRR$(n,k,r)$ codes) for $i=1,2,\ldots,n/r-1$, where $kr/n$ is an
integer. We thus obtain that MSRR$(n,k+i,r)$ codes (resp. MBRR$(n,k+i,r)$
codes) have strictly less cross-rack repair bandwidth than the minimum storage
(resp. bandwidth) $(n,k,r)$ codes in \cite{prakash2017} for
$i=1,2,\ldots,n/r-1$, where $kr/n$ is an integer.
Note that any $k$ nodes can reconstruct the data file in our model, while not
in \cite{prakash2017}. However, the bound of our model is the same as that
in \cite{prakash2017} under the same setting. We have the following observation
from our results and the results in \cite{prakash2017}. All $\alpha(n/r-1)$
symbols in the host rack are necessary to obtain the minimum cross-rack repair
bandwidth.  If we reduce the intra-rack repair bandwidth, i.e., reduce $\ell$,
then the cross-rack repair bandwidth will be increased.

Let $\beta_I$ and $\beta_c$ be the numbers of symbols
downloaded from a helper node in the host rack and the other racks,
respectively. Let $\epsilon=\beta_c /\beta_I$. It is shown in Theorem 3 in
\cite{sohn2016} that the minimum storage overhead, i.e., $\alpha=B/k$, is
achieved if and only if $\epsilon \geq 1/(n-k)$. Therefore, $\epsilon =
1/(n-k)$ is the scenario with minimum cross-rack repair bandwidth when the
minimum storage overhead is imposed.  When $\epsilon=1/(n-k)$, the minimum
storage point of the codes in \cite{sohn2016} is
\[
(\alpha_{\mathrm{MSR}},\gamma_{\mathrm{MSR}})=(\frac{B}{k},\frac{B}{k}\frac{n-n/r}{n-k}),
\]
where $\gamma_{\mathrm{MSR}}$ is the cross-rack repair bandwidth and all $n-1$ surviving nodes are contacted.
We have three observations. First, all $\alpha$ symbols in the node of host rack should be
downloaded to minimize the cross-rack repair bandwidth in the repair. Second, the cross-rack
repair bandwidth of the minimum storage point of the codes in \cite{sohn2016} is the same as
that of original MSR codes for all parameters under the same setting when $\epsilon = 1/(n-k)$.
Third, the cross-rack repair bandwidth of our MSRR codes is strictly
less than that of the minimum storage point of the codes in \cite{sohn2016} when $t\neq 0$. On the other
hand, if $t=0$, i.e., $kr/n$ is an integer, the cross-rack repair bandwidth of MSRR codes
 is equal to that of the minimum storage point of the codes in \cite{sohn2016}.

Consider the cross-rack repair bandwidth of the minimum bandwidth point of
the codes in \cite{sohn2016}. For $\epsilon>0$, set $\beta_c=1$, then $\beta_I=1/\epsilon$.
The minimum bandwidth point of the codes in \cite{sohn2016} is
\[
(\alpha_{\mathrm{MBR}},\gamma_{\mathrm{MBR}})=((n/r-1)/\epsilon +(n-n/r),(n-n/r)),
\]
where $\gamma_{\mathrm{MBR}}$ is the cross-rack repair bandwidth, and the file size is
\[
B=k\alpha_{\mathrm{MBR}}-\frac{1}{2}(\frac{1}{\epsilon}-1)(m(n/r)^2+t^2-k)-\frac{k(k-1)}{2},
\]
according to Proposition 3 of \cite{sohn2018}. We observe that the storage increases as
$\epsilon$ decreases. If we decrease $\epsilon$, then the normalized cross-rack
repair bandwidth will be decreased at the expense of increasing the storage. Note that the
storage $\alpha_{\mathrm{MBR}}$ is larger than the cross-rack repair bandwidth $\gamma_{\mathrm{MBR}}$
in the minimum bandwidth point of the codes in \cite{sohn2016},
while the storage is equal to the cross-rack repair bandwidth in our MBRR codes.

\begin{figure}[t]
\centering
\begin{tabular}{cc}
\includegraphics[width=0.24\textwidth]{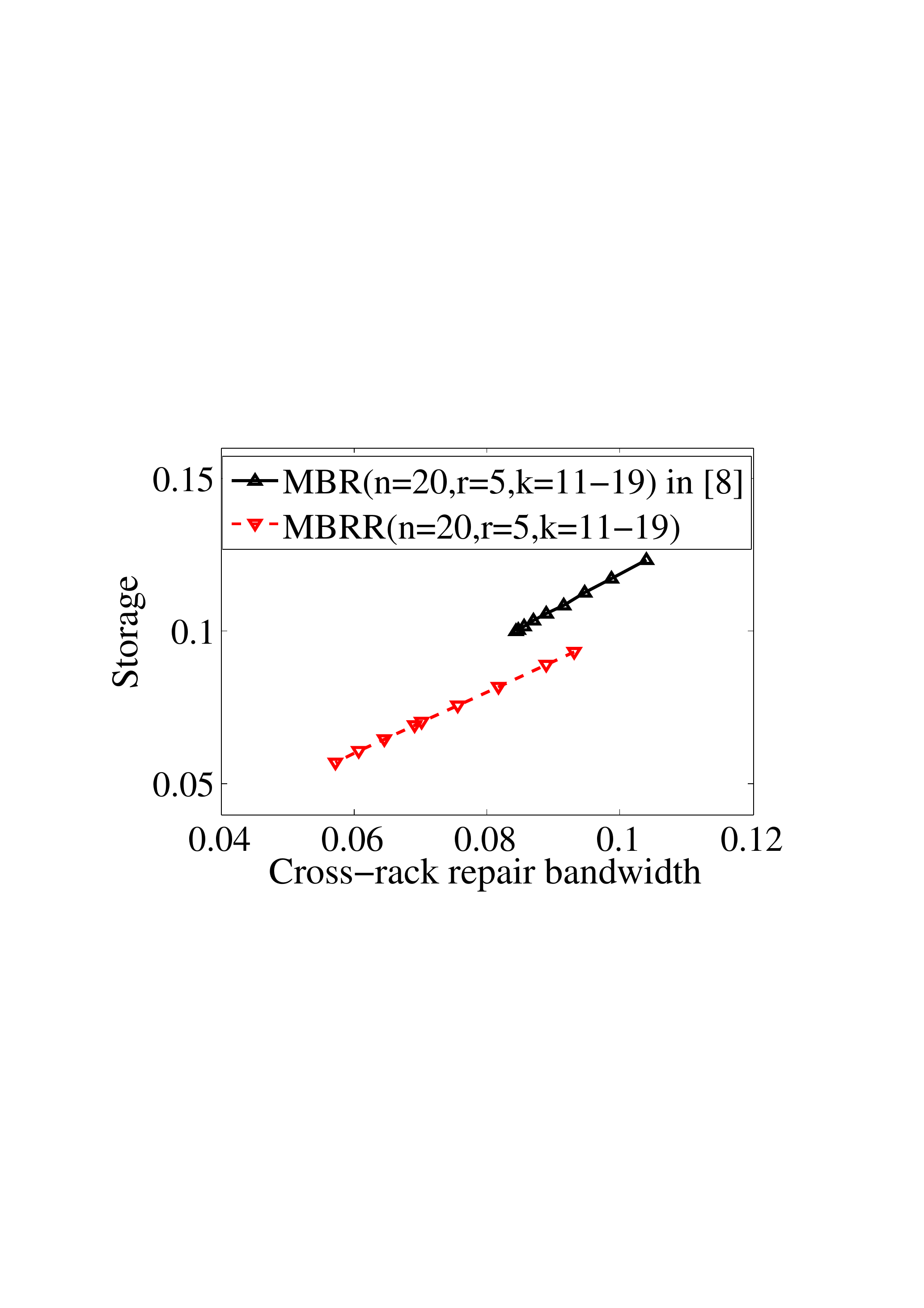} &
\includegraphics[width=0.24\textwidth]{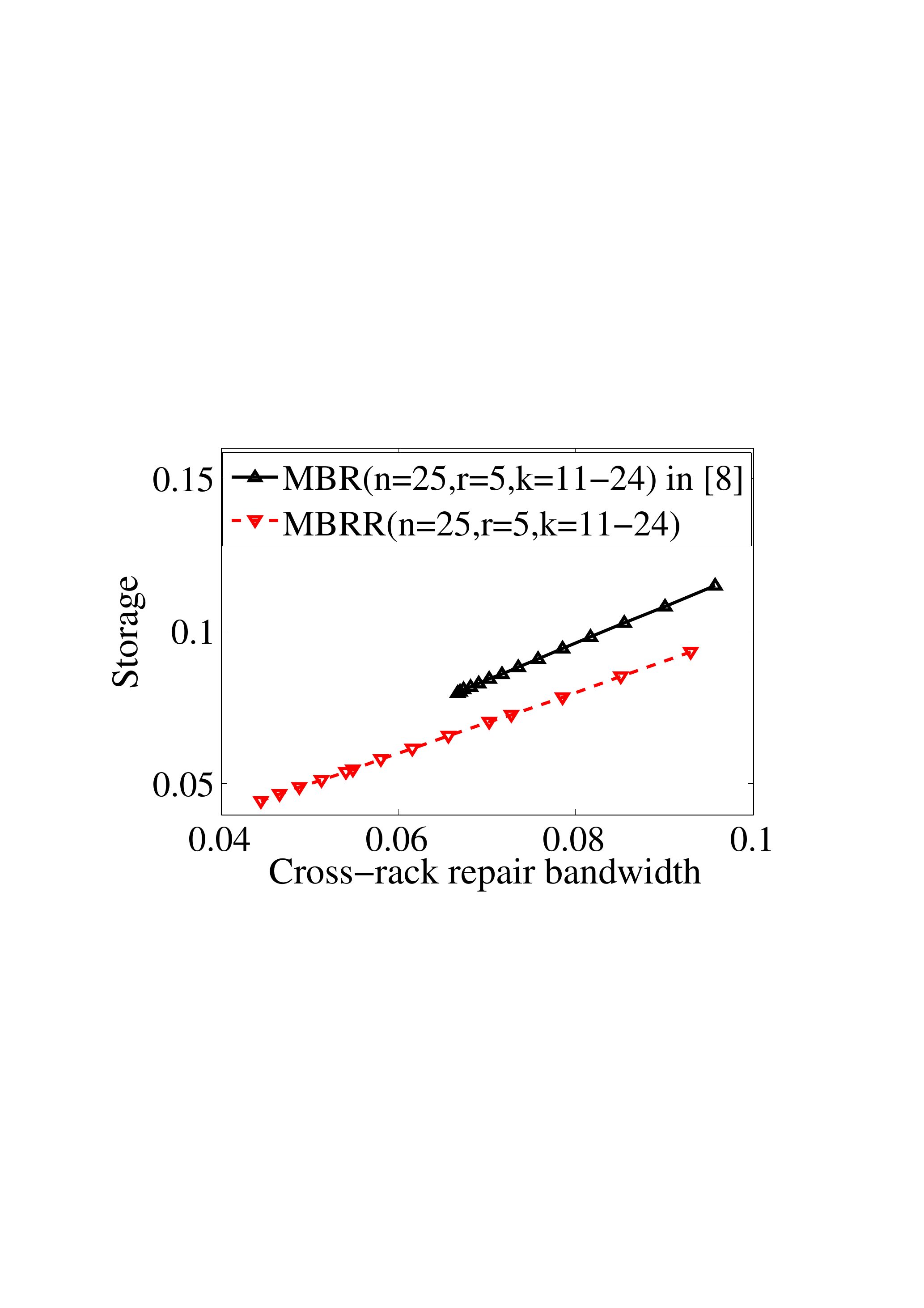}
\vspace{-4pt}\\
{\small (a) $(n,r,k)=(20,5,11-19)$} &
{\small (b) $(n,r,k)=(25,5,11-24)$}
\end{tabular}
\caption{The trade-off between storage and cross-rack repair bandwidth of MBRR codes
and the minimum bandwidth point of the codes in \cite{sohn2016} when $n=20,25$.}
\label{fig:s-crmbr1}
\end{figure}



Let $\epsilon=1$. Fig.~\ref{fig:s-crmbr1} shows the trade-off between
storage and cross-rack repair bandwidth when $B=1$, $n=20,25$, $r=5$ and $d=4$. We can observe from
Fig.~\ref{fig:s-crmbr1} that both the storage and cross-rack repair bandwidth of
MBRR codes is less than that of the minimum bandwidth point of the codes
in \cite{sohn2016} for all the evaluated parameters.

In conclusion, the cross-rack repair bandwidth of our RRC is strictly less
than that of the codes in \cite{sohn2016} for most of the parameters, and is
the same as that of the codes in \cite{prakash2017} if $kr/n$ is an integer.
Also, RRC can tolerate more failure patterns than the codes in
\cite{prakash2017}. When $kr/n$ is an integer, hybrid MSRR$(n,k+i,r)$ codes
for $i=1,2,\ldots,n/r-1$ have less cross-rack repair bandwidth than MSR$(n,k,r)$
codes and the minimum storage $(n,k,r)$ codes in \cite{prakash2017}.

\subsection{Parameters of Exact Repair MSRR Codes and MBRR Codes}

\label{sec:parameter}
\begin{figure}[t]
\centering
\begin{tabular}{cc}
\includegraphics[width=0.23\textwidth]{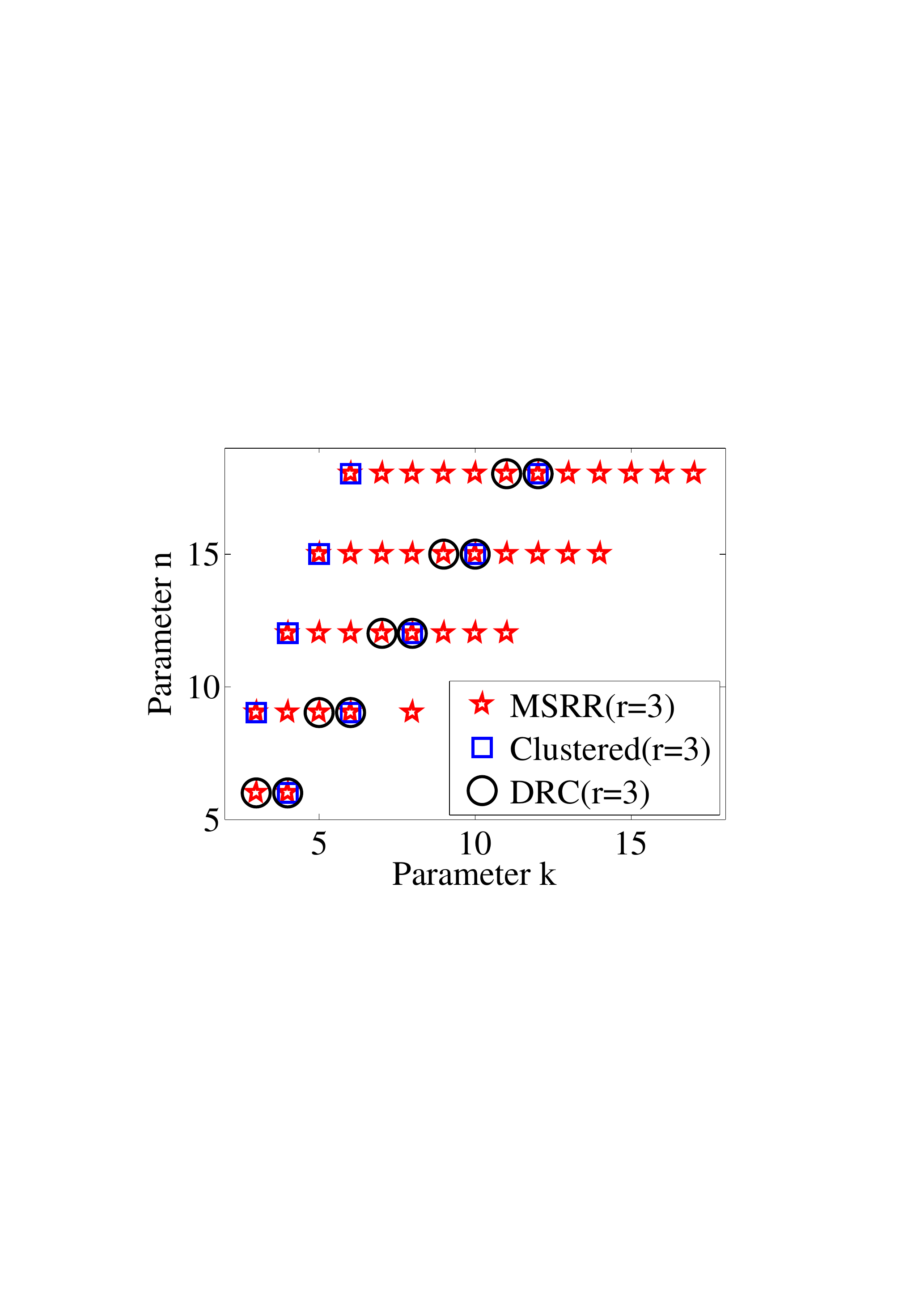} &
\includegraphics[width=0.23\textwidth]{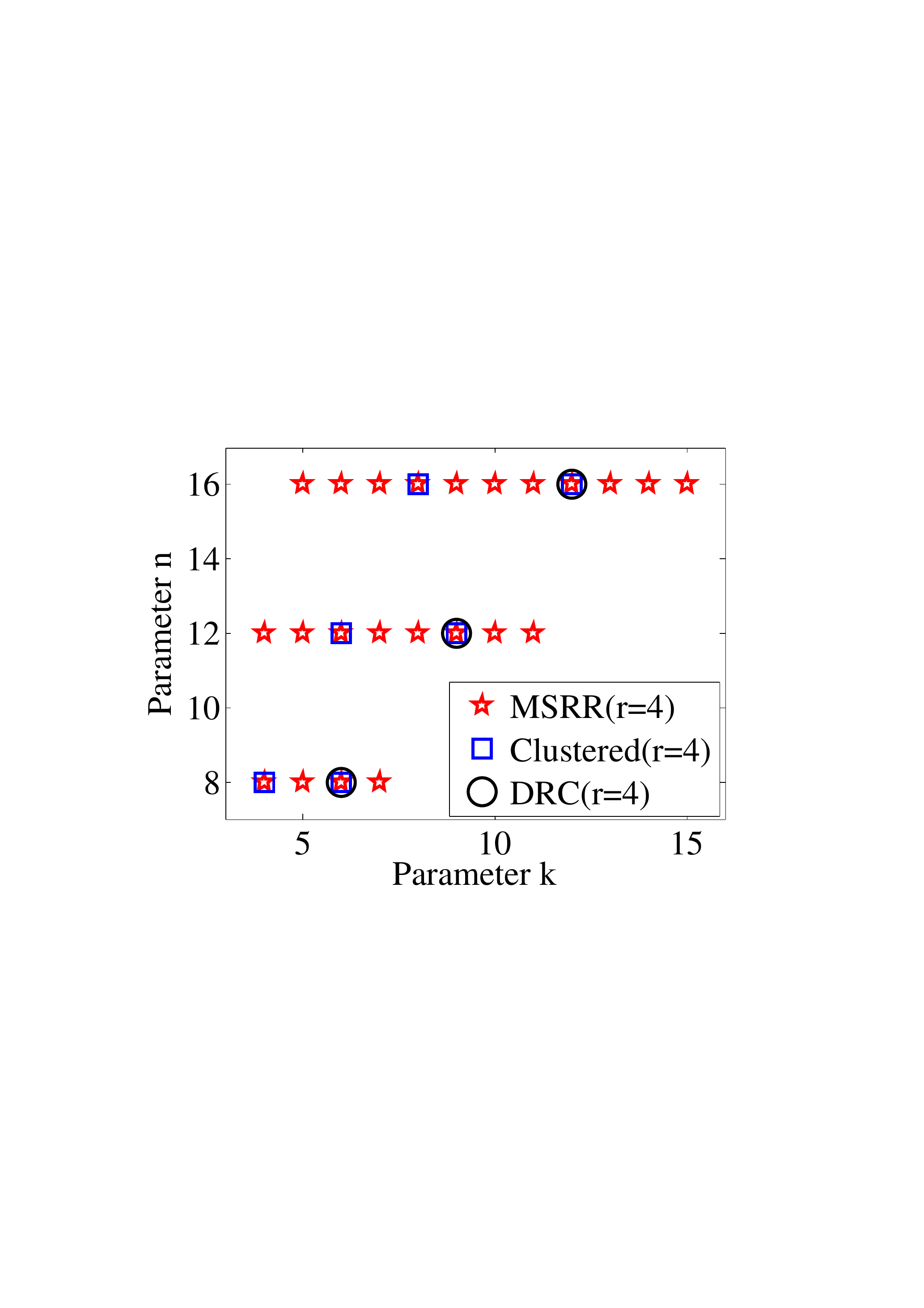}
\vspace{-4pt}\\
{\small (a) $r=3$} &
{\small (b) $r=4$} \\
\includegraphics[width=0.23\textwidth]{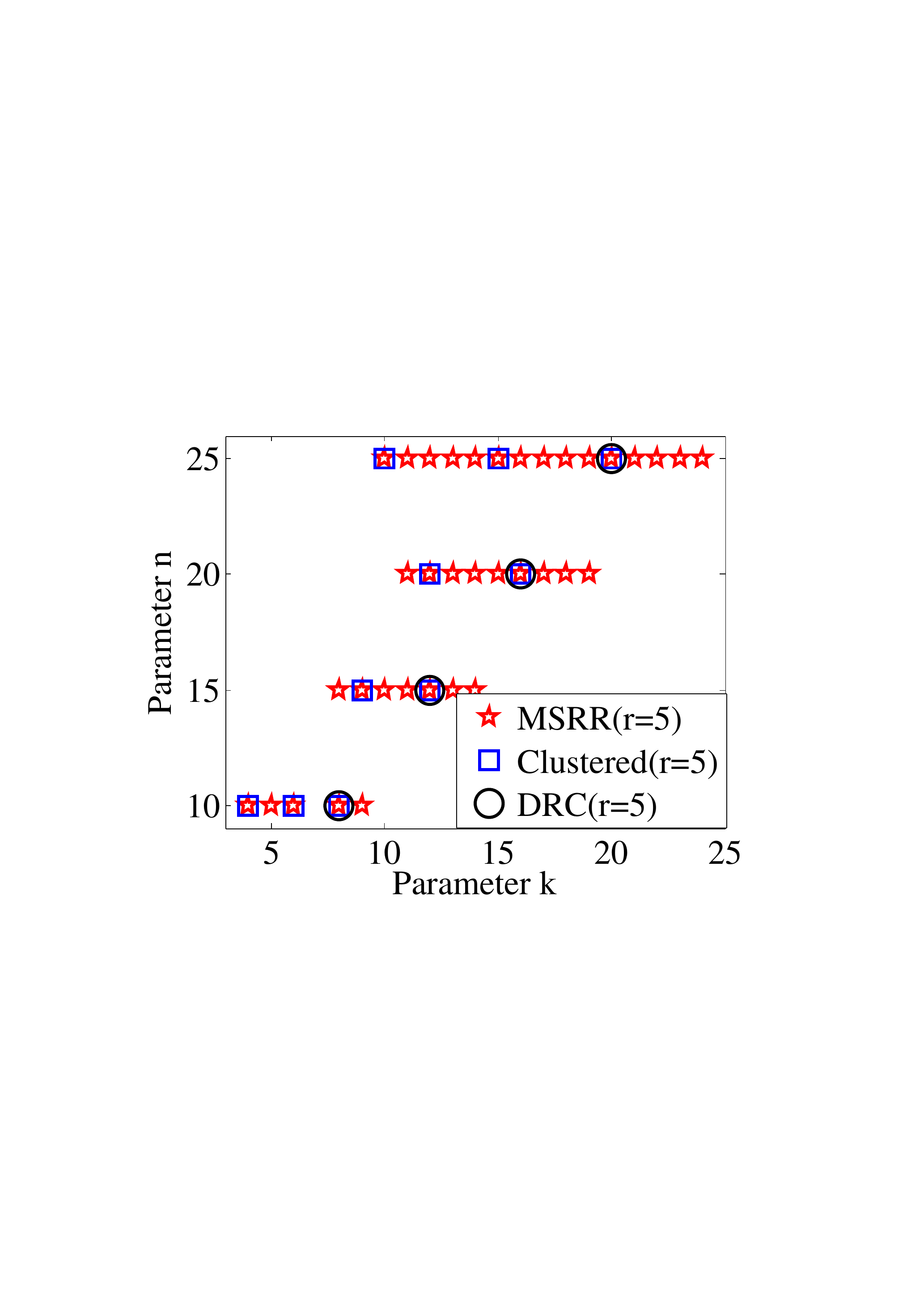} &
\includegraphics[width=0.23\textwidth]{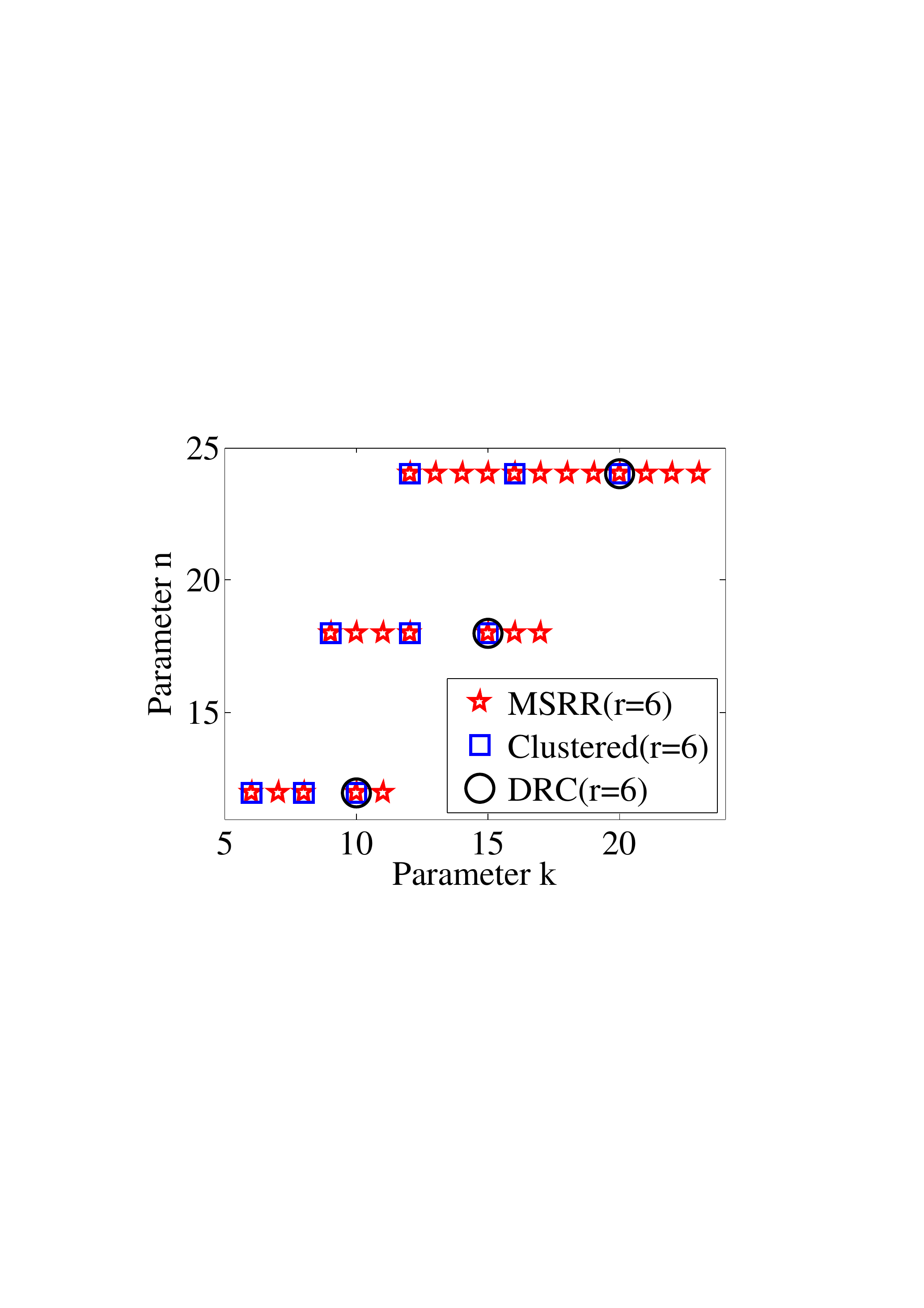}
\vspace{-4pt}\\
{\small (c) $r=5$} &
{\small (d) $r=6$}
\end{tabular}
\caption{Supported parameters of MSRR codes, clustered codes in \cite{prakash2017} and DRC when $r=3,4,5,6$.}
\label{fig:r3}
\end{figure}

We now present numerical evaluation of supported parameters of exact-repair
construction for two extreme points of RRC, the codes in
\cite{prakash2017,sohn2016} and DRC in \cite{hu2017}. The first construction
of DRC in \cite{hu2017} can be viewed as a special case of our construction of
MSRR codes in Section~\ref{sec:alpha1} with $n/(n-k)$ being an integer and
$d=r-1$, and the second construction of DRC in \cite{hu2017} only focuses on
the case of $r=3$. In the construction of the codes in \cite{prakash2017},
$kr/n$ should be an integer. The construction of the minimum
storage codes in \cite{sohn2016} is given in \cite{sohn2018a} and it can only
support $r=2$ and $n=2k$. When $r=3,4,5,6$ and $n$ takes different values,
the supported values of $k$ for MSRR codes, clustered codes in
\cite{prakash2017} and DRC are shown in Fig.~\ref{fig:r3}. The results show
that the supported parameters of MSRR codes are much more than those of the
two codes.

Note that the exact-repair construction of clustered codes in
\cite{prakash2017} is based on the existing constructions of MSR codes.  When
$k/n<0.5$, there is a limitation that the storage $\alpha$ is exponential to
$k$ for the existing constructions of MSR codes and the minimum storage
construction of clustered codes in \cite{prakash2017}. However, our
construction of MSRR codes does not have this limitation.

Our construction of MBRR codes can support all the
parameters.  The construction of the minimum bandwidth codes in
\cite{sohn2016} is given in \cite{sohn2018} and it can also support all the
parameters. On the other hand, $kr/n$ should be an integer in the construction
of the minimum bandwidth codes in \cite{prakash2017}. Therefore, our
construction can support more parameters than that in \cite{prakash2017}.
\section{Conclusions and Future Work}
\label{sec:conclu}

In this paper, we study the optimal trade-off between storage and cross-rack
repair bandwidth of rack-based data centers. We propose Rack-aware
Regenerating Codes (RRC) that can achieve the optimal trade-off. We derive
two extreme optimal points, namely the MSRR and MBRR points, and give exact-repair constructions of MSRR codes and MBRR codes.  We show that the
cross-rack repair bandwidth of MSRR codes (resp. MBRR codes) is strictly
less than that of MSR codes (resp. MBR codes) for most of the parameters.
In our system model, all the symbols in the host rack are
downloaded to repair a failed node. One future work is to generalize the
results for more flexible selection of helper nodes in the host rack.  Another
future work is the implementation of RRC in practical rack-based data
centers.

\appendices
\section{Proof of Theorem \ref{thm1}}
\label{pr:thm1}
\begin{proof}
First, we show the following lemma.
\begin{lemma}
If a relayer in a rack is connected to the data collector $\sT$ and not all
the other $n/r-1$ nodes in the rack are connected to $\sT$, then the capacity
of $(\sS,\sT)$-{\em cut} is not the smallest.
\end{lemma}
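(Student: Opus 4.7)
The plan is to take any information flow graph $G$ satisfying the hypothesis of the lemma and exhibit another admissible graph $G'$ with strictly smaller min-cut, thereby proving that $G$'s min-cut is not the smallest. Let $\sOut_{h,1}$ be the relayer joined to $\sT$ and let $j\in\{2,\ldots,n/r\}$ be an index for which $\sOut_{h,j}$ is not joined to $\sT$; such a $j$ exists by the hypothesis. I will form $G'$ by deleting the edge $\sOut_{h,1}\to\sT$ and inserting the edge $\sOut_{h,j}\to\sT$, leaving the rest of $G$, and in particular the count $k$ of edges entering $\sT$, unchanged.

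First I would pin down the forced structure of any finite-capacity $(\sS,\sT)$-cut in $G$ with sides $V_\sS$ and $V_\sT$. The infinite-capacity edge $\sOut_{h,1}\to\sT$ forces $\sOut_{h,1}\in V_\sT$. The infinite-capacity edges $\sOut_{h,i}\to\sOut_{h,1}$ for $i=2,\ldots,n/r$ then force every $\sOut_{h,i}$ into $V_\sT$ as well, since otherwise an infinite-capacity edge would cross the cut. Because each $\sIn_{h,i}$ is pinned into $V_\sS$ by the infinite edge from $\sS$, all $n/r$ edges $\sIn_{h,i}\to\sOut_{h,i}$ of capacity $\alpha$ must cross the cut, contributing exactly $(n/r)\alpha$ to the cut capacity from rack $h$.

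Next I would produce a cut in $G'$ with strictly smaller capacity. Taking a min-cut of $G$ as the starting point, I reassign $\sOut_{h,1}$ and every $\sOut_{h,i}$ with $i\neq j$ that is not directly joined to $\sT$ from $V_\sT$ to $V_\sS$; this reassignment is legal in $G'$ since $\sOut_{h,1}$ is no longer joined to $\sT$. The contribution of rack $h$ then shrinks from $(n/r)\alpha$ to at most $(|A\cap\mathrm{rack}\,h|+1)\alpha$, where $A$ is the set of the other $k-1$ out-vertices joined to $\sT$; by hypothesis $|A\cap\mathrm{rack}\,h|\leq n/r-2$, so at least $\alpha$ is saved.

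The main obstacle is to verify that this reassignment does not generate offsetting crossings elsewhere in the graph. Moving $\sOut_{h,1}$ into $V_\sS$ could, in principle, cause its outgoing capacity-$\beta$ repair edges $\sOut_{h,1}\to\sIn'_{*,f}$, arising from repair events in which rack $h$ acts as a helper, and its infinite-capacity edges into later incarnations of rack $h$, to cross the cut. I would handle this by propagating the reassignment forward in the temporal order of repairs, moving each affected $\sIn'$ and its descendants into $V_\sS$ simultaneously with $\sOut_{h,1}$; an induction on repair events shows that the only way this propagation can halt while remaining legal is at an out-vertex directly joined to $\sT$, and by the choice of $j$ every such vertex already lay in $V_\sT$ in the original cut and therefore requires no change. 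The net change in the cut capacity is therefore strictly negative, yielding $\mathrm{mincut}(G')<\mathrm{mincut}(G)$ as required.
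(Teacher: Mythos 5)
Your analysis of the forced structure of finite cuts is correct and matches the paper's: the infinite terminal edge pins $\sOut_{h,1}$ in $V_\sT$, the infinite intra-rack edges then pin every $\sOut_{h,i}$ in $V_\sT$, and rack $h$ necessarily contributes $(n/r)\alpha$. Your modification, however, is different from the paper's: the paper keeps the relayer connected and argues that the unconnected same-rack nodes can be attached to $\sT$ ``for free'' (they contribute nothing further once the relayer forces all $n/r$ of the rack's $\alpha$-edges into the cut), displacing connections in other racks that do contribute; you instead delete the relayer's terminal edge and substitute a non-relayer. The easy half of your plan is fine --- every finite cut of $G$ is still a cut of $G'$ because $G$ contains the path $\sOut_{h,j}\to\sOut_{h,1}\to\sT$ through infinite edges, so $\mathrm{mincut}(G')\le\mathrm{mincut}(G)$ --- but the lemma needs strictness, and that is where your argument has a genuine gap.

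The gap is in the propagation step. Once $\sOut_{h,1}$ is moved to $V_\sS$, each of its outgoing capacity-$\beta$ repair edges into an $\sIn'$-vertex lying in $V_\sT$ becomes a new crossing edge; if you instead propagate and move such an $\sIn'$ into $V_\sS$, its capacity-$\alpha$ edge into the corresponding $\sOut'$ becomes a new crossing edge whenever that $\sOut'$ is pinned in $V_\sT$ (for instance because it is joined to $\sT$). Your claim that the propagation ``halts at an out-vertex directly joined to $\sT$ \dots\ and therefore requires no change'' misses exactly this point: the pinned vertex does not move, but the edge into it from a newly moved vertex now crosses the cut and adds its capacity. You give no accounting showing that the total of these new crossings is less than the at-least-$\alpha$ saved inside rack $h$. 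If the relayer of rack $h$ served as a helper in several later repairs whose in-vertices sit in $V_\sT$ in the chosen min-cut of $G$, the new crossings amount to $\beta$ per unpropagated repair edge (or up to $\alpha$ per propagated repair), which can exceed the savings --- e.g.\ five helper edges at the MBRR point with $d=2$ give $5\beta>2\beta=\alpha$. To close the argument you would need either to control the repair history and the choice of min-cut, or to follow the paper's route of enlarging rather than shrinking the set of rack-$h$ vertices attached to $\sT$, which sidesteps the reassignment of $\sOut_{h,1}$ entirely.
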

\begin{proof}
Consider that a relayer $X_{1,1}$ is connected to $\sT$. Since the incoming
edges of $\sT$ all have infinite capacity, we only need to examine the
incoming edges of $\sOut_{1,1}$ and $\sIn_{1,1}$. As $X_{1,1}$ is not a failed
node, the incoming edges of $\sOut_{1,1}$ and $\sIn_{1,1}$ have capacity
$\alpha n/r$ and infinite, respectively. So a relayer without failure
contributes $\alpha n/r$ to the cut. On the other hand, if a relayer
$X_{1,1}^{'}$, which is a failed node, is connected to $\sT$, the incoming edges
of $\sOut_{1,1}'$ and $\sIn_{1,1}'$ have capacity $\alpha n/r$ and
$\alpha (n/r-1)+d\beta$, respectively. The node
$X_{1,1}^{'}$ can contribute $\min\{(n/r-1)\alpha+d\beta,\alpha n/r\}$
symbols to the cut. Recall that each of all the other $n/r-1$ nodes in
rack 1 has an edge which connects to the input node and $\sOut_{1,1}$ with
capacity $\alpha$. All the other $n/r-1$ nodes in rack 1 have no contribution
to the cut whether they are connected to $\sT$ or not. Therefore, if a
relayer is connected to $\sT$, we should connect to all the other $n/r-1$ nodes
in the same rack to $\sT$ to minimize the capacity of the cut.
\end{proof}

Next, we show that there exists an information flow graph $G(n,k,r,d,\alpha,\beta)$
such that $\text{mincut}(G)$ is equal to the right value in~\eqref{eq1}. In the graph,
the relayer nodes $X_{1,1},X_{2,1},\ldots,X_{m,1}$ fail in this order. Each new node
$X_{\ell,1}^{'}$ draws $\alpha$ symbols from each of nodes $X_{\ell,2},X_{\ell,3},\ldots,X_{\ell,n/r}$
and $\beta$ symbols from each of the first $d$ relayer nodes, for $\ell=1,2,\ldots,m$.
Consider the data collector $\sT$ that connects to all nodes in the first $m$ racks and
$k-m n/r$ nodes (except the relayer node) in rack $m+1$. Fig.~\ref{model-example} shows
the graph $G(n,k,r,d,\alpha,\beta)$ when $(n,k,r,d)=(9,5,3,2)$. For each
$\ell \in \{1,2,\ldots, m\}$, rack $\ell$ can contribute $\min\{(n/r-1)\alpha+(d-\ell+1)\beta,n/r\cdot \alpha\}$
to the cut. Therefore, $\text{mincut}(G)$ is the right side in~\eqref{eq1}.

In the following, we show that~\eqref{eq1} must be satisfied for any information flow
graph $G(n,k,r,d,\alpha, \beta)$. Consider that $\sT$ connects to $k$ ``out-vertices'',
which are represented by $\{\sOut_{h,i}:(h,i)\in \mathbb{I}\}$, the cardinality of
$\mathbb{I}$ is $k$. We want to show that the smallest $\text{mincut}(G)$ is at least the right value in~\eqref{eq1}.

Without loss of generality, $\sOut_{h_1,i_1},\ldots,\sOut_{h_{n/r},i_{n/r}}$ are
assumed to be the first $n/r$ out-vertex in the cut. If there is only one vertex
$\sOut_{h_{\ell},i_{\ell}}$ that is a relayer for $i_{\ell}\in\{1,2,\ldots,n/r\}$,
then it can contribute $\min\{(n/r-1)\alpha+d\beta,n/r\cdot \alpha\}$ to the cut and
we select $n/r-1$ vertices to be located in the same rack that have no contribution
to the cut. If the number of relayer is larger than 1, then the contribution is larger
than $\min\{(n/r-1)\alpha+d\beta,n/r\cdot \alpha\}$. If all the vertices
$\sOut_{h_1,i_1},\ldots,\sOut_{h_{n/r},i_{n/r}}$ are not relayers, then they can
contribute $\alpha n/r$ to the cut. Therefore,  the $n/r$ vertices contribute at
least $\min\{(n/r-1)\alpha+d\beta,n/r\cdot \alpha\}$ to the cut.

Now, we assume $\sOut_{h_{n/r+1},i_{n/r+1}},\ldots,\sOut_{h_{2n/r},i_{2n/r}}$ are
the second $n/r$ out-vertices. Similar the above discussion, we have that those
$n/r$ nodes contribute at least $\min\{(n/r-1)\alpha+(d-1)\beta,n/r\cdot \alpha\}$
to the cut. By the same arguments for the $\ell$-th $n/r$ vertices for
$\ell=3,4,\ldots,m$ and the last $k-m n/r$ vertices, we will have that a min-cut
for any information flow graph $G(n,k,r,d,\alpha,\beta)$ is exactly the right value in~\eqref{eq1}.
\end{proof}

\section{Proof of Theorem~\ref{thm:opt-tradeoff}}
\label{pr:opt-tradeoff}
\begin{proof}
We need to solve for $\alpha^{*}(\beta)$ as follows,
\begin{align*}
&\alpha^{*}(n,k,r,\beta)\triangleq \min \alpha \\
&\text{subject to: } k\alpha+\sum_{\ell=1}^{m}\min\{(d-\ell+1)\beta-\alpha,0\}\geq B.
\end{align*}
If $\alpha \leq (d-m+1)\beta$, then we have $k\alpha\geq B$ and $\alpha^{*}(\beta)=B/k$.
If $\alpha \geq d\beta$, we have
\[
k\alpha+(d\beta-\alpha)+((d-1)\beta-\alpha)+\cdots+((d-m+1)\beta-\alpha)\geq B ,
\]
and
\[
\alpha^{*}(\beta)=\frac{Bd}{(k-m)d+m(d-\frac{m-1}{2})}.
\]
For $i=1,2,\ldots,m-1$, if $(d-m+i+1)\beta<\alpha\leq (d-m+i+2)\beta$, then
the capacity is
\begin{align*}
&k\alpha+((d-m+1)\beta-\alpha)+((d-m+2)\beta-\alpha)+\cdots+\\
&((d-m+i+1)\beta-\alpha) \\
=&(k-i-1)\alpha+(i+1)(d-m+i/2+1)\beta.
\end{align*}
The relation of the smallest capacity $\Phi$ and $\alpha$ is as follows
\begin{equation}
\Phi= \left\{ \begin{array}{rcl}
k\alpha,    & & \alpha\in [0, b_0],  \\
(k-1)\alpha+b_0,   &  & \alpha\in (b_0,b_1], \\
\vdots   & & \vdots \\
 \sum_{j=0}^{m-2}b_j+(k-m+1)\alpha,   & & \alpha\in (b_{m-2},b_{m-1}],\\
 \sum_{j=0}^{m-1}b_j+(k-m)\alpha,    & & \alpha\in (b_{m-1},\infty],
\end{array}\right.
\end{equation}
where
\begin{equation}
b_i=\beta(d-m+i+1),
\label{eq:bi}
\end{equation}
for $i=0,1,\ldots,m-1$.
Recall that $\Phi\geq B$, and we can solve for $\alpha^{*}(\beta)$, which is
\begin{equation}
\left\{
\begin{array}{ll}
\frac{B}{k}, &\text{ }  B \in [0,kb_0 ]\\
\frac{B-b_0}{k-1}, &\text{ } B\in(kb_0,b_0+(k-1)b_1]\\
\vdots  \\
\frac{B-\sum_{j=0}^{m-2}b_j}{k-m+1}, &\text{ } B\in(\sum_{j=0}^{m-2}b_j+(k-m+1)b_{m-2}\\
&,\sum_{j=0}^{m-1}b_j+(k-m)b_{m-1}].
\end{array}
\right.
\label{eq4}
\end{equation}
Therefore, if
\[
B\in (\sum_{j=0}^{i-1}b_j+(k-i)b_{i-1},\sum_{j=0}^{i}b_j+(k-i-1)b_{i}],
\]
then
\[
\alpha^{*}(\beta)=\frac{B-\sum_{j=0}^{i-1}b_j}{k-i},
\]
for $i=1,2,\ldots,m-1$.
Recall that $b_i$ is defined in~\eqref{eq:bi}, we compute that
\[
\sum_{j=0}^{i-1}b_j=\sum_{j=0}^{i-1}\beta(d-m+j+1)=\beta i (2d-2m+i+1)/2,
\]
\begin{align*}
&\sum_{j=0}^{i}b_j+(k-i-1)b_i \\
=&\beta (i+1) (d-m+i/2+1)+(k-i-1)\beta(d-m+i+1)\\
=& \beta(2k(d+1)-2km+2ki-i^2-i)/2.
\end{align*}
Then we have the optimal trade-off in the theorem.
\end{proof}
\section{Proof of Theorem~\ref{thm:hybridcons}}
\label{pr:hybirdcons}
\begin{proof}
We view the values of $\mathbf{v}_1,\ldots,\mathbf{v}_m,\mathbf{y}_1,\ldots,\mathbf{y}_m$
and $\lambda_{i,j}$ as constants and the other entries of vectors and matrices as variables.
There are
\begin{align*}
\underbrace{\alpha n/r(\alpha n/r-m)(\alpha-1)^2}_{\mathbf{D}_{i,j}}+\underbrace{(\alpha-1)^2}_{\lambda_{i,j}^{'}}+\underbrace{\alpha mn/r(\alpha-1)}_{\mathbf{E}_j}
\end{align*}
variables, $\alpha mn/r(\alpha-1)^2$ equations and
\begin{align*}
\underbrace{(B-\alpha mn/r)m(\alpha-1)}_{\mathbf{F}_{i,j}}+\underbrace{(B-\alpha m n/r)(\alpha n/r-m)(\alpha-1)}_{\mathbf{C}_{i}}
\end{align*}
variables, $(B-\alpha mn/r)m(\alpha-1)$ equations
in~\eqref{eq:req1} and~\eqref{eq:req2}, respectively.
Note that $\alpha n/r \geq 2m$, we can view all the entries of $\mathbf{D}_{i,j}$,
$\mathbf{F}_2,\ldots,\mathbf{F}_\alpha$ and $\mathbf{c}_2,\ldots,\mathbf{\alpha}$
as constants and the other entries as free variables.
Then each entry of the matrix $\mathbf{M}_1$ in~\eqref{eq:req3} and $\mathbf{M}_2$
in~\eqref{eq:req5} can be interpreted as a polynomial with total degree 2 and 1, respectively.
For the repair condition. The multiplication of all the determinants of the
corresponding sub-matrices in~\eqref{eq:req3} and~\eqref{eq:req5} is a polynomial with total degree
$2\alpha n/r+\alpha t=\alpha(2n/r+t)$.

For the reconstruction condition. Each entry of the matrix in~\eqref{eq:all-encoding-matrices2}
is a polynomial with total degree 1. The multiplication of all the determinants can
be interpreted as a polynomial with total degree
\begin{align*}
\alpha\sum_{i=1}^{\min\{n-k,k\}}i\dbinom{k}{i}\dbinom{n-k}{i}.
\end{align*}
Therefore, the repair condition and MDS property condition are satisfied if the
field size is larger than~\eqref{field-size2} according to the Schwartz-Zippel Lemma.
\end{proof}

\vspace{-7mm}
\begin{IEEEbiographynophoto}{Hanxu Hou} received the B.Eng. degree in Information Security from Xidian University, Xi¡¯an, China, in 2010, and Ph.D. degrees in the Dept. of Information Engineering from The Chinese University of Hong Kong in 2015 and in the School of Electronic and Computer Engineering, Peking University. He is now an Assistant Professor with the School of Electrical Engineering \& Intelligentization, Dongguan University of Technology. His research interests include erasure coding and coding for distributed storage systems.
\end{IEEEbiographynophoto}

\begin{IEEEbiographynophoto}{Patrick P. C. Lee} received the B.Eng. degree (first class honors) in Information
Engineering from the Chinese University of Hong Kong in 2001, the M.Phil.
degree in Computer Science and Engineering from the Chinese University
of Hong Kong in 2003, and the Ph.D. degree in Computer Science from
Columbia University in 2008. He is now an Associate Professor of the
Department of Computer Science and Engineering at the Chinese University
of Hong Kong. His research interests are in various applied/systems topics
including storage systems, distributed systems and networks, operating systems, dependability, and security.
\end{IEEEbiographynophoto}

\begin{IEEEbiographynophoto}{Kenneth W. Shum} received his B.Eng. degree in the Department of Information Engineering from The Chinese University of Hong Kong in 1993, and MSc and Ph.D. degrees in Department of Electrical Engineering from University of Southern California in 1995 and 2000, respectively. He is now an Associate Professor with the School of Science and Engineering, The
Chinese University of Hong Kong (Shenzhen). His research interests include coding for distributed storage systems and sequence design for wireless networks.
\end{IEEEbiographynophoto}

\begin{IEEEbiographynophoto}{Yuchong Hu} received the B.S. degree in Computer Science and Technology from the School of the
Gifted Young, University of Science and Technology of China, Anhui, China, in 2005, and the Ph.D.
degree in Computer Science and Technology from the School of Computer Science, University of Science
and Technology of China, in 2010. He is currently an Associate Professor with the School of Computer Science and Technology, Huazhong University of Science and Technology. His research interests focus on improving the fault tolerance, repair and read/write performance of storage systems, which include cloud
storage systems and key/value stores.
\end{IEEEbiographynophoto}
\end{document}